\documentclass[12pt]{amsart}

\setcounter{tocdepth}{2}
\makeatletter
\def\l@subsection{\@tocline{2}{0pt}{2.5pc}{5pc}{}}
\def\l@subsubsection{\@tocline{2}{0pt}{5pc}{7.5pc}{}}
\makeatother

\usepackage{fullpage,amssymb}
\usepackage{mathrsfs}
\usepackage{color}
\usepackage{algorithmicx}
\usepackage{algpseudocode}
\usepackage{tikz-cd}
\usepackage{tikz}
\usetikzlibrary{matrix,decorations.pathreplacing}
\usepackage[T1]{fontenc}
\usepackage{amsbsy}
\usepackage{bm}

\usepackage{color}   
\usepackage{hyperref}
\hypersetup{
    linktoc=all,     
    linkcolor=black,  
}

\newtheorem{theorem}{Theorem}[section]
\newtheorem{corollary}[theorem]{Corollary} 
\newtheorem{lemma}[theorem]{Lemma}
\newtheorem{proposition}[theorem]{Proposition}

\theoremstyle{definition}
\newtheorem{definition}[theorem]{Definition}
\newtheorem{remark}[theorem]{Remark}
\newtheorem{example}[theorem]{Example}

\newcommand{\M} {\operatorname{Mat}}
\newcommand{\A} {\mathbb{A}}
\newcommand{\C} {\mathbb{C}}
\newcommand{\Z} {\mathbb{Z}}
\newcommand{\F} {\mathbb{F}}
\newcommand{\N} {\mathbb{N}}
\newcommand{\Q} {\mathbb{Q}}
\newcommand{\R} {\mathbb{R}}
\newcommand{\K}{\mathbb{K}}

\newcommand{\rk} {{\rm rk}}
\newcommand{\Ten} {{\rm Ten}}
\newcommand{\trk} {{\rm trk}}
\newcommand{\brk} {{\rm brk}}
\newcommand{\wrk} {{\rm wrk}}
\newcommand{\Pot} {{\rm Pot}}
\newcommand{\Bpot} {{\rm Bpot}}
\newcommand{\Hom} {{\rm Hom}}
\newcommand{\spa}{{\rm span}}
\newcommand{\MSpec} {{\rm MSpec}}
\newcommand{\Ker} {{\rm Ker}}
\newcommand{\im} {{\rm Im}}
\newcommand{\smdeg}{\text{sm-deg}}

\newcommand{\OO} {\mathcal{O}}
\newcommand{\I} {\mathcal{I}}
\newcommand{\SP} {\mathcal{SP}}
\newcommand{\Pp} {\mathcal{OP}}
\newcommand{\cC} {\mathcal{C}}

\newcommand{\cJ}{\mathcal{J}}

\newcommand{\veps}{\varepsilon}
\newcommand{\tleq}{\unlhd}
\newcommand{\st}{\text{ s.t. }}

\newcommand{\bv}{\textbf{v}}
\newcommand{\bz}{\textbf{z}}
\newcommand{\by}{\textbf{y}}

\newcommand{\ba}{\textbf{a}}
\newcommand{\bb}{\textbf{b}}
\newcommand{\bc}{\textbf{c}}
\newcommand{\be}{\textbf{e}}
\newcommand{\bff}{\textbf{f}}

\newcommand{\bo}{\textbf{0}}

\newcommand{\bp}{\textbf{p}}
\newcommand{\bq}{\textbf{q}}
\newcommand{\br}{\textbf{r}}
\newcommand{\bn} {\textbf{n}}
\newcommand{\bd} {\textbf{d}}

\newcommand{\balpha} {\bm{\alpha}}
\newcommand{\bbeta} {\bm{\beta}}
\newcommand{\bdelta} {\bm{\delta}}
\newcommand{\supp} {{\rm supp}}
\newcommand{\bell}{\ensuremath{\boldsymbol\ell}}

\title{More barriers for rank methods, via a ``numeric to symbolic'' transfer}

\author{Ankit Garg}
\address{Microsoft Research India, Bangalore}
\email{garga@microsoft.com}

\author{Visu Makam} \thanks{Visu's research was supported by NSF grant No. DMS -1638352 and NSF grant No. CCF-1412958.}
\address{School of Mathematics, Institute for Advanced Study, Princeton}
\email{visu@ias.edu}

\author{Rafael Oliveira} 
\address{University of Toronto and Simons Institute for the Theory of Computing}
\email{rafael@cs.toronto.edu}

\author{Avi Wigderson} \thanks{
Avi's research was supported by NST grant No. CCF-1412958.}
\address{School of Mathematics, Institute for Advanced Study, Princeton}
\email{avi@ias.edu}


\begin{document}

\begin{abstract}
We  prove new barrier results in arithmetic complexity theory, showing severe limitations of natural {\em lifting} (aka  {\em escalation}) techniques. For example, we prove that even optimal rank lower bounds on $k$-tensors cannot yield non-trivial lower bounds on the rank of $d$-tensors, for any constant $d>k$. This significantly
extends recent barrier results on the limits of {\em (matrix) rank methods} by~\cite{EGOW18}, which handles the (very important) case $k=2$.

Our generalization requires the development of new technical tools and results in algebraic geometry, which are interesting in their own right and possibly applicable elsewhere. The basic issue they probe is the relation between {\em numeric} and {\em symbolic} rank of tensors, essential in the proofs of previous and current barriers. Our main technical result implies that for every symbolic $k$-tensor (namely one whose entries are polynomials in some set of variables), if the tensor rank is small for every evaluation of the variables, then it is small symbolically. This statement is obvious for $k=2$. 

To prove an analogous statement for $k>2$ we develop a ``numeric to symbolic'' transfer of algebraic 
relations to algebraic
functions, somewhat in the spirit of the implicit function theorem.  It applies in the general setting of inclusion of images of polynomial maps, in the form appearing in Raz's {\em elusive functions} approach to proving {\rm VP $\neq$ VNP}. We give a toy application showing how our transfer theorem may be useful in pursuing this approach to prove arithmetic complexity lower bounds.
\end{abstract}

\maketitle

\tableofcontents

\section{Introduction}\label{sec:intro}

One of the major goals of complexity theory is to prove lower bounds for various models of 
computation. The theory often proceeds in buckets of three steps. The first is to come up with a 
collection of techniques. The second is to be frustrated at the fact that the collection is not 
powerful enough to prove the lower bounds we want. The final step is to prove a `barrier' result on 
the collection of techniques, giving a formal rigorous explanation as to why these techniques do not 
suffice. Then, of course, one searches for new techniques avoiding known barriers, and the process is repeated until (hopefully!) the desired lower bounds are attained.

One common set of techniques, which is ancient but whose prominence and use increases with recent successes (and realization that past methods fit this mold) are {\em lifting} (or {\em escalation}) techniques. Here one aims to derive a lower bound for some strong model, via a reduction to proving a related lower bound on a weaker model (another variant is deriving a strong lower bound from a weak one for the same model). This occurs across computational complexity, in Boolean circuit complexity (e.g.~\cite{Razb90,AlKo10}, arithmetic circuit complexity (e.g.~\cite{NiWi96,HWY10,GKKS13}), proof complexity (e.g. \cite{BPR97,GKRS18}), communication complexity (e.g.~\cite{RaMc99,GPW15,GPW17}) and other computational frameworks (where we only referenced few of many examples).

Here we work in the framework of arithmetic complexity. By far the main technique used in proving lower bounds are the so-called {\em rank methods} (which we will presently call  {\em matrix-rank methods}), which reduce proving lower bounds on numerous arithmetic models and complexity measures to the computations of matrix rank.

The two main complexity measures that we will be studying in this paper are {\em tensor rank} and 
{\em Waring rank} (defined in the subsequent section). For a long time, matrix-rank methods were unable to prove any lower bounds 
that were significantly better than the trivial ones (despite independent work in complexity theory and in algebraic geometry). A sweeping barrier result for this collection was proved 
in~\cite{EGOW18}, explaining why matrix-rank methods will {\em never} deliver better results on these measures! 

In this paper, we focus on extending these barrier results to greater generality against stronger techniques. 
The usual matrix rank is a special case of tensor rank, when we view a matrix rank as a degree-2 tensor (tensors of degree $k$ will be termed $k$-tensors).\footnote{Matrices are naturally equivalent to bi-linear forms, which are degree-2 polynomials. Similarly, tensors  naturally equivalent $d$-linear forms, which are degree-$d$ polynomials. This notation is consistent with Waring rank of homogeneous degree-$d$ polynomials.} Generalized rank methods can be thought of as {\em lifting} tensor rank lower bounds via linear maps to $k$-tensors (polynomials) of degree $k>2$.

Our main result is a barrier to these potentially stronger methods. More precisely, for $d>k>2$, we prove barriers to lifting lower bounds on tensor (Waring) rank for (small) degree $k$-tensors (polynomials) to respective lower bounds for (larger) degree $d$-tensors (polynomials). Indeed, as with matrix-rank methods, even optimal lower bounds on the rank of degree $k$ tensors (polynomials) cannot yield any nontrivial lower bounds for any $d>k$ for any fixed $d$, (up to constant factors). 

Our generalization of~\cite{EGOW18} to lifting from $k>2$ is far from obvious. To overcome 
the difficulties, we will need an influx of new ideas and some algebro-geometric tools. We point out that the 
technical results we prove for this generalization are very natural and general statements, in the spirit of the implicit function theorem, and potentially applicable in other contexts  in mathematics and complexity theory. We also note that while the barriers of~\cite{EGOW18} are valid in fields of arbitrary characteristic, our more general results in this paper only hold for characteristic zero.  

We make two comments of an informal nature, which we believe require further exploration. The first is about the power of the ``weak'' model we are trying to lift. Recall that for $k > 2$, $k$-tensor rank is ${\rm NP}$-hard to compute~\cite{Has90}. So, unlike the barrier result of~\cite{EGOW18}, where the ``simpler model'' is a matrix (namely 2-tensor), and its  rank (which is the lower bound to be lifted) is  computationally easy,  here the lower bound that we are we are assuming, and trying to lift, is itself computationally difficult. Despite that, our new barrier result says even such (possibly hard to prove) lower bounds cannot be lifted to any non-trivial lower bounds in higher degree tensors (or polynomials).  A related second point is that optimal lower bounds on $k$-tensors (for $k$ superconstant and $\leq \log(n)/\log(\log(n))$ where $n$ is the local dimension) {\em can} be lifted to lower bounds on some stronger arithmetic models; Raz~\cite{Raz-elusive} shows how they can imply super-polynomial formula lower bounds! We find that better understanding and reconciling these results is needed.

We stress that the techniques in this paper are very general, and can be applied to get a barrier to lifting result between any two 
sub-additive complexity measures. However, it is not always so easy to predict when the obtained barrier would be non-trivial.

We now proceed to make precise definitions and state the main results. Throughout this paper, our ground field (denoted $\F$) will be an algebraically closed field of characteristic zero. We will restate this assumption again whenever it plays an important role. 

\subsection{Various notions of rank}
 Let $\M_{p,q}$ 
denote the linear space of $p \times q$ matrices with entries in $\F$. The {\em rank} of a matrix $M \in \M_{p,q}$ (over $\F$), has many equivalent definitions. For example, it equals the the dimension of the row span of $M$, as well as the dimension of the  column span of $M$, as well as the size of the  largest non-vanishing minor of $M$. 

The definition of matrix rank we prefer will clarify why it is a ``sub-additive 
complexity measure''. First, note that any rank $1$ matrix of size $p \times q$ is of the form $\ba\bb^t$ 
for some $\ba \in \F^p$ and $\bb \in \F^q$.  Let $S \subseteq \M_{p,q}$ denote the subset of rank $1$ 
matrices; we will call these {\em simple} matrices, and this notion will be used throughout. Also note that the set $S$ of ``simples'' is a spanning set of the linear space $\M_{p,q}$.  The rank of any matrix $M \in \M_{p,q}$ is defined 
as the smallest integer $r$ such that $M = A_1 + A_2 + \dots + A_r$ for some 
$A_1,A_2,\dots, A_r \in S$. This definition of matrix rank is 
equivalent to any of the definitions above. What is nice about it is that it motivates the following 
vast generalization. 

\begin{definition} [$S$-rank]
Let $V$ be a vector space, and $S \subseteq V$ be a {\em spanning} subset. For $v \in V$, we define its $S$-rank $\rk_S(v)$ 
to be the smallest integer $r$ such that $v =  s_1 + s_2 + \dots + s_r$ for some 
$s_1, s_2,\dots, s_r \in S$.
\end{definition}

We want to think of $S$ as a set of {\it simple} elements, and $\rk_S(v)$ as the sub-additive complexity of $v$ with respect to this set of simples $S$. Both tensor rank and Waring rank will be special cases of $S$-rank for particular choices of $S$ in vector spaces $V$.

We define
$$
\Ten(n,d) := \underbrace{\F^n \otimes \F^n \otimes \dots \otimes \F^n}_{d},
$$
the space of degree-$d$ tensors with (local) dimension\footnote{One can easily extend the definition to tensors with different local dimensions in each coordinate, as when moving from square to rectangular matrices.} $n$. A tensor which is a product of linear forms, namely of the form 
$\bv_1 \otimes \bv_2 \otimes \dots \otimes \bv_d \in \Ten(n,d)$, is called a {\it simple tensor} or a  
{\it rank $1$ tensor}.

\begin{definition} [Tensor rank]
Let $S := \{\bv_1 \otimes \bv_2 \otimes \dots \otimes \bv_d\ |\ \bv_i \in \F^n \ \forall i\} 
\subseteq \Ten(n,d)$. For a tensor $T \in \Ten(n,d)$, we define its tensor rank 
$\trk(T) \triangleq \rk_S(T)$.
\end{definition}


\begin{example}\label{2ten-mat}
There is a natural identification $\F^p \otimes \F^q = \M_{p,q}$ as follows. Let 
$\{\be_i\}_{1 \leq i \leq p}$ and $\{\bff_j\}_{1 \leq j \leq q}$ denote the standard basis for $\F^p$ and 
$\F^q$ respectively. Then $\{\be_i \otimes \bff_j\}$ is a basis for $\F^p \otimes \F^q$. We identify 
$\be_i \otimes \bff_j$ with the elementary matrix $E_{i,j}$ that has an $1$ in the $(i,j)^{th}$ spot and 
$0$'s everywhere else. 

A concise description of the isomorphism is given by 
$\sum_i \ba_i \otimes \bb_i \mapsto \sum_i \ba_i\bb_i^t$, 
where $\ba_i \in \F^p$ and $\bb_i \in \F^q$. This elucidates the fact that under this identification, 
tensor rank goes to matrix rank.
\end{example}

Let $P(n) := \F[x_1,\dots,x_n]$ denote the polynomial ring in $n$ variables. This has a natural grading 
given by (total) degree. In other words, we have $P(n) = \oplus_{d = 0}^{\infty} P(n,d)$, where $P(n,d)$ 
denotes the homogeneous polynomials of degree $d$. Waring rank is $S$-rank, where the set of simples 
$S$ will be the subset consisting of $d^{th}$ powers of linear forms.

\begin{definition} [Waring rank]
Let $S := \{ \ell^d\ |\  \ell \in P(n,1)\} \subseteq P(n,d)$. For a degree $d$ homogeneous polynomial 
$f \in P(n,d)$, we define its Waring rank $\wrk(f) \triangleq \rk_S(f)$. 
\end{definition}

\begin{example} \label{glynn-formula}
Suppose $d < n$, and consider the monomial $x_1\cdot x_2 \cdots x_d \in P(n,d)$. We can write this as a sum of $2^{d-1}$ powers of linear forms (see \cite{Glynn}):
$$
x_1\cdot x_2 \cdots x_d = \frac{1}{2^{d-1}} \sum_{(\delta_2,\dots,\delta_d) \in \{1,-1\}^{d-1}} (-1)^{\delta_2 + \delta_3 + \dots + \delta_d} (x_1 + \delta_2 x_2 +  \dots + \delta_d x_d)^d.
$$
This means in particular that $\wrk(x_1\cdot x_2 \cdots x_d) \leq 2^{d-1}$. But in fact (see \cite{CCG11,BBT12}), it is an equality! 
\end{example}


\subsection{Sub-additive measures}

A natural approach to prove lower bounds on rank is to use sub-additive measures. For this section, 
let $S$ be a spanning subset of a vector space $V$. 

\begin{definition} [Sub-additive measure]
A sub-additive measure for $S$-rank is a function $\mu: V \rightarrow \R_{\geq 0}$ such that 
$\mu(v+w) \leq \mu(v) + \mu(w)$ for all $v,w \in V$. For any subset $T \subseteq V$, we define 
$\mu(T) = \max\{\mu(v) \ | \ v \in T\}$. 
\end{definition}
The simple way in which this is used to prove lower bounds is the inequality $\rk_S(v) \geq \mu(v)/\mu(S)$.  

Observe that $\rk_S$ is itself a sub-additive measure, but it is difficult to compute. So, one would 
like to use a different sub-additive measure which is simpler to compute. Indeed, the last two sentences of course apply to many other sub-additive complexity measures, e.g. various forms of circuit and proof complexity. Many important  lower bounds in 
arithmetic complexity are obtained in this fashion, such as the partial derivatives method introduced in 
Computer Science by~\cite{N91, NW96}, its generalization, the shifted partial derivatives method -
introduced by~\cite{K12} and developed further in~\cite{GKKS14, KS17}.


Every sub-additive measure will give some lower bounds, but the important question is whether these 
will be strong enough. From our observations above, the best possible lower bound that a 
sub-additive measure $\mu$ can give on any element $v\in V$ (explicit or non-explicit) is $\mu(V)/\mu(S)$. We will define this barrier as the potency 
of the sub-additive measure $\mu$.


\begin{definition} [Potency]
For $V,S$ as above, and any  sub-additive measure $\mu: V \rightarrow \R_{> 0}$, define its \emph{potency} as
$$
\Pot(\mu) \triangleq \mu(V)/\mu(S).
$$
\end{definition}

In short: \emph{strong lower bounds require a potent 
sub-additive measure.} Typically in existing lower bounds, such measures are (intuitively or computationally) easy to compute (like matrix rank).

\subsection{Matrix-rank methods}\label{Matrix-rank-methods}

Due to the focus of this paper, we deviate in notation from our precursor barrier paper~\cite{EGOW18} and from many arithmetic lower bound papers, calling {\em matrix-rank methods} what they all call {\em rank methods}. This highlights the fact that in all these previous papers, the only rank methods used were based on matrix rank, whereas here we extend this to study the power of using rank of higher degree tensors and polynomials to prove new lower bounds.
 
Matrix-rank methods are a large collection of sub-additive measures that are simple to compute. For tensor rank and Waring rank, numerous known lower bounds fall under the purview of matrix-rank methods, see for example \cite{LO,Landsberg-nt, DM, DM2, IK99, Kanev, GL17, LM08, LT10, Farnsworth}.

\begin{definition} [Matrix-rank method]
Let $V$ be a vector space and let $S \subseteq V$ be a spanning subset. Any {\em linear} map 
$\phi: V \rightarrow \M_{p,q}$ is called a matrix-rank method. The complexity measure
associated with $\phi$ is given by $\mu_\phi : V \rightarrow \R_{\geq 0}$ where
$\mu_\phi(v) := \rk(\phi(v)).$
\end{definition}

From the definition above and the properties of matrix rank, one sees immediately that 
$\mu_{\phi}$ is a sub-additive measure.
If we let $\mu_\phi(S) = \max\{\rk(\phi(s)) \mid s \in S\}$ as above, then 
for all $v \in V$ we can get a lower bound: 
$$
\rk_S(v) \geq \frac{\rk(\phi(v))}{\mu_\phi(S)}.
$$

We will often obfuscate the matrix-rank method $\phi$ with the corresponding sub-additive measure 
$\mu_\phi$. In particular, we will call $\Pot(\phi) := \Pot(\mu_\phi)$ the potency of the 
matrix-rank method $\phi$.

\begin{example}[Trivial matrix-rank method]\label{flattening}
We discuss the most basic, naive example of a matrix-rank method that can be used to prove lower bounds 
for tensor rank. By grouping the different tensor factors into two groups (sometimes called {\em flattening}, and can be pictured as such), one can view a tensor in 
$$
\Ten(n,d) = (\underbrace{\F^n \otimes \F^n \otimes \dots \otimes \F^n}_{p}) 
\otimes (\underbrace{\F^n \otimes \F^n \otimes \dots \otimes \F^n}_{q})
$$
as a tensor in $\F^{n^p} \otimes \F^{n^q}$. The latter can be interpreted as an $n^p \times n^q$ matrix as in Example~\ref{2ten-mat}. This gives a linear map $\phi: \Ten(n, d) \rightarrow \M_{n^p, n^q}$ i.e., a matrix-rank method. Let $S$ be the set of simple (or rank $1$) tensors. We observe that $\mu_\phi(S) = 1$, and $\mu_\phi(\Ten(n,d)) = \min\{n^p,n^q\}$, the largest possible rank of an $n^p \times n^q$ matrix. Thus $\Pot(\phi) = \min\{n^p,n^q\}$. So, the potency of these `obvious' matrix-rank methods is at most $n^{\lfloor d/2 \rfloor}$, which is attained when we take $p =\lfloor d/2 \rfloor$. 
\end{example}
 
By a simple dimension count, one can show that most tensors in $\Ten(n,d)$ have tensor rank at least $\frac{n^{d-1}}{d}$. This is much larger than the potency of the obvious rank methods in the previous example (for  fixed $d$ and large $n$). A line of research that was pursued for over a decade with little success was to find more potent matrix-rank methods. While such methods with larger potency have been found (often quite sophisticated with algebraic-geometric ideas), the improvement they yield was very modest -- only by small constant factors. For example, for $3$-tensors, the best known improvement is only by a constant factor of $2$, see \cite{Landsberg-nt, DM,DM2}\footnote{One can obtain a larger constant factor of $3$ using techniques that do not fall under rank methods, see \cite{AFT11}.}.

 Eventually this state of affairs was explained by the barrier result of~\cite{EGOW18}; {\em no} matrix-rank method can do much better than the naive flattening.

\begin{theorem} [\cite{EGOW18}] \label{barr-egow-ten}
For any matrix-rank method $\phi: \Ten(n,d) \rightarrow \M_{k,l}$, its potency 
$$
\Pot(\phi) \leq 2^d n^{\lfloor d/2 \rfloor}.
$$
\end{theorem}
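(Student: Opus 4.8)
The plan is to turn the statement about potency into a statement about the rank of a single symbolic matrix, and then to exploit the multilinearity of $\phi$ to control the image of the simple tensors.

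\emph{Step 1: reduction to symbolic rank.} Write $r := \mu_\phi(S) = \max_{s \in S}\rk\phi(s)$ (assume $\phi \neq 0$, so $r \geq 1$) and $R := \mu_\phi(\Ten(n,d))$, so that $\Pot(\phi) = R/r$ and we must prove $R \leq 2^d n^{\lfloor d/2\rfloor} r$. Since each locus $\{M : \rk M \leq t\}$ is Zariski closed, $\rk$ is lower semicontinuous, and hence: (i) $R$ equals $\rk\phi$ at a generic tensor, equivalently the rank over $\F(\bz)$ of the symbolic matrix $\phi(\mathcal T)$ with $\mathcal T := \sum_{\vec i} z_{\vec i}\,\be_{\vec i}$ (because the image $U := \phi(\Ten(n,d)) = \spa\{\phi(\be_{\vec i})\}$, as $S$ spans $\Ten(n,d)$); and (ii) the symbolic matrix $\Phi(\bv) := \phi(\bv_1 \otimes \dots \otimes \bv_d)$ has rank $\leq r$ over $\F(\bv_1,\dots,\bv_d)$, since its $(r+1)\times(r+1)$ minors are polynomials in $\bv$ vanishing at every evaluation, hence identically. (This last point is the ``numeric to symbolic'' transfer; for the $2$-tensor target $\M_{k,l}$ it is just semicontinuity.) Note also that $\Phi$ is multihomogeneous of multidegree $(1,\dots,1)$ in the $d$ blocks $\bv_1,\dots,\bv_d$, because $\phi$ is linear and $\bv_1\otimes\cdots\otimes\bv_d$ has that multidegree.

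\emph{Step 2: the multilinear/partition structure, first for $r=1$.} Over the UFD $\F[\bv]$ a rank-$\leq 1$ polynomial matrix factors (uniquely up to units) as an outer product, so $\Phi(\bv) = \bc(\bv)\,\bd(\bv)^t$ with $\bc \in \F[\bv]^k$, $\bd \in \F[\bv]^l$, both nonzero. Comparing multidegrees entry by entry and using that $\Phi$ is multihomogeneous of multidegree $(1,\dots,1)$ forces $\bc$ and $\bd$ to be multihomogeneous themselves, with $\bc$ multilinear in the blocks indexed by some set $P \subseteq [d]$ and independent of the rest, and $\bd$ multilinear in the blocks indexed by $P^c$. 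Consequently the linear span $C$ of $\im\bc$ satisfies $\dim C \leq n^{|P|}$ and likewise $\dim D \leq n^{|P^c|}$ for $D := \spa\,\im\bd$. Hence $\phi(S) \subseteq C \otimes D$, so $U = \spa\,\phi(S) \subseteq C \otimes D \subseteq \M_{k,l}$, and therefore $R = \max_{M \in U}\rk M \leq \min(\dim C, \dim D) \leq n^{\lfloor d/2\rfloor}$: the case $r=1$, even without the factor $2^d$.

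\emph{Step 3: general $r$, and the main obstacle.} For $r \geq 2$ there is in general no polynomial outer-product factorization of $\Phi$ (bounded-rank matrix families need not admit polynomial rank factorizations), so instead one works with the column space of $\Phi(\bv)$, an at-most-$r$-dimensional subspace of $\F(\bv)^k$; since $\Phi$ is multihomogeneous this column space is invariant under the torus rescaling the $d$ blocks, hence defines a rational map $\prod_{j=1}^d \mathbb{P}^{n-1} \dashrightarrow \mathrm{Gr}(\leq r,\,k)$. Running the degree analysis of Step 2 along this map (via Plücker coordinates) and doing the symmetric analysis with row spaces, one shows $U$ is contained in a sum of subspaces indexed by the partitions $[d] = P \sqcup P^c$, the part indexed by $P$ having maximum rank at most $r\, n^{\min(|P|,|P^c|)}$; summing over the at most $2^d$ partitions and using subadditivity of rank yields $R \leq \sum_P r\, n^{\min(|P|,|P^c|)} \leq 2^d n^{\lfloor d/2\rfloor} r$. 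The hard step is precisely this passage for $r > 1$: extracting from ``$\rk\Phi(\bv) \leq r$ at every evaluation'' a structural decomposition of $\Phi$ that respects the block multigrading, which amounts to controlling the base locus and common factors of the Plücker coordinates so the relevant multidegrees stay bounded. Here the target being a matrix makes the transfer essentially automatic; the genuine work appears when the target is a $k$-tensor with $k > 2$, which is what the rest of the paper develops.
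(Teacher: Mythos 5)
Your Steps 1 and 2 are fine, and Step 2 already extracts the correct combinatorial structure (a partition $[d] = P \sqcup P^c$ with $\bc$ multilinear in the $P$-blocks and $\bd$ in the $P^c$-blocks) in the $r=1$ case. The genuine gap is Step 3, and you say so yourself: for $r \geq 2$ you do not carry out the step that matters. The Pl\"ucker/Grassmannian route you sketch is not only unverified but genuinely problematic: after clearing denominators the Pl\"ucker coordinates of the column space are $r \times r$ minors of multidegree up to $(r,\dots,r)$, and there is no easy control on the multidegree that remains after removing common factors; moreover, information about the column and row spaces alone does not give a decomposition of the matrix $\Phi$ as a sum of rank-one factors with the multidegree structure you need, so the claimed containment of $U$ in a sum of partition-indexed subspaces of rank $\leq r\,n^{\min(|P|,|P^c|)}$ is asserted, not derived.

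The fix used by EGOW18 (and in this paper's proof sketch) is simpler and avoids polynomial factorizations entirely. You have already established $\rk_{\K}\Phi(\bz) \leq r$ over the function field $\K = \F(\bz)$; now actually use the factorization available over that \emph{field}, namely $\Phi(\bz) = \sum_{i=1}^{r} \bp_i(\bz) \otimes \bq_i(\bz)$ with $\bp_i \in \K^k$, $\bq_i \in \K^l$, and accept that the entries are rational functions. Shift $\bz \mapsto \bz + \bc$ for a generic $\bc$ so every $\bp_i,\bq_i$ admits a power series around $\bo$, and expand. Since $L(\bz) := \Phi(\bz)$ is set-multihomogeneous of multidegree $(1,\dots,1)$, you have $L(\bz) = L(\bz+\bc)_{(1,\dots,1)}$, so among the infinitely many pairs of coefficient vectors $\bp_{i,\be},\,\bq_{i,\bff}$ only those with $\smdeg(\be) + \smdeg(\bff) = (1,\dots,1)$ contribute; both $\smdeg(\be)$ and $\smdeg(\bff)$ are then $\{0,1\}^d$-vectors, and the surviving pairs are labelled by partitions $[d] = P \sqcup P^c$. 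Defining $C^i_P := \spa(\bp_{i,\be} : \smdeg(\be) = \bdelta_P) \subseteq \F^k$ and $D^i_{P^c}$ analogously, one has $\dim C^i_P \leq n^{|P|}$, $\dim D^i_{P^c} \leq n^{|P^c|}$, and evaluating the truncated expansion shows $\phi(S) \subseteq \sum_{i=1}^{r}\sum_{P} C^i_P \otimes D^i_{P^c}$; linearity of $\phi$ extends this to $\phi(V)$, each summand has rank $\leq \min(n^{|P|},n^{|P^c|}) \leq n^{\lfloor d/2\rfloor}$, and there are $r\cdot 2^d$ summands, giving $\mu_\phi(V) \leq r\, 2^d\, n^{\lfloor d/2\rfloor}$ and hence $\Pot(\phi) \leq 2^d n^{\lfloor d/2\rfloor}$.
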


In particular, for $d$ constant, odd integer, matrix-rank methods can only prove a lower bound of the form $\Omega(n^{(d-1)/2})$, while most tensors have a quadratically larger tensor rank $\Omega(n^{d-1})$. 

A similar barrier result for proving lower bounds on Waring rank by matrix-rank methods was also proved in \cite{EGOW18}.
\begin{theorem} [\cite{EGOW18}] \label{barr-egow-war}
For any matrix-rank method $\phi: P(n,d) \rightarrow \M_{k,l}$, its potency
$$
\Pot(\phi) \leq Y_{n,d} + Z_{n,d}
$$
where $Y_{n,d} = {n + \lfloor d/2 \rfloor \choose n}$ is the number of monomials of degree 
$\leq \lfloor d/2 \rfloor$ in $n$ variables, and $Z_{n,d}$ is the number of monomials of degree $\leq d - (\lfloor\frac{d}{2}\rfloor +1)$ in $n$ variables.
\end{theorem}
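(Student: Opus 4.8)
The plan is to bound $\mu_\phi(P(n,d))$ in terms of $m := \mu_\phi(S)$, where $S = \{\ell^d : \ell \in P(n,1)\}$; since $\Pot(\phi) = \mu_\phi(P(n,d))/m$, it suffices to show $\mu_\phi(P(n,d)) \le (Y_{n,d} + Z_{n,d})\,m$. The starting observation is that, because $S$ spans $P(n,d)$, every $f \in P(n,d)$ is a \emph{finite} sum $f = \sum_j \ell_j^d$ of $d$-th powers of linear forms — with no control on the number of summands, and, crucially, none needed. Hence $\phi(f) = \sum_j \phi(\ell_j^d)$, so the column space (resp. row space) of $\phi(f)$ is contained in the $\F$-linear span of all the column spaces (resp. row spaces) of the matrices $\phi(\ell^d)$, $\ell \in P(n,1)$. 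The whole problem thus reduces to understanding this single ``universal'' family of low-rank matrices.

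First I would introduce the universal linear form $\ell_\ba = a_1 x_1 + \dots + a_n x_n$ in auxiliary variables $\ba = (a_1,\dots,a_n)$ and consider the matrix $\Phi(\ba) := \phi(\ell_\ba^d) \in \M_{k,l}(\F[\ba])$, whose entries are homogeneous of degree $d$ in $\ba$. Since $\rk\Phi(\ba_0) \le m$ for \emph{every} numeric $\ba_0 \in \F^n$ and $\F$ is infinite, no $(m+1)\times(m+1)$ minor of $\Phi(\ba)$ can be a nonzero polynomial, so the \emph{symbolic} rank of $\Phi$ over the field $\F(\ba)$ is also at most $m$. This is exactly the easy, $k=2$ instance of the numeric-to-symbolic phenomenon underlying the paper: for matrices, low rank at every evaluation is equivalent to low rank symbolically. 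Consequently $\Phi$ admits a rank-$\le m$ factorization $\Phi(\ba) = \sum_{i=1}^{m} \lambda_i(\ba)\,\mu_i(\ba)^t$ over $\F(\ba)$ (some terms possibly zero), which, after rescaling each pair and clearing denominators, can be taken homogeneous with $\deg\lambda_i + \deg\mu_i = d$.

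The combinatorial heart of the proof is the split $d = \lfloor d/2\rfloor + \lceil d/2\rceil$: for each $i$, either $\deg\lambda_i \le \lfloor d/2\rfloor$ or $\deg\mu_i \le \lceil d/2\rceil - 1$, and these alternatives are exhaustive because the two degrees sum to $d$. Grouping the terms accordingly, I would write $\Phi = \Phi_1 + \Phi_2$ where the columns of $\Phi_1$ lie in the $\F[\ba]$-module generated by $\le m$ homogeneous vectors of degree $\le \lfloor d/2\rfloor$, and the rows of $\Phi_2$ lie in the module generated by $\le m$ homogeneous vectors of degree $\le \lceil d/2\rceil - 1$. Evaluating at the $\ell_j$ and summing, the column space of $\sum_j \Phi_1(\ell_j)$ then lies in the $\F$-span of the values of those $\le m$ vector polynomials of degree $\le \lfloor d/2\rfloor$, which has dimension at most $m\cdot\#\{\text{monomials of degree} \le \lfloor d/2\rfloor \text{ in } n \text{ variables}\} = m\,Y_{n,d}$; symmetrically the row space of $\sum_j \Phi_2(\ell_j)$ has dimension at most $m\,Z_{n,d}$. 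Since $\phi(f) = \sum_j \Phi_1(\ell_j) + \sum_j \Phi_2(\ell_j)$, subadditivity of matrix rank gives $\rk\phi(f) \le m Y_{n,d} + m Z_{n,d}$ uniformly in $f$, which is the claim.

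The main obstacle is the step of upgrading the abstract symbolic-rank bound to a factorization with \emph{polynomial} (not merely rational) factors whose degrees split as $\lfloor d/2\rfloor$ versus $\lceil d/2\rceil$ — equivalently, of controlling the degrees of a set of $\le m$ homogeneous generators of the column module of $\Phi$ over the \emph{multivariate} ring $\F[\ba]$. For $n=1$ this is just the Smith normal form, but for $n \ge 2$ the column module need not be free, so one must work with homogeneous generators of minimal degree (or restrict to a generic point and argue locally); this is also where the slightly lossy second term $Z_{n,d}$ enters, rather than a single clean $Y_{n,d}$. I note that the numeric-to-symbolic step only uses that $\F$ is infinite, so that part of the argument is already characteristic-free.
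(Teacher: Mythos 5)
You have the right high-level strategy, and it matches the paper's almost step for step: compose $\phi$ with the parametrization of powers of linear forms, pass to symbolic rank over the function field via vanishing of minors, split the degree of each product term into $\lfloor d/2\rfloor$ versus $d-\lfloor d/2\rfloor-1$, and bound the dimensions of the spaces spanned by the resulting coefficient vectors. The gap is exactly the step you flag yourself — the claim that the rank-$\le m$ decomposition over $\F(\ba)$ ``can be taken homogeneous with $\deg\lambda_i+\deg\mu_i=d$, after rescaling each pair and clearing denominators.'' This claim is false in general. Rescaling $\lambda_i\mapsto c_i\lambda_i$, $\mu_i\mapsto c_i^{-1}\mu_i$ cannot simultaneously clear the denominators of both factors, and the column module of $\Phi$ need not be contained in any $m$-generated free submodule. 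A concrete obstruction: take $d=1$, $n=3$, $\phi:P(3,1)\to\M_{3,3}$ sending $\ell_{(a,b,c)}$ to the antisymmetric (Koszul) matrix $\left(\begin{smallmatrix}0&c&-b\\-c&0&a\\b&-a&0\end{smallmatrix}\right)$, which has generic rank $m=2$. Any factorization $\Phi=\lambda_1\mu_1^T+\lambda_2\mu_2^T$ with $\lambda_i,\mu_i$ homogeneous polynomial vectors and $\deg\lambda_i+\deg\mu_i=1$ forces one side constant, and one checks (by considering the $\F$-linear independence of the three columns, and the fact that no nonzero constant vector lies in the hyperplane $ax_1+bx_2+cx_3=0$ identically) that no such decomposition exists.

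The paper sidesteps this by not asking for polynomial factors at all. It keeps the rational decomposition $L(\bz)=\sum_{i=1}^a\bp_i(\bz)\otimes\bq_i(\bz)$, shifts to a generic center $\bc$ so every entry of $\bp_i,\bq_i$ is regular at $\bo$, expands as power series around $\bo$, and then observes that $L(\bz+\bc)$ is a matrix of polynomials of degree $\le d$ — so the bilinear pairing of the two power series must agree with its truncation to terms $\bp_{i,\be}\bz^{\be}\otimes\bq_{i,\bff}\bz^{\bff}$ with $\deg\be+\deg\bff\le d$. The degree split is then applied to the exponents $\be,\bff$ in this truncated monomial sum, not to the (nonexistent) polynomial factors, and the coefficient vectors $\bp_{i,\be}$, $\bq_{i,\bff}$ span the subspaces $U_1,U_2$ of Lemma~\ref{basic}. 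Your parenthetical ``restrict to a generic point and argue locally'' is precisely this move, so you are one step away; the thing to realize is that you only need the coefficient vectors of the local power-series expansion, not a global polynomial factorization. The rest of your counting argument is correct and identical to the paper's.
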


Again, this result matches the ``trivial'' lower bounds on Waring rank, and is quadratically away from the Waring rank of most polynomials in $P(n,d)$.

\subsection{Generalized rank methods and statements of main results}
The purpose of this paper is to investigate the potency of a larger class of sub-additive measures, and prove barriers for these. We consider two types of generalized rank methods. Let $d>k\geq2$. Matrix-rank methods lift degree-2 lower bounds to degree $d$ lower bounds (for tensors and polynomials). Now we lift degree-$k$ tensor and  Waring rank lower bounds to degree $d$ ones for $d > k$. Again, it is best to  think of $d,k$ as constants (although our results are for all values), and $n$ going to infinity as the main complexity parameter. Repeating a comment made earlier, note that now the assumed lower bounds (for $k>2$) that we are trying to lift are {\em not} easy to compute (in contrast to  matrix-rank methods where $k=2$).

Summarizing this section, our main results naturally extend the ones in~\cite{EGOW18}. First, for these more general methods, there is a trivial way to use them, analogous to flattening in matrix-rank methods, which for every $k<d$ give much weaker bounds than the tensor/Waring rank for most degree $d$ tensors/polynomials.  Second, our barriers show that {\em any} use of these general methods (despite lifting  hard-to-prove lower bounds) {\em cannot} improve their trivial use by more than a constant factor (for constant $d$). Third, the proofs of our barrier results also follow the general strategy of~\cite{EGOW18}. However, the case $k>2$ seems to raise major, interesting difficulties in implementing that strategy, which require new ideas, as well as more sophisticated tools from algebraic geometry. These in turn lead us to prove purely algebraic results regarding polynomial maps which we believe can be useful way beyond the context of this paper, both in algebraic complexity and in algebraic geometry. We will encapsulate this main result in the next subsection as well, and discuss at length the difficulties, ideas and tools in Section~\ref{sec:new ideas}. 

We now turn to formally define the generalized rank methods we consider, and state our main results. 

\begin{definition} [$T_k$-rank method]
Let $V$ be a vector space and let $S \subseteq V$ be a spanning subset. A linear map $\phi: V \rightarrow \Ten(m,k)$ is called a $T_k$-rank method. Thus, matrix-rank methods are simply $T_2$-rank methods. The function $\mu_\phi$ defined by $\mu_\phi(v) = \trk(\phi(v))$ for $v \in V$ is a sub-additive measure, and $\Pot(\phi) = \mu_\phi (V)/\mu_\phi (S)$.
\end{definition}

\begin{example} [Trivial $T_k$-rank method] \label{triv.t_k}
Consider $\Ten(n,d)$, where $d = rk$ for simplicity. In the spirit of simple flattenings of Example~\ref{flattening},  by clubbing together the tensor factors into $k$ groups of size $r$, we get a linear map $\Ten(n,d) \rightarrow \Ten(n^r,k)$. This "trivial" $T_k$-rank method has potency $\Omega((n^r)^{k-1})$. To give a frame of reference for the theorem below, we note that $n^{r(k-1)} = n^{\lfloor\frac{(k-1)d}{k}\rfloor}$. 
\end{example}

Recall that we assume throughout the paper that the ground field $\F$ is algebraically closed and characteristic zero. This is important in our main results (i.e., Theorems~\ref{TtoT},~\ref{PtoT},~\ref{TtoP} and ~\ref{PtoP}), so we will restate this assumption.

\begin{theorem} \label{TtoT}
Suppose that the ground field is algebraically closed and characteristic zero. For any $T_k$-rank method $\phi:\Ten(n,d) \rightarrow \Ten(m,k)$, its potency
$$
\Pot(\phi) \leq A_{d,k} \cdot (n^{\lfloor\frac{(k-1)d}{k}\rfloor}),
$$
where $A_{d,k} = k^d$.
\end{theorem}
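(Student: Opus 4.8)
The plan is to follow the strategy of \cite{EGOW18} for Theorem~\ref{barr-egow-ten}, with the key new ingredient being the ``numeric to symbolic'' transfer theorem. The starting point is the observation that $\mu_\phi(S) \geq 1$ (as long as $\phi$ is not identically zero on $S$, which we may assume), so bounding the potency reduces to bounding $\mu_\phi(\Ten(n,d)) = \max_{T} \trk(\phi(T))$ from above. Since every $T \in \Ten(n,d)$ is a sum of $N := n^d$ simple tensors, and $\phi$ is linear, $\phi(T)$ is a sum of $N$ tensors of the form $\phi(\bv_1 \otimes \cdots \otimes \bv_d)$, each living in $\Ten(m,k)$. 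The crude bound $\trk(\phi(T)) \leq N \cdot \max_{s \in S}\trk(\phi(s))$ is useless because $\max_{s}\trk(\phi(s))$ can be as large as $\mu_\phi(S)$ times... wait, it IS $\mu_\phi(S)$. So the real content must be that the \emph{sum} of $N$ images of simples has bounded rank — much smaller than $N$ — because the images all lie in a highly structured subvariety and rank is sub-additive on that structure in a way a naive count misses.

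**The core construction** is as follows. Consider the simple tensors parametrized by $(\bv_1,\ldots,\bv_d) \in (\F^n)^d$; this is an affine space of dimension $dn$. Restricting $\phi$ to simples gives a polynomial map $\psi : (\F^n)^d \to \Ten(m,k)$, each coordinate of $\psi$ being a polynomial of degree $d$ in the $dn$ variables. Now I want to say: the image of $\psi$ (or rather its span, or a generic sum of a few of its points) has small tensor rank. Here is where the ``symbolic tensor'' idea enters: consider the \emph{symbolic} tensor $\Psi \in \Ten(m,k) \otimes \F[\bv_1,\ldots,\bv_d]$ whose entries are these degree-$d$ polynomials. For \emph{every} evaluation of the variables, $\Psi$ specializes to $\phi(\text{simple})$, whose tensor rank is at most $\mu_\phi(S)$; but $\mu_\phi(S)$, while possibly large, is a \emph{fixed finite} number — call it $R$. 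By the main technical theorem promised in the abstract (the numeric-to-symbolic transfer), since $\trk$ of every evaluation of $\Psi$ is at most... no, that gives a symbolic decomposition of $\Psi$ into $\approx R$ symbolic simple tensors, which is not directly what we want either.

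**Rethinking:** the right approach, matching the $n^{\lfloor (k-1)d/k\rfloor}$ bound and the relation to the trivial flattening of Example~\ref{triv.t_k}, is a \emph{substitution / degeneration} argument. One shows that $\phi$ factors (up to a rank-preserving-enough manipulation) through a flattening-type map, OR one shows that $\phi(\Ten(n,d))$ is contained in (a bounded-rank neighborhood of) the image of a $\Ten(n^r,k) \to \Ten(m,k)$ map, whose rank is controlled. Concretely: I would argue that for a generic choice of the structure, $\phi$ composed with a generic linear substitution of the $d$ tensor slots into $k$ groups of size $r$ agrees with $\phi$ on enough of the space, and then use that tensor rank of a generic element of $\Ten(n^r,k)$ — pushed through the \emph{linear} map $\phi$ — is at most $\min(\text{target dimensions})$ but more importantly at most $(n^r)^{k-1}$ by the trivial flattening applied in the \emph{source} $\Ten(n^r,k)$. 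The factor $A_{d,k} = k^d$ presumably collects the overhead from: (i) the $k$-fold flattening giving a $2^k$-type loss as in Theorem~\ref{barr-egow-ten} applied to $k$-tensors, compounded $d/k$ times, and (ii) combinatorial choices of how to group the $d$ slots.

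**The main obstacle**, and the reason the $k>2$ case is genuinely harder than \cite{EGOW18}, is exactly the numeric-to-symbolic gap: for $k=2$, symbolic matrix rank equals generic numeric matrix rank (a matrix of polynomials has rank $\leq r$ everywhere iff all $(r{+}1)$-minors vanish identically), so one can freely move between ``bounded rank at every point'' and ``bounded rank symbolically over the function field.'' For $k \geq 3$ this fails — tensor rank can jump on subvarieties, and border rank differs from rank — so the heart of the proof is establishing the transfer theorem that lets us conclude a \emph{symbolic} (hence uniform, hence summable) low-rank decomposition from \emph{pointwise} low rank. I expect the bulk of the work, and the characteristic-zero hypothesis, to be concentrated there; once that is in hand, the potency bound should follow by the dimension-counting / flattening bookkeeping sketched above, parallel to \cite{EGOW18}.
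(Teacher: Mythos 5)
You correctly identify the two load-bearing ideas — parametrize the simples $S$ by a polynomial map $\psi:(\F^n)^d\to\Ten(n,d)$, and apply the numeric-to-symbolic transfer to the composite $L=\phi\circ\psi$, which satisfies $\trk(L(\bbeta))\le a:=\mu_\phi(S)$ for every $\bbeta$ — and you correctly diagnose why $k=2$ is easy (symbolic matrix rank is cut out by minors) while $k\ge 3$ is not. But at the critical juncture you abandon the right idea: you write that the transfer ``gives a symbolic decomposition of $\Psi$ into $\approx R$ symbolic simple tensors, which is not directly what we want either,'' and pivot to a hypothetical ``substitution / degeneration'' or ``factor $\phi$ through a flattening'' argument that you never carry out and that is not what the paper does. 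That symbolic decomposition into $a$ symbolic simple tensors is \emph{exactly} what you want; the rest of the proof is a set-multigraded bookkeeping argument on its coefficients.

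Concretely, the missing chain is this. Introduce $d$ groups of $n$ variables $\bz=(\bz_1,\dots,\bz_d)$; then $L(\bz)=\phi(\bz_1\otimes\cdots\otimes\bz_d)$ has entries that are set-multi-homogeneous of set-multi-degree $(1,\dots,1)\in\N^d$. Corollary~\ref{cor:ag} gives, after a shift by some $\bc$, a power-series decomposition
$L(\bz+\bc)=\sum_{i=1}^a \bp_i^{(1)}(\bz)\otimes\cdots\otimes\bp_i^{(k)}(\bz)$
with $\bp_i^{(j)}$ vector power series. Extracting the set-multi-degree $(1,\dots,1)$ component recovers $L(\bz)$ as a \emph{finite} sum over tuples of exponent vectors $(\be^{(1)},\dots,\be^{(k)})$ with $\sum_j\smdeg(\be^{(j)})=(1,\dots,1)$; this last constraint forces each $\smdeg(\be^{(j)})$ to be a zero--one vector, i.e.\ the supports $(I_1,\dots,I_k)$ form an ordered set partition of $[d]$. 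For each $i\in[a]$ and each $\I\in\SP(d,k)$ one gets a basic subspace $\mathcal C^i_\I=\bigotimes_j\spa\{\bp^{(j)}_{i,\be}:\smdeg(\be)=\bdelta_{I_j}\}$, and the finite sum shows $\phi(S)\subseteq\sum_{i,\I}\mathcal C^i_\I$; since $\phi$ is linear and this is a sum of linear subspaces, the same holds for $\phi(\Ten(n,d))$. Finally, $r(\mathcal C^i_\I)\le\prod_{j\ne p}\dim\mathcal C^i_{I_j}\le\prod_{j\ne p}n^{|I_j|}\le n^{\lfloor(k-1)d/k\rfloor}$ (drop the largest block), and there are exactly $|\SP(d,k)|=k^d$ set partitions — this, and not any compounding of flattening losses, is where $A_{d,k}=k^d$ comes from. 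Your proposal as written does not produce the union of basic subspaces, does not invoke the set-multigrading that makes the coefficient set finite and partition-indexed, and misattributes the constant; these are the essential gaps to close.
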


The theorem holds for all values of $k,d,n,m$! Let us say a few words on these parameters.
First note that it recovers (with $k=2$) Theorem~\ref{barr-egow-ten} of \cite{EGOW18}. Next note that, as in~\cite{EGOW18},  for constant $d$ the upper bound is a constant factor away from the trivial use of the method. Finally, note that, again as in~\cite{EGOW18}, our theorem holds for any value ($m$ here) of the dimension of the image space, and it does not assume anything (in particular explicitness) of linear map $\phi$ used by the method!


\begin{remark}
The above theorem is especially interesting in the case for $\Ten(n,4)$. The trivial lower bound, the barrier for matrix-rank methods and the barrier for $T_3$-rank methods are all quadratic in $n$, differing only in a constant factor. Hence, even if one had access to an oracle for tensor rank of $3$-tensors, one could still not prove super-quadratic lower bounds for the tensor rank of tensors in $\Ten(n,4)$. 
\end{remark}

The following table puts the degree of tensors against lower bounds obtainable by different classes of rank methods. We suppress constant terms.

\begin{center}
$$
\begin{array} {|c|c|c|c|c|c|}
\hline
& \text{Trivial} & \text{Rank methods} & \text{Trivial } T_3 & \text{Best } T_3 & \text{Desired} \\
\hline
3-\text{tensors} & n & n & n^2 & n^2 & n^2 \\
\hline
4-\text{tensors} & n^2 & n^2 & n^2 & n^2 & n^3 \\
\hline
5-\text{tensors} & n^2 & n^2 & n^3 & n^3 & n^4\\
\hline
\end{array}
$$
\end{center}


The following three results are in the same spirit.

\begin{theorem} \label{PtoT}
Suppose that the ground field is algebraically closed and characteristic zero. For any $T_k$-rank method $\phi:P(n,d) \rightarrow \Ten(m,k)$, its potency
$$
\Pot(\phi) \leq B_{d,k} \cdot (n^{\lfloor\frac{(k-1)d}{k}\rfloor})
$$
for some constant $B_{d,k}$ depending only on $d$ and $k$.
\end{theorem}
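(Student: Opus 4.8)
The plan is to run the same argument that (in a preceding section) proves Theorem~\ref{TtoT}, with the Veronese parametrization $\ell\mapsto\ell^d$ of the simples of $P(n,d)$ playing the role of the Segre parametrization of simple tensors, and with classical polarization of forms replacing multilinear polarization. Fix a nonzero $T_k$-rank method $\phi\colon P(n,d)\to\Ten(m,k)$ and set $r:=\mu_\phi(S)$, so $\trk(\phi(\ell^d))\le r$ for every linear form $\ell\in P(n,1)\cong\F^n$. Consider the polynomial map $\Psi\colon\F^n\to\Ten(m,k)$, $\Psi(\ell)=\phi(\ell^d)$; it is homogeneous of degree $d$ in the coefficients of $\ell$ and has \emph{pointwise} tensor rank at most $r$ everywhere. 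The entire task is to convert this pointwise bound into control on $\mu_\phi(V)=\max_{f\in P(n,d)}\trk(\phi(f))$.

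The crux is to upgrade the pointwise bound to a \emph{symbolic} one: invoke the paper's numeric-to-symbolic transfer theorem — the main technical tool, and the one place the characteristic-zero and algebraic-closedness hypotheses are essential — to obtain polynomial maps $g^{(i)}_l\colon\F^n\to\F^m$, $1\le i\le r$, $1\le l\le k$, with $\Psi(\ell)=\sum_{i=1}^{r}g^{(i)}_1(\ell)\otimes\cdots\otimes g^{(i)}_k(\ell)$. Extracting homogeneous components and matching the scaling weight $d$ of $\Psi$ (i.e. comparing coefficients of $t^d$ in $\Psi(t\ell)$) replaces this by a decomposition $\Psi=\sum_{i=1}^{r'}h^{(i)}_1\otimes\cdots\otimes h^{(i)}_k$ with each $h^{(i)}_l$ homogeneous of degree $d^{(i)}_l$ and $\sum_l d^{(i)}_l=d$, at the cost of inflating $r$ to $r'\le\binom{d+k-1}{k-1}r$; this loss, and the constant losses below, are what get absorbed into $B_{d,k}$.

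Now recover $\phi$ on all of $P(n,d)$ from its restriction to the simples by polarization: expanding $\ell^d$ multinomially shows $\phi(f)=\frac{1}{d!}\,f(\partial_\ell)\Psi(\ell)$ for every $f\in P(n,d)$, where $f(\partial_\ell)$ is the constant-coefficient differential operator read off from $f$ (and $\frac{1}{d!}$ is where characteristic zero is used again). Applying the multinomial Leibniz rule to $f(\partial_\ell)\bigl(h^{(i)}_1\otimes\cdots\otimes h^{(i)}_k\bigr)$ and using homogeneity — the derivative multidegrees are forced to be exactly $(d^{(i)}_1,\dots,d^{(i)}_k)$, since the orders are nonnegative and sum to $d$ while the total derivative order is $d$ — one sees that the $i$-th summand of $\phi(f)$ lies in $W^{(i)}_1\otimes\cdots\otimes W^{(i)}_k$, where $W^{(i)}_l\subseteq\F^m$ is the span of the coefficient vectors of $h^{(i)}_l$, so $\dim W^{(i)}_l\le\binom{n+d^{(i)}_l-1}{d^{(i)}_l}$. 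Since every element of $W_1\otimes\cdots\otimes W_k$ has tensor rank at most $\prod_{l\ne l_0}\dim W_l$ for an index $l_0$ maximizing $\dim W_l$, and since the largest part of a partition of $d$ into $k$ parts is at least $\lceil d/k\rceil$, the $i$-th summand has tensor rank at most $\prod_{l\ne l_0}\binom{n+d^{(i)}_l-1}{d^{(i)}_l}\le (n+d)^{\,d-d^{(i)}_{l_0}}\le (n+d)^{\lfloor (k-1)d/k\rfloor}$ (using $d-\lceil d/k\rceil=\lfloor(k-1)d/k\rfloor$). Summing over $i$ gives $\trk(\phi(f))\le r'\,(n+d)^{\lfloor(k-1)d/k\rfloor}$, hence $\mu_\phi(V)\le B_{d,k}\,n^{\lfloor(k-1)d/k\rfloor}\,\mu_\phi(S)$ (the regime $n<d$, where $P(n,d)$ has bounded dimension, is handled by the trivial bound), i.e. $\Pot(\phi)\le B_{d,k}\,n^{\lfloor(k-1)d/k\rfloor}$.

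The main obstacle is the transfer step: passing from ``$\Psi$ has tensor rank $\le r$ at every point'' to ``$\Psi$ admits a genuine polynomial rank-$\le r$ decomposition''. For $k=2$ this is the elementary fact that the generic rank of a symbolic matrix equals its maximal pointwise rank; for $k\ge 3$ there is no such soft argument, the set of tensors of rank $\le r$ is neither closed nor cut out by any tractable system of equations, and this is precisely what the paper's ``numeric to symbolic'' machinery — an implicit-function-theorem-style transfer for inclusions of images of polynomial maps, valid only in characteristic zero — is built to supply. Once it is in hand, everything above is bookkeeping essentially identical to the proof of Theorem~\ref{TtoT}, with the clean combinatorial constant $k^d$ there replaced here by a messier but still $(d,k)$-only constant $B_{d,k}$ coming from the homogenization loss and the binomial factors $\binom{n+d_l-1}{d_l}/n^{d_l}$.
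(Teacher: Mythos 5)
Your overall route is essentially the paper's: parametrize the simples by $\ell \mapsto \ell^d$, compose with $\phi$ to get a degree-$d$ homogeneous map $\Psi = L$ with pointwise tensor rank $\le a := \mu_\phi(S)$, invoke the numeric-to-symbolic transfer, extract homogeneous components, and bound ranks of a sum of basic tensor subspaces. One point of fact to fix: Theorem~\ref{main-ag} does \emph{not} produce polynomial maps $g^{(i)}_l$ satisfying $\Psi(\ell) = \sum_i g^{(i)}_1(\ell) \otimes \cdots \otimes g^{(i)}_k(\ell)$ as an identity of polynomial maps --- such a decomposition is generally impossible, and avoiding the need for it is the whole point of the transfer theorem. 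What you actually get (Corollary~\ref{cor:ag}) is a \emph{power-series} identity $L(\bz + \bc) = \sum_i \bp^{(1)}_i(\bz) \otimes \cdots \otimes \bp^{(k)}_i(\bz)$ after a shift by some $\bc \in \F^n$. Your homogenization step must be phrased accordingly: since $L$ is homogeneous of degree $d$, $L(\bz)$ is the degree-$d$ homogeneous component in $\bz$ of $L(\bz+\bc)$, and extracting degree-$d$ components from both sides of the power-series identity yields a finite polynomial decomposition $L(\bz) = \sum_i \sum_{d_1+\cdots+d_k=d} (\bp^{(1)}_i)_{d_1}(\bz) \otimes \cdots \otimes (\bp^{(k)}_i)_{d_k}(\bz)$; your ``compare $t^d$-coefficients of $\Psi(t\ell)$'' phrasing doesn't account for the shift. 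After that correction, your count $r' \le \binom{d+k-1}{k-1}\,a$ and the ensuing rank arithmetic are correct and match the paper's Corollary~\ref{actual-war-bound}.

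Where you genuinely diverge is in passing from control of $\phi$ on the simples $S$ to control on all of $P(n,d)$: you use the classical polarization identity $\phi(f) = \tfrac{1}{d!}\, f(\partial_\ell)\,\Psi(\ell)$, a Leibniz expansion, and the observation that homogeneity forces the derivative multi-degrees, to place each summand of $\phi(f)$ in $W^{(i)}_1 \otimes \cdots \otimes W^{(i)}_k$. The paper's proof is softer: it checks that every $\phi(s)$, $s \in S$, lands in $\sum_{i,\mu} \mathcal{C}^i_\mu$ by evaluating the monomial decomposition of $L$ at arbitrary $\balpha$, and then uses Lemma~\ref{basic2} --- $\phi(V) = \phi(\spa S) = \spa \phi(S)$ lies in any linear subspace containing $\phi(S)$ --- which is an immediate consequence of linearity. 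Polarization is a second place where characteristic zero enters your argument and costs you a Leibniz computation, but it has the merit of giving an explicit formula for $\phi(f)$ on all of $V$; the paper's route avoids the computation at the cost of being less constructive. Both land on the same bound, so once the power-series correction above is made, your proof is valid.
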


One can compute an explicit upper bound for the constant $B_{d,k}$, but it is quite messy. If the reader is so inclined, they may extract an explicit upper bound from Corollary~\ref{actual-war-bound}.

\begin{definition} [$W_k$-rank method]
Let $V$ be a vector space and let $S \subseteq V$ be a spanning subset. A linear map $\phi: V \rightarrow P(m,k)$ is called a $W_k$-rank method. The function $\mu_\phi$ defined by $\mu_\phi(v) = \wrk(\phi(v))$ for $v \in V$ is a sub-additive measure, and $\Pot(\phi) = \mu_\phi (V)/\mu_\phi (S)$.
\end{definition}

\begin{theorem} \label{TtoP}
Suppose that the ground field is algebraically closed and characteristic zero. For any $W_k$-rank method $\phi:\Ten(n,d) \rightarrow P(m,k)$, its potency
$$
\Pot(\phi) \leq C_{d,k} \cdot (n^{\lfloor\frac{(k-1)d}{k}\rfloor}),
$$
where $C_{d,k} =2^{k-1} k^d$.
\end{theorem}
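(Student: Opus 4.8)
The plan is to follow the proof of Theorem~\ref{TtoT} almost verbatim, replacing tensor rank on the target by Waring rank; the only genuinely new feature is an extra factor $2^{k-1}$, which will enter precisely through the Glynn-type identity of Example~\ref{glynn-formula}. First I would unwind the definition of potency: writing $r:=\mu_\phi(S)$ for the maximal Waring rank of $\phi$ over the cone $S$ of simple (rank-one) tensors in $\Ten(n,d)$, it suffices to establish the key lemma that every linear $\phi\colon\Ten(n,d)\to P(m,k)$ with $\wrk(\phi(s))\le r$ for all simple $s$ satisfies $\wrk(\phi(T))\le 2^{k-1}k^{d}\,n^{\lfloor(k-1)d/k\rfloor}\,r$ for all $T$. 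Since $S$ spans $\Ten(n,d)$, this bounds $\mu_\phi(\Ten(n,d))$, hence $\Pot(\phi)=\mu_\phi(\Ten(n,d))/r$.

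To prove the key lemma I would introduce the \emph{generic} simple tensor $s(\bx)=\bx^{(1)}\otimes\cdots\otimes\bx^{(d)}$ with $d$ blocks of fresh variables, and note that $\phi(s(\bx))\in P(m,k)$ has coefficients polynomial in $\bx$, multilinear in the $d$ blocks, with Waring rank $\le r$ under \emph{every} evaluation of $\bx$. This is exactly the hypothesis of our ``numeric-to-symbolic'' transfer theorem (and the place where characteristic zero is used): it promotes the pointwise bound to a symbolic Waring decomposition $\phi(s(\bx))=\sum_{j=1}^{r}L_j(\bx;\by)^{k}$, where $\by=(y_1,\dots,y_m)$ are the variables of $P(m,k)$ and each $L_j$ is linear in $\by$ with coefficients that are algebraic functions of $\bx$, regular on a dense open set; clearing denominators makes these coefficients polynomial with bounded multidegrees. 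For $k=2$ this is automatic (lower semicontinuity of matrix rank), but for $k>2$ it is the whole difficulty, since pointwise Waring decompositions need not assemble into an algebraic family without an argument.

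With this decomposition, the remaining step is the combinatorial flattening argument of~\cite{EGOW18}, now routed through the $k$ slots of the degree-$k$ target rather than the two slots of a matrix. I would group the $d$ tensor factors of $\Ten(n,d)$ into $k$ nearly equal parts of sizes $d_1\le\cdots\le d_k$ with $d_k=\lceil d/k\rceil$ (as in Examples~\ref{flattening} and~\ref{triv.t_k}) and iterate, peeling off one group of $\bx$-blocks at a time: each round consumes the $k$ distinguishable slots of $P(m,k)$, multiplies the rank budget by a bounded factor (giving $k^{d}$ overall), and sums over $n^{d_1+\cdots+d_{k-1}}=n^{\lfloor(k-1)d/k\rfloor}$ multilinear monomials. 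Because the $k$ parts feed the $k$ slots of the symmetric power separately, specializing $\bx$ back to $\F$ writes $\phi(T)$, for arbitrary $T$, as a sum of at most $k^{d}\,n^{\lfloor(k-1)d/k\rfloor}\,r$ terms of the form $\ell_1\cdots\ell_k$ --- a product of $k$ linear forms over $\F$, the Waring analogue of $v_1\otimes\cdots\otimes v_k$ rather than a $k$-th power. Each such product has Waring rank $\le 2^{k-1}$ by Example~\ref{glynn-formula}, so subadditivity of $\wrk$ yields the claimed bound.

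The hard part will be the transfer step and the bookkeeping after it. For $k=2$ a pointwise rank bound instantly gives a rational factorization over $\F(\bx)$ with a polynomial denominator of bounded multidegree (a nonvanishing $r\times r$ minor), and everything downstream is elementary; for $k>2$ there is no such factorization, the transfer theorem only furnishes one over an algebraic, possibly ramified, cover of $\bx$-space, and I would need to verify that neither the rank budget nor the monomial and multidegree counts producing the $k^{d}n^{\lfloor(k-1)d/k\rfloor}$ factor degrade under that cover or under the final specialization. The Waring-specific pieces --- symmetrizing the source argument and paying the $2^{k-1}$ at the end --- are minor additions on top of the proof of Theorem~\ref{TtoT}.
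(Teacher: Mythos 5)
Your proposal is correct in outline, but it takes a genuinely different route from the paper. The paper does not redo the set-multidegree/flattening argument for a Waring-rank target at all: it reduces $W_k$-rank methods to $T_k$-rank methods in one page. Concretely, given $\phi:\Ten(n,d)\to P(m,k)$, the paper forms $\tilde\phi=\iota\circ\phi$, where $\iota:P(m,k)\hookrightarrow\Ten(m,k)$ embeds degree-$k$ forms as symmetric $k$-tensors, and proves the comparison lemma $\trk(\iota(f))\le\wrk(f)\le 2^{k-1}\trk(\iota(f))$ for all $f\in P(m,k)$. (The left inequality is immediate; the right one uses symmetrization of a tensor decomposition of $\iota(f)$ over $S_k$, injectivity of $\iota$, and Glynn's formula from Example~\ref{glynn-formula} to split each product $\ell^{(1)}\cdots\ell^{(k)}$ into $2^{k-1}$ powers.) This gives $\mu_{\tilde\phi}(S)\le\mu_\phi(S)$ and $\mu_\phi(V)\le 2^{k-1}\mu_{\tilde\phi}(V)$, hence $\Pot(\phi)\le 2^{k-1}\Pot(\tilde\phi)$, and Theorem~\ref{TtoT} for $\tilde\phi$ finishes the proof. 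The numeric-to-symbolic transfer, the set-multidegree bookkeeping, and the basic-subspace lemmas are all black-boxed inside Theorem~\ref{TtoT}.

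Your route --- applying Corollary~\ref{cor:ag}-style transfer directly to the map into $P(m,k)$ (the image of Waring rank $\le r$ is indeed the image of a polynomial map, so the transfer applies), expanding $k$-th powers into set-multidegree-$(1,\dots,1)$ pieces, collecting them into symmetric analogues of basic subspaces $W^{I_1}\cdot W^{I_2}\cdots W^{I_k}\subseteq P(m,k)$, flattening to bound by $n^{\lfloor(k-1)d/k\rfloor}$ products of $k$ linear forms, and paying $2^{k-1}$ at the end via Glynn --- lands on the same constant $C_{d,k}=2^{k-1}k^d$ and is a legitimate alternative. It is more laborious and requires restating the analogues of Lemma~\ref{basic2}/\ref{basic2r} for products of linear forms rather than tensor products, but it does make visible where each factor comes from. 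Two cautions. First, your ``clearing denominators'' aside does not quite work: multiplying through a $k$-th power by a polynomial does not keep it a $k$-th power of something polynomial; the clean fix (and what the paper uses) is to pass to power series around a generic $\bc$ via Theorem~\ref{main-ag}/Corollary~\ref{cor:ag}, which sidesteps denominators and ramified covers altogether. Second, your worry about bounds ``degrading under the cover'' is unfounded once you use the full transfer statement, which hands you honest power series over $\F$, not functions on an algebraic cover; the truncation to set-multidegree $(1,\dots,1)$ then proceeds exactly as in the proof of Theorem~\ref{TtoT}.
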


\begin{theorem} \label{PtoP}
Suppose that the ground field is algebraically closed and characteristic zero. For any $W_k$-rank method $\phi:P(n,d) \rightarrow P(m,k)$, its potency
$$
\Pot(\phi) \leq 2^{k-1} B_{d,k} \cdot (n^{\lfloor\frac{(k-1)d}{k}\rfloor}),
$$
where $B_{d,k}$ is the same constant as in Theorem~\ref{PtoT}.
\end{theorem}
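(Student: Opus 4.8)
The plan is to derive this statement from Theorem~\ref{PtoT} by transferring from Waring rank to tensor rank, paying the advertised factor $2^{k-1}$; this is exactly the passage that produces Theorem~\ref{TtoP} out of Theorem~\ref{TtoT}, so the present reduction runs in parallel to that one. First I would introduce the two linear maps relating $P(m,k)$ and $\Ten(m,k)$: the full polarization $P\colon P(m,k)\to\Ten(m,k)$, sending a degree-$k$ form $g$ to the unique symmetric $k$-linear form whose restriction to the diagonal is $g$ (well defined and linear in $g$ because $\operatorname{char}\F=0$, and satisfying $P(\ell^k)=\ell^{\otimes k}$), together with the restitution $R\colon\Ten(m,k)\to P(m,k)$, $R(T)=T(\bx,\dots,\bx)$, which is also linear and satisfies $R\circ P=\mathrm{id}_{P(m,k)}$.

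Given a $W_k$-rank method $\phi\colon P(n,d)\to P(m,k)$, I would set $\psi:=P\circ\phi\colon P(n,d)\to\Ten(m,k)$, a composition of linear maps and hence a legitimate $T_k$-rank method. The proof then reduces to two pointwise inequalities, for every $v\in P(n,d)$. The first is $\trk(\psi(v))\le\wrk(\phi(v))$: writing $\phi(v)=\sum_{i=1}^{\wrk(\phi(v))}\ell_i^{\,k}$ gives $\psi(v)=\sum_i\ell_i^{\otimes k}$, a decomposition into that many rank-one tensors. The second is $\wrk(\phi(v))\le 2^{k-1}\trk(\psi(v))$: writing $\psi(v)=\sum_{i=1}^{\trk(\psi(v))}\bv_i^{(1)}\otimes\dots\otimes\bv_i^{(k)}$ and applying $R$ (using $R\circ P=\mathrm{id}$) yields $\phi(v)=\sum_i\prod_{j=1}^k(\bv_i^{(j)}\cdot\bx)$, and each product of $k$ linear forms has Waring rank at most $2^{k-1}$ by the identity of Example~\ref{glynn-formula}, which — being a polynomial identity in $k$ variables — stays valid after substituting arbitrary, possibly linearly dependent, linear forms for those variables. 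Taking maxima, the first inequality gives $\mu_\psi(S)\le\mu_\phi(S)$ and the second gives $\mu_\phi(V)\le 2^{k-1}\mu_\psi(V)$, whence $\Pot(\phi)=\mu_\phi(V)/\mu_\phi(S)\le 2^{k-1}\,\mu_\psi(V)/\mu_\psi(S)=2^{k-1}\Pot(\psi)\le 2^{k-1}B_{d,k}\,n^{\lfloor(k-1)d/k\rfloor}$ by Theorem~\ref{PtoT}.

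All the substance is thus inherited from Theorem~\ref{PtoT}, and ultimately from the numeric-to-symbolic transfer underpinning it; this step is a short reduction rather than a new difficulty. The only ingredients that genuinely use $\operatorname{char}\F=0$ are the well-definedness of polarization and the identity $R\circ P=\mathrm{id}$, and the sole point requiring mild care is to invoke Glynn's identity formally so that it applies to linear forms in the same $m$ variables rather than to $k$ independent ones. I therefore expect the main obstacle not to lie in this theorem at all, but in establishing Theorem~\ref{PtoT}.
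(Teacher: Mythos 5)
Your proof is correct and follows essentially the same route as the paper: Section~\ref{sec:war-rank} also composes $\phi$ with the inclusion $\iota\colon P(m,k)\hookrightarrow\Ten(m,k)$ (your polarization $P$) to obtain a $T_k$-rank method, then proves the two-sided bound $\trk(\iota(f))\le\wrk(f)\le 2^{k-1}\trk(\iota(f))$ via Glynn's formula and deduces $\Pot(\phi)\le 2^{k-1}\Pot(\tilde\phi)$, finally invoking Theorem~\ref{PtoT}. The only cosmetic difference is that you phrase the reverse direction through the restitution map $R$ with $R\circ P=\mathrm{id}$, whereas the paper symmetrizes the tensor decomposition and uses injectivity of $\iota$; these are the same argument.
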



\begin{remark}
For matrix-rank methods, there is an alternate approach to proving barriers for the potency using the notion of cactus rank (see Section~\ref{sec:gen-cac-imp}). This approach doesn't seem to have an obvious generalization to $T_k$ and $W_k$-rank methods. It would be interesting to understand if there is an appropriate generalization that would also lead to the same barriers for $T_k-$rank methods and $W_k-$rank methods that we obtain in this paper.
\end{remark}

\subsection{Numeric to symbolic transfer}
The key new ingredient in this paper is a very general ``numeric to symbolic transfer'' statement. We will first state the theorem and then explain its meaning. 

\begin{theorem} \label{main-ag}
Let $\F$ be an algebraically closed field of characteristic zero. Suppose $L: \F^n \rightarrow \F^m$ is a polynomial map, and $M: \F^r \rightarrow \F^m$ is another polynomial map such that $\im(L) \subseteq \im(M)$. Let $\bz = (z_1,\dots,z_n)$ be a vector of indeterminates. Then there exists $ \bc = (c_1,\dots,c_n) \in \F^n$ such that
$$
L(\bz + \bc) = M(p_1(\bz),\dots,p_r(\bz)),
$$
where $p_1(\bz),\dots,p_r(\bz)$ are ($n$-variate) power series around $\bo$.
\end{theorem}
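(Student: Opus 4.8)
The plan is to produce the power series by a \emph{formal section} argument, treating Theorem~\ref{main-ag} as a version of the implicit function theorem for the incidence variety of $L$ and $M$. First I would introduce the incidence variety
$$
Z \;=\; \{\,(\bx,\bw)\in\F^n\times\F^r \;:\; L(\bx)=M(\bw)\,\}\;\subseteq\;\F^n\times\F^r,
$$
cut out by the equations $L_j(\bx)=M_j(\bw)$, together with the coordinate projections $\pi_1\colon Z\to\F^n$ and $\pi_2\colon Z\to\F^r$. By the very definition of $Z$, the regular functions $L_j\circ\pi_1$ and $M_j\circ\pi_2$ agree on $Z$ for every $j$; this single identity is all that will be needed at the end. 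The hypothesis $\im(L)\subseteq\im(M)$ is used exactly once, to see that $\pi_1$ is surjective on closed points: given $\bx_0\in\F^n$, the point $L(\bx_0)$ lies in $\im(L)\subseteq\im(M)$, so $M(\bw_0)=L(\bx_0)$ for some $\bw_0$, whence $(\bx_0,\bw_0)\in Z$. Consequently $\pi_1$ is dominant, and some irreducible component $Z_0$ of $Z$ maps dominantly onto $\F^n$; replacing $Z_0$ by its (dense, open) smooth locus, I may assume $Z_0$ is a smooth variety with $\pi_1|_{Z_0}$ still dominant.

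Next I would invoke characteristic zero through generic smoothness: since $Z_0$ is smooth and $\pi_1|_{Z_0}$ is dominant, there is a dense open $U\subseteq\F^n$ over which $\pi_1|_{Z_0}$ is smooth and surjective, say of relative dimension $e\ge 0$. Fix any $\bc\in U$ and any $z_0\in Z_0$ with $\pi_1(z_0)=\bc$; necessarily $z_0=(\bc,\bw_0)$ with $M(\bw_0)=L(\bc)$, and $\pi_1|_{Z_0}$ is smooth at $z_0$. Writing $\bz=\bx-\bc$, so that $\widehat{\OO}_{\F^n,\bc}=\F[[\bz]]$, the structure theorem for smooth morphisms identifies the completed local ring of $Z_0$ at $z_0$ with a power series ring over the base,
$$
\widehat{\OO}_{Z_0,z_0}\;\cong\;\F[[\bz]][[t_1,\dots,t_e]],
$$
in such a way that $\pi_1^{\ast}$ becomes the coefficient inclusion $\F[[\bz]]\hookrightarrow\F[[\bz]][[t_1,\dots,t_e]]$. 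Hence there is a continuous $\F$-algebra map $s^{\ast}\colon\widehat{\OO}_{Z_0,z_0}\to\F[[\bz]]$, given by $t_i\mapsto 0$, satisfying $s^{\ast}\circ\pi_1^{\ast}=\mathrm{id}_{\F[[\bz]]}$ --- a formal section of $\pi_1$ through $z_0$.

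Finally I would set $q=s^{\ast}\circ\pi_2^{\ast}\colon\F[\bw]\to\F[[\bz]]$ and define $p_i(\bz):=q(w_i)\in\F[[\bz]]$ (power series around $\bo$, with $p_i(\bo)=(\bw_0)_i$). Then for each $j$,
$$
M_j\bigl(p_1(\bz),\dots,p_r(\bz)\bigr)=q\bigl(M_j(\bw)\bigr)=s^{\ast}\bigl(\pi_2^{\ast}M_j\bigr)=s^{\ast}\bigl(\pi_1^{\ast}L_j\bigr)=(s^{\ast}\circ\pi_1^{\ast})\bigl(L_j(\bx)\bigr)=L_j(\bz+\bc),
$$
using the identity $M_j\circ\pi_2=L_j\circ\pi_1$ on $Z_0$ for the third equality and $x_i=z_i+c_i$ for the last. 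This is exactly the claimed identity $L(\bz+\bc)=M(p_1(\bz),\dots,p_r(\bz))$.

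The heart of the argument --- and the place I expect to have to be careful --- is the passage from the merely set-theoretic surjectivity of $\pi_1$ to an honest \emph{formal section} of $\pi_1$ through a suitably chosen point, along which $\pi_2$ can then be transported. This is where irreducibility (selecting the component $Z_0$), passing to its smooth locus, and generic smoothness --- hence the characteristic zero hypothesis --- are all essential, and it is the step playing the role of the implicit function theorem. The supporting facts (Chevalley's theorem that a dominant morphism is onto a dense open subset, density of the smooth locus, and the structure of smooth morphisms over a complete local base) are standard, and I would cite rather than reprove them.
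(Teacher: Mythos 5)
Your proof is correct, but it takes a genuinely different route from the paper's, and the comparison is instructive. The paper proceeds in two separate steps that your argument fuses. First, its Proposition~\ref{prop:gensymb} applies Hilbert's Nullstellensatz to produce \emph{algebraic functions} $b_1(\bz),\dots,b_r(\bz)\in\overline{\F(\bz)}$ with $L(\bz)=M(b_1(\bz),\dots,b_r(\bz))$; geometrically this amounts to choosing a single $\overline{\F(\bz)}$-point of the generic fiber of your $\pi_1$. Second, its Proposition~\ref{prop-ps} forms the $n$-dimensional ring $R=\F[z_1,\dots,z_n,b_1(\bz),\dots,b_r(\bz)]$ --- the coordinate ring of the closure of that single point, a generically \emph{finite} cover of $\F^n$ --- and then uses generic smoothness to find a point where $\MSpec(R)\to\F^n$ is \'etale, so that the completed local rings are isomorphic and the algebraic functions get power series expansions. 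You instead work with a full dominant irreducible component $Z_0$ of the incidence scheme; because $\dim Z_0$ can exceed $n$ you cannot reduce to the \'etale case, and you need the more general structure theorem for smooth morphisms ($\widehat{\OO}_{Z_0,z_0}\cong\F[[\bz]][[t_1,\dots,t_e]]$) together with the formal section $t_i\mapsto 0$. In exchange, you avoid the Nullstellensatz step entirely, getting the needed $\overline{\F(\bz)}$-point implicitly from the dominant component rather than explicitly. Both approaches lean on generic smoothness in characteristic zero (the paper cites Vakil, Thm.\ 25.3.1), which is the genuinely nontrivial input. One small point you already flag and the paper handles implicitly: generic smoothness on the source requires a regular source, so passing to the smooth (regular) locus of $Z_0$ first --- as you do --- is a necessary preliminary, and it is worth making this explicit because it relies on perfectness of $\F$; the paper does the analogous step silently for $\MSpec(R)$.
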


That the map $L = (L_1,\dots,L_m)$ is a polynomial map simply means that each component $L_i:\F^n \rightarrow \F$ is a polynomial function (similarly for $M$). For an exponent vector $\be = (e_1,\dots,e_n) \in \N^n$, we use the shorthand $\bz^{\be} = z_1^{e_1}z_2^{e_2}\dots z_n^{e_n}$ to denote monomials. By an $n$-variate power series around $\bo$, we mean an expression of the form $q(\bz) = \sum_{\be \in \N^n} q_\be \bz^\be$, where $q_\be \in \F$. Addition and multiplication on power series is defined in the standard way, and so it makes sense to plug in a vector of power series into $M$. Equality of power series is purely symbolic\footnote{When $\F= \C$, one can interpret these power series as analytic functions (provided they converge in some neighbourhood), and then equality of power series is the same as equality of functions. For other fields, there is not always a reasonable interpretation of power series as functions.} -- two power series $q(\bz)$ and $p(\bz)$ are said to be equal if $q_{\be} = p_{\be}$ for all $\be \in \N^n$.

The hypothesis $\im(L) \subseteq \im(M)$ is a condition on the numeric evaluations of $L$ and $M$. To interpret the conclusion, first observe that the right hand side is (apriori) a vector of power series. The left hand side is a vector of polynomials, and polynomials are power series. Thus the conclusion is an equality as vectors of power series, which is a symbolic statement -- hence the interpretation of the above theorem as a ``numeric to symbolic transfer'' statement. 

\begin{remark}
The above result is very much in the spirit of the implicit function theorem and the constant rank theorem. However, it does not seem to be a straightforward consequence. If this were the case, we should expect a similar statement for smooth functions -- if we take $\F = \R$, $L,M$ to be $C^{\infty}$ functions, the $p_i(\bz)$ to be $C^{\infty}$ functions on some small neighbourhood of $0$, and ask for the conclusion to be an equality (as functions) on a small neighbourhood. No such statement seems to be known to the best of our knowledge.\footnote{Experts have suggested that it is likely false in this setting. However, constructing an explicit counterexample seems to be difficult.}
\end{remark}


Our use of Theorem~\ref{main-ag} will be in the context:

\begin{corollary} \label{cor:ag}
Let $\F$ be an algebraically closed field of characteristic zero. Let $L: \F^n \rightarrow \Ten(m,k)$ be a polynomial map. Let $\bz = (z_1,\dots,z_n)$ be a vector of indeterminates. If $\trk(L(\bbeta)) \leq a$ for all $\bbeta \in \F^n$, then there exists $\bc \in \F^n$ such that we have a power series decomposition
$$
L(\bz + \bc) = \sum_{i=1}^a \bp_i^{(1)}(\bz) \otimes \bp_i^{(2)}(\bz) \otimes \dots \otimes \bp_i^{(k)}(\bz),
$$
where $\bp_i^{(j)}(\bz)$ is an $m$-dimensional vector of power series in the variables $\bz$ around $\bo$.
\end{corollary}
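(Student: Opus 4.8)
Here is how I would deduce Corollary~\ref{cor:ag} from Theorem~\ref{main-ag}.

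\medskip

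\noindent\textbf{Setting up the parametrization of bounded-rank tensors.}
The plan is to realize ``tensor rank $\leq a$'' as the image of a single polynomial map, so that Theorem~\ref{main-ag} applies. Concretely, consider the map
$$
M\colon \underbrace{(\F^m)^k \times \dots \times (\F^m)^k}_{a} \longrightarrow \Ten(m,k),
\qquad
M\bigl((\bv_i^{(1)},\dots,\bv_i^{(k)})_{i=1}^a\bigr) = \sum_{i=1}^a \bv_i^{(1)}\otimes\dots\otimes \bv_i^{(k)}.
$$
This $M$ is a polynomial map from $\F^{r}$ to $\F^{\dim \Ten(m,k)}=\F^{m^k}$, with $r = akm$, and by definition of tensor rank its image is exactly the set of tensors $T\in\Ten(m,k)$ with $\trk(T)\leq a$. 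The hypothesis $\trk(L(\bbeta))\leq a$ for all $\bbeta\in\F^n$ therefore says precisely $\im(L)\subseteq\im(M)$, where we view $L\colon\F^n\to\F^{m^k}$ via a fixed choice of coordinates (basis of simple tensors) on $\Ten(m,k)$.

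\medskip

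\noindent\textbf{Applying the transfer theorem and reading off the conclusion.}
Now I apply Theorem~\ref{main-ag} to this pair $(L,M)$. It yields $\bc=(c_1,\dots,c_n)\in\F^n$ and power series $p_1(\bz),\dots,p_r(\bz)$ around $\bo$ such that $L(\bz+\bc)=M(p_1(\bz),\dots,p_r(\bz))$ as vectors of power series. It remains only to re-index the $r=akm$ output power series into the shape demanded by the statement: group them as $\bp_i^{(j)}(\bz)$, an $m$-tuple of power series, for $i=1,\dots,a$ and $j=1,\dots,k$. Since multiplication and addition of power series are defined formally and agree with the corresponding operations used to define $M$, substituting the $p_\bullet(\bz)$ into $M$ literally produces $\sum_{i=1}^a \bp_i^{(1)}(\bz)\otimes\dots\otimes\bp_i^{(k)}(\bz)$, where the tensor-product-of-vectors-of-power-series is taken entrywise. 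This is exactly the claimed decomposition, and the equality of the two sides of the displayed identity in the corollary is just the equality $L(\bz+\bc)=M(p_\bullet(\bz))$ read coordinate by coordinate through the chosen basis of $\Ten(m,k)$.

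\medskip

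\noindent\textbf{What needs care.}
There is essentially nothing deep left once Theorem~\ref{main-ag} is granted; the only points requiring a sentence of justification are: (i) that $M$ is genuinely a polynomial map in the sense of the theorem (each coordinate of $\bv_i^{(1)}\otimes\dots\otimes\bv_i^{(k)}$ is a monomial of degree $k$ in the entries of the $\bv_i^{(j)}$, hence polynomial), and (ii) that plugging a vector of formal power series into $M$ is legitimate and commutes with the entrywise tensor-product notation — this is exactly the formal-substitution discussion already made in the paragraph following Theorem~\ref{main-ag}. The substantive content, the genuine ``numeric to symbolic'' passage, is entirely absorbed into Theorem~\ref{main-ag}; the corollary is the specialization of that theorem to the universal parametrization of tensors of bounded rank, so I expect the proof to be short.
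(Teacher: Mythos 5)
Your proposal is correct and matches the paper's argument exactly: both construct the universal parametrization $M: ((\F^m)^k)^a \to \Ten(m,k)$ of tensors of rank at most $a$, observe that the hypothesis is precisely $\im(L)\subseteq\im(M)$, and invoke Theorem~\ref{main-ag}. Your write-up simply spells out the re-indexing of the $r=akm$ output power series into the $\bp_i^{(j)}$ more explicitly than the paper does.
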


The above corollary is an instantiation of the above theorem, and we defer the details to Section~\ref{Sec:gen2symb}. The barriers for matrix-rank methods in \cite{EGOW18} also required the case $k = 2$ in the above corollary. However, that special case is straightforward to prove, which we will see in Section~\ref{sec:new ideas}. With the exception of the above result, the rest of the arguments for proving our main results are natural generalizations of the arguments for the $k = 2$ case in \cite{EGOW18}.

Theorem~\ref{main-ag} requires non-trivial notions and results from algebraic geometry to prove. On the other hand, the statement itself is accessible and neat, and we speculate that it will find more uses in complexity theory. One possible use is to prove lower bounds via elusive functions (see \cite{Raz-elusive}).  The elusiveness of a function is a numerical condition and fits precisely into the setup of the above theorem. Thus, Theorem~\ref{main-ag} allows us for a symbolic interpretation of this condition. The advantage is that this brings new tools to the table, which we demonstrate in toy cases (see Section~\ref{sec:elusive}).

\subsection{Results on border rank, set-multihomogeneous rank and cactus rank}
We give a brief overview of some additional results that we include in this paper, and we defer the details to the appropriate sections.

\begin{enumerate}
\item {\bf Border rank:} For simplicity, we have ignored the notion of border rank in the introduction so far. In Section~\ref{sec:border-rank}, we prove barriers to lifting border rank of tensors/polynomials. Incorporating the notion of border rank is not straightforward, and again requires results from algebraic geometry. \\

\item  {\bf Matching barriers obtained from cactus rank:} For matrix-rank methods, the current barriers in~\cite{EGOW18} do not match precisely the barriers obtained by cactus rank arguments (the gap is quite small). By introducing an additional idea, we match the barriers obtained in both approaches in Section~\ref{sec:gen-cac-imp}.

\item {\bf Set-multihomogeneous rank:} We discuss barriers for matrix-rank methods for set-multihomogeneous rank (a generalization of tensor and Waring ranks) in Section~\ref{sec:gen-cac-imp}. \\

\end{enumerate}

\subsection{Organization}
In Section~\ref{sec:prelim}, we collect some notation. In Section~\ref{sec:new ideas}, we give a proof sketch of the barriers for matrix rank methods (in \cite{EGOW18}). We also discuss the issues with generalizing the arguments and the new ideas to overcome these; in particular we discuss the ingredients of the numeric to symbolic transfer statement (Theorem~\ref{main-ag}). Section~\ref{Sec:gen2symb} contains a brief introduction of notions in algebraic geometry we require, and a detailed proof of Theorem~\ref{main-ag}. The barriers to tensor (resp. Waring) rank lower bound methods are established in Section~\ref{sec:tensor-rank} (resp. Section~\ref{sec:war-rank}), thereby proving our main results.

We study border rank methods and establish barriers for these in Section~\ref{sec:border-rank}. This  requires a careful interplay between algebraic and topological border rank. In Section~\ref{sec:gen-cac-imp}, we discuss barriers for set multi-homogenous rank, as well as match the barriers obtained from our techniques with the barriers coming from cactus rank. Finally, in Section~\ref{sec:elusive}, we discuss elusive functions and their importance in lower bounds, and suggest a symbolic approach using our numeric to symbolic transfer statement.

\section{Notation}\label{sec:prelim}

In this section we establish additional notation to the ones given in the previous section, 
and state basic facts which will be used throughout the paper.

For a ring $R$, we define $\Ten_R(n,d) := (R^n)^{\otimes d}$ as the module defined by the set of
degree $d$ tensors with local dimension $n$ and entries given by elements of the ring $R$. When the
ring is clear from the context, we omit it from the definition, as we did in the previous section.

We will denote elements of a field or of a ring with lowercase normal or greek letters, such as 
$a, b, c, \alpha, \beta, \gamma$.
Given a vector space (or an $R$-module) $V$, such as $\F^n$, we will denote elements of this vector
space with boldface letter, for instance $\bv \in V$. Similarly, we will also denote a set (or vector) of indeterminates with 
boldface letters $\bz := (z_1, \ldots, z_m)$. We will sometimes think of $\bz$ as a set and sometimes as a vector and this will be obvious from the context. For example, we think of it as a set when we write the function field $\F(\bz) = \F(z_1,\dots,z_m)$, and we think of it as a vector when we write $L(\bz)$ for some function $L$ that takes $n$ inputs (as we do in Theorem~\ref{main-ag}).


We will use the following shorthand notation to refer to a monomial: $\bz^\be = \prod_{i=1}^m z_i^{e_i}$,
where $\be \in \N^m$. Given a polynomial $f(\bz) \in P(m)$, we will denote its degree by $\deg(f)$.
Thus, the degree of the monomial $\bz^\be$ is given by $\deg(\bz^\be) = e_1 + \cdots + e_m$. We will also write $\deg(\be)$ for $\deg(\bz^\be)$ as it simplifies notation.

A power series in $\bz$ around $\bc = (c_1,\dots,c_m) \in \F^m$ is an expression of the form $p(\bz) = \sum_{\be \in \N^m} p_{\be} (\bz-\bc)^\be$. Note that $(\bz-\bc)^\be = \prod (z_i - c_i)^{e_i}$. Given two power series $p(\bz)$ and $q(\bz)$, we can add or multiply them in the obvious fashion. This gives the collection of all power series in $\bz$ around $\bc$ the structure of a ring, which we call the ring of power series.

\begin{definition}[Ring of Power Series]
We denote by $\F[|\bz -\bc|] = \F[|z_1 - c_1,\dots,z_m-c_m|]$ the ring of power series in $\bz$ around $\bc$.
\end{definition}

\section{Proof strategies for previous results and new ideas} \label{sec:new ideas}
 The high-level strategy for proving our main results is similar to the barriers for matrix-rank methods in \cite{EGOW18}. Hence, we will give a sketch of the arguments in \cite{EGOW18}, which will help us identify the difficulties in generalization, and the new ideas (primarily the numeric to symbolic transfer) that are required to overcome this.

\subsection{Barriers for matrix-rank methods: Proof sketch} \label{Sec:pfsketch}
The following observation is key on how potential upper bounds, and thus barriers are obtained -- any matrix of the form
    

\begin{center}
\includegraphics[scale=0.5]{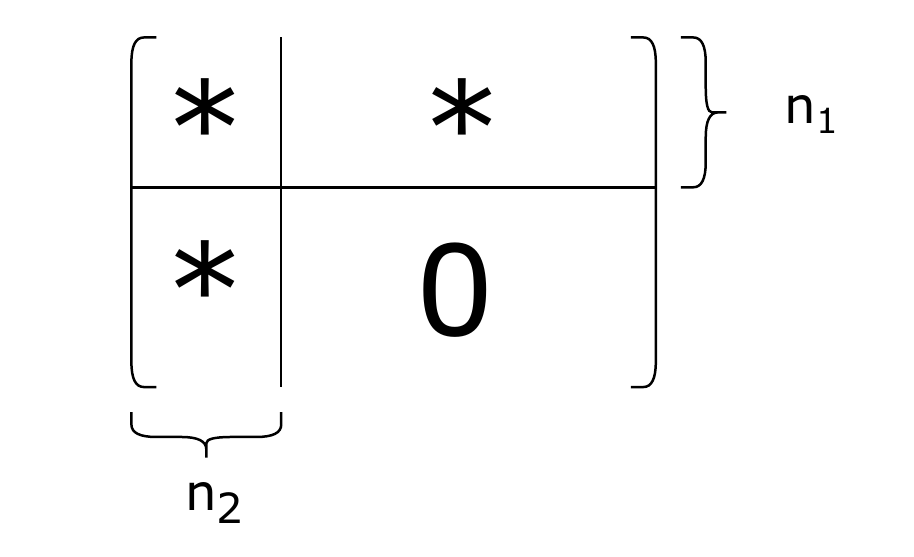}
\end{center}

\noindent has rank at most $n_1 + n_2$. Let us also note that such matrices form a closed set under addition, and in fact a linear subspace. So, the sum of any number of such matrices will also be of this form, and hence have rank at most $n_1 + n_2$. In particular, suppose $\phi: V \rightarrow \M_{k,l}$ is a matrix-rank method (for some spanning set of simples $S\subseteq V$). Further, (under a suitable basis change) suppose that for every $s \in S$, $\phi(s)$ is of the form above. Now, for any $v \in V$, we can write $v = s_1 + s_2 + \dots + s_r$ for some $s_i \in S$. Thus one sees immediately that $\phi(v) = \phi(s_1) + \phi(s_2) + \dots + \phi(s_r)$ is also of the same form (because matrices of such form are closed under addition), and hence has rank $\leq n_1+ n_2$. So, $\mu_\phi(V) = \max\{\rk(\phi(v))\ |\ v \in V\} \leq n_1 + n_2$. Thus, this gives an upper bound on potency  
$$
\Pot(\phi) = \frac{\mu_\phi(V)}{\mu_\phi(S)} \leq \frac{n_1 + n_2}{\mu_\phi(S)}.
$$

Let us identify $\M_{k,l}$ with $\F^k \otimes \F^l$ in the natural fashion. The condition discussed above can be rephrased as having subspaces $U_1 \subseteq \F^k$ and $U_2 \subseteq \F^l$ (with $\dim U_1 = n_1$ and $\dim U_2 = n_2$) such that for all $s \in S$, $\phi(s) \in U_1 \otimes \F^l + \F^k \otimes U_2$. We record this as a lemma for further use.

\begin{lemma} \label{basic}
Let $\phi: V \rightarrow \M_{k,l}$ be a matrix-rank method such that $a = \max\{\rk(s)\ |\ s \in S\} = \mu_\phi(S)$. Suppose we have subspaces $U_1 \subseteq \F^k$ and $U_2 \subseteq \F^l$ such that for all $s \in S$, $\phi(s) \subseteq U_1 \otimes \F^l + \F^k \otimes U_2$. Then, we have $\phi(V) \subseteq U_1 \otimes \F^l + \F^k \otimes U_2$, and consequently,
$$
\Pot(\phi) = \frac{\mu_\phi(V)}{\mu_\phi(S)} \leq \frac{\dim(U_1) + \dim(U_2)}{a}.
$$

\end{lemma}

\begin{definition} [Parametrization]
An (algebraic) parametrization of a spanning subset $S \subseteq V$ is simply a polynomial map $\psi: \F^n \rightarrow V$ such that $\psi(\F^n) = S$. 
\end{definition}

The ability to parametrize simples is crucial in the proofs of barriers. In the setting of Waring rank, i.e, $V = P(n,d)$ and 
$S = \{\ell^d\ | \ \ell \in P(n,1)\}$, we have a parametrization $\psi: \F^n \rightarrow P(n,d)$ given by $(\alpha_1,\dots,\alpha_n) \mapsto (\sum_j \alpha_j x_j)^d$, which is a homogenous polynomial map of degree $d$ (by which we mean that when written in coordinates, it is given by homogenous polynomials of degree $d$). The proof of the barrier seems to depend {\em only} on the nature of the parametrization (that it requires $n$ variables and is homogenous of degree $d$). We now proceed to give a proof sketch of Theorem~\ref{barr-egow-war}.

\begin{proof} [Proof sketch of Theorem~\ref{barr-egow-war}]
Let $V = P(n,d)$ and $S = \{\ell^d\ | \ \ell \in P(n,1)\}$, and $\psi:\F^n \rightarrow V$ be the parametrization of $S$ described above. Composing the matrix-rank method $\phi: V \rightarrow \M_{k,l}$ with the parametrization $\psi$ gives a map $L := \phi \circ \psi : \F^n \rightarrow \M_{k,l}$.

\begin{enumerate}
\item {\bf The starting point:}
It is easy to see that $\im(L) = \phi(S)$. Thus, the map $L$ has the property that $\rk(L(\beta_1,\dots, \beta_n)) \leq  a = \mu_\phi(S)$ for all $\bbeta = (\beta_1,\dots, \beta_n) \in \F^n$.  \\

\item {\bf A symbolic decomposition in the function field:} 
Let $\bz = (z_1,\dots,z_n)$ be a vector of indeterminates. The above statement implies the symbolic statement that $\rk (L(z_1,\dots,z_n)) \leq a$. Note here that $L(z_1,\dots,z_n)$ is a $k \times l$ matrix with entries in the polynomial ring $\F[z_1,\dots,z_n]$ (and hence in the function field $\K = \F(z_1,\dots,z_n)$). So, we take the rank of the matrix over this function field $\K$. So, for some $\bp_i(\bz) \in \K^k$ and $\bq_i(\bz) \in \K^l$, we can write 
$$
L(z_1,\dots,z_n) = \sum_{i=1}^a \bp_i(\bz) \otimes \bq_i (\bz).
$$ \\ 

\item {\bf A power series decomposition:}
 Each $\bp_i(\bz) = (p_{i1}(\bz),\dots,p_{ik}(\bz))$ and $\bq_i(\bz) = (q_{i1}(\bz),\dots,q_{il}(\bz))$, where $p_{ij}(\bz),q_{ij}(\bz) \in \K$ are rational functions. Rational functions have power series expansions wherever they are defined (i.e., where the denominator doesn't vanish). Since, we have finitely many rational functions, we can choose a $\bc \in \F^n$ such that after the shift $\bz \mapsto \bz + \bc$, they are all defined at $\bo \in \F^n$. For such a $\bc$, we have power series expansions around $0$:
$$
\bp_i(\bz + \bc) = \sum_{\be \in \N^n} \bp_{i,\be} \bz^{\be} \text{ and } \bq_i(\bz + \bc) = \sum_{\bff \in \N^n} \bq_{i,\bff} \bz^{\bff},
$$
where $\bp_{i,\be} \in \F^k$ and $\bq_{i,\bff} \in \F^l$, and $\bz^{\be} = \prod_{i=1}^n z_i^{e_i}$. We refer to the $\bp_{i,\be}$'s and $\bq_{i,\bff}$'s as coefficient vectors. This gives the power series decomposition:
 
$$
L(\bz + \bc) = \sum_{i=1}^a \left(\sum_{\be} \bp_{i,\be} \bz^{\be}\right) \otimes \left(\sum_{\bff} \bq_{i,\bff} \bz^{\bff} \right).
$$

\item {\bf A finite monomial decomposition using degree of parametrization:}


Since $\phi$ is a linear map, and $\psi$ is homogenous of degree $d$, the composition $L$ is also homogenous of degree $d$, i.e., $L(\bz)$ is a matrix consisting of homogenous polynomials of degree $d$. So, $L(\bz+\bc)$ is a matrix consisting of polynomials of degree $\leq d$. Hence, if we set $I = \{ (i,\be,\bff)\ |\ \deg(\be) + \deg(\bff) \leq d, 1 \leq i \leq a\}$\footnote{Recall that we use $\deg(\be)$ as shorthand for the degree of the monomial $\bz^{\be}$.}, we get the monomial decomposition:
$$
L(\bz + \bc) = \sum_{(i,\be,\bff) \in I} \bp_{i,\be}\bz^{\be} \otimes \bq_{i,\bff}\bz^{\bff}.
$$
Now, observe that the right hand side is a finite sum since $I$ is finite.  \\ 


\item {\bf Extracting subspaces spanned by coefficient vectors:}
We claim that we can take $U_1 = \spa (\bp_{i,\be}\ | \deg(\be) \leq \lfloor\frac{d}{2}\rfloor)$ and $U_2 = \spa (\bq_{i,\bff} \ | \ \deg(\bff) \leq d - (\lfloor\frac{d}{2}\rfloor +1))$ to satisfy the hypothesis of Lemma~\ref{basic}. Indeed, take $s \in S$. For some $\balpha \in \F^n$, we have $\phi(s) = L(\balpha+\bc) = \sum_{(i,\be,\bff) \in I} \bp_{i,\be}\balpha^{\be} \otimes \bq_{i,\bff}\balpha^{\bff}$. Take one of the terms $\bp_{i,\be}\balpha^{\be} \otimes \bq_{i,\bff}\balpha^{\bff}$. Since $\deg(\be) + \deg(\bff) \leq d$, we must have either $\deg(\be)\leq \lfloor \frac{d}{2} \rfloor$ or $\deg(\bff) \leq d - (\lfloor\frac{d}{2}\rfloor +1)$. If $\deg(\be) \leq \lfloor\frac{d}{2}\rfloor$, then 
$$
\bp_{i,\be}\balpha^{\be} \otimes \bq_{i,\bff}\balpha^{\bff} = \bp_{i,\be} \otimes (\bq_{i,\bff}\balpha^{\be+ \bff}) \in \bp_{i,\be} \otimes \F^l \subseteq U_1 \otimes \F^l.
$$
 Otherwise, $\deg(\bff) \leq  d - (\lfloor\frac{d}{2}\rfloor + 1)$, and $\bp_{i,\be} \balpha^{\be} \otimes \bq_{i,\bff}\balpha^{\bff} \in \F^k \otimes U_2$ . This means that $\phi(s) = L(\balpha+\bc) \in U_1 \otimes \F^l + \F^k \otimes U_2$ as desired. Let $J$ denote the set of monomials of degree $\leq \lfloor\frac{d}{2}\rfloor$ in $n$ variables. Then the defining spanning set of $U_1$ is indexed by $\{1,2,\dots,a\} \times J$. Hence $\dim(U_1) \leq aY_{n,d}$, since $|J| = Y_{n,d}$ by definition of $Y_{n,d}$. Similarly $\dim(U_2) \leq a Z_{n,d}$. Applying Lemma~\ref{basic} gives Theorem~\ref{barr-egow-war}.
\end{enumerate}
\end{proof}

The proof of Theorem~\ref{barr-egow-ten} is similar. The parametrization of rank $1$-tensors is of degree $d$, but one can additionally observe that the parametrization is `set-multilinear'. This forces additional constraints in the finite monomial decomposition, giving a sharper bound on potency. The full details can of course be found in~\cite{EGOW18}. We also need to discuss the notion of set-multilinearity for our purposes, but we defer that discussion until necessary.

\subsection{New ideas from algebraic geometry}
In adapting this proof strategy to prove barriers for generalized rank methods, the first issue occurs in Step $(2)$. The crucial point in Step $(2)$ is that rank of matrices can be described by polynomial equations, namely the vanishing of minors. This allows one to easily prove $\rk_{\F(\bz)}L(\bz) = \max\{\rk_\F L(\balpha) \ | \ \balpha \in \F^n\}$\footnote{All this requires is that $\F$ is infinite, or even sufficiently large.}, which is what allows for the symbolic decomposition over the function field. This argument fails if we consider a $T_k$-rank method or a $W_k$-rank method because tensor rank and Waring rank are not captured by polynomial conditions. 

Further, step $(3)$ runs into trouble as without some sort of symbolic decomposition, one cannot hope for a power series decomposition. Both these issues need to be addressed, and to do so, we will turn towards algebraic geometry.



We fix these issues in two steps (which when put together give Theorem~\ref{main-ag}). Roughly speaking, the first fixes step $(2)$ and the second fixes step $(3)$. We will describe these steps, and defer the proofs to Section~\ref{Sec:gen2symb}. The key idea in the first step is that one must pass from the function field to its algebraic closure. 
\begin{proposition} \label{prop:gensymb}
Let $\F$ be an algebraically closed field. Suppose $L: \F^n \rightarrow \F^m$ is a polynomial map, and $M: \F^r \rightarrow \F^m$ is another polynomial map such that $\im(L) \subseteq \im(M)$. Let $\bz = (z_1,\dots,z_n)$ be indeterminates and $\K = \F(z_1,\dots,z_n)$. Then there are algebraic functions $b_1(\bz),\dots,b_r(\bz) \in \overline{\K} = \overline{\F(z_1,\dots,z_n)}$ such that 
$$
L(\bz) = M(b_1(\bz),\dots,b_r(\bz)).
$$
\end{proposition}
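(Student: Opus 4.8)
\textbf{Proof plan for Proposition~\ref{prop:gensymb}.}

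The plan is to interpret $\im(L) \subseteq \im(M)$ scheme-theoretically and then extract the algebraic functions $b_i(\bz)$ from the generic point of $\F^n$. First I would pass to Zariski closures: let $X = \overline{\im(M)} \subseteq \A^m$ be the closure of the image of $M$, an irreducible affine variety (irreducible because $\A^r$ is, and the image of an irreducible variety under a morphism is irreducible, hence so is its closure). By hypothesis $\im(L) \subseteq \im(M) \subseteq X$, so $L$ factors as a morphism $L: \A^n \to X$. The key object is the generic point $\eta$ of $\A^n$, whose residue field is exactly $\K = \F(z_1,\dots,z_n) = \F(\A^n)$; the map $L$ sends $\eta$ to a (not necessarily generic) point $\xi \in X$, and the induced inclusion of residue fields realizes $\F(\xi) \hookrightarrow \K$. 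Concretely, $\xi$ ``is'' the tuple $L(\bz) = (L_1(\bz),\dots,L_m(\bz))$ viewed as an $m$-tuple of elements of $\K$ satisfying all the polynomial relations that cut out $X$.

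Next I would use the fact that $M: \A^r \to X$ is a dominant morphism of irreducible varieties, so it is ``generically finite-ish'' in the following weak sense we actually need: $M$ admits a rational section over the function field, after possibly a finite extension. Precisely, consider the generic point of $X$; by generic flatness / the theorem on generic behavior of morphisms (or simply: a dominant morphism is surjective onto a dense open, and one can find a constructible section), there is a dense open $U \subseteq X$ and a finite extension involved, so that $M$ has a section defined by algebraic functions on $U$. The cleaner route: base-change $M$ along $\operatorname{Spec} \K \to X$ (the map classifying the point $\xi = L(\bz)$). This gives a $\K$-scheme $M^{-1}(\xi)$, i.e. the $\K$-variety defined by the equations $M(\by) = L(\bz)$ in variables $\by = (y_1,\dots,y_r)$. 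The point of the hypothesis $\im(L) \subseteq \im(M)$ — and here I must be slightly careful, since it is genuine image containment, not closure containment — is that this fiber is \emph{nonempty}: one needs that $\xi$ actually lies in $\im(M)$, not merely in $X = \overline{\im M}$. This does not follow formally from $\xi \in X$; it follows because $\xi$ is a point \emph{through which $L$ factors with $L$'s image inside the honest image of $M$}, and the honest image of $M$ is constructible (Chevalley), containing a dense open of $X$; a more robust argument is that for the generic point $\eta$ of $\A^n$, $L(\eta)$ lands in the intersection of $\im(M)$ with... — in any case, since $M^{-1}(\xi)$ is a nonempty $\K$-scheme of finite type, it has a point over $\overline{\K}$ (as $\overline{\K}$ is algebraically closed, Nullstellensatz). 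Such a $\overline{\K}$-point is exactly a tuple $(b_1(\bz),\dots,b_r(\bz)) \in \overline{\K}^r$ with $M(b_1,\dots,b_r) = L(\bz)$, which is the conclusion.

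I expect the main obstacle to be exactly the point flagged above: upgrading the \emph{set-theoretic image} containment $\im(L) \subseteq \im(M)$ to the statement that the specific point $\xi = L(\bz)$ (the image of the generic point) lies in $\im(M)$, so that the fiber $M^{-1}(\xi)$ is nonempty over $\K$. The subtlety is that $\im(M)$ is only constructible, not closed, so containment of all $\F$-points does not immediately give containment of the generic point $\xi$ — one must argue that the constructible set $\im(M)$, which contains a dense open $U$ of $X = \overline{\im M}$, in fact contains $\xi$ because $\xi$ is the image of the generic point of $\A^n$ and $L^{-1}(U)$ is a dense open of $\A^n$ (nonempty as $L$ is dominant onto $\overline{\im L} \subseteq X$, and $U$ dense in $X$), hence contains $\eta$. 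This is where algebraic closedness of $\F$ and characteristic zero are used implicitly (for the clean Nullstellensatz / generic smoothness statements); I would make sure each invocation of ``constructible image contains a dense open'' and ``a nonempty finite-type scheme over an algebraically closed field has a point'' is stated cleanly, since the whole proof is a short chain of these standard facts once the generic-point setup is in place.
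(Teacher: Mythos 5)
Your overall strategy is sound and in fact coincides with the alternative proof the paper sketches in the remark following its own proof of Proposition~\ref{prop:gensymb}: pass to the generic point $\eta$ of $\A^n$, show $L(\eta)\in\im(M)$, and extract a $\overline{\K}$-point of the resulting nonempty finite-type fiber. Note, however, that the paper's \emph{main} proof is different and more elementary: it assumes no solution exists over $\overline{\K}$, extracts a Nullstellensatz certificate $\sum_i f_i(\by)\,(M_i(\by)-L_i(\bz))=1$ whose data lie in a finitely generated subring $T\subseteq\overline{\K}$, and specializes $T\to\F$ to contradict solvability at some $\bbeta\in\F^n$. Also, characteristic zero and generic smoothness play no role in this proposition (they enter only in the power-series step, Proposition~\ref{prop-ps}); only algebraic closedness is needed here.

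There is a genuine gap at exactly the step you flag as the crux. You argue that $\xi=L(\eta)$ lies in $\im(M)$ because $\im(M)$ contains a dense open $U$ of $X=\overline{\im(M)}$ and ``$L^{-1}(U)$ is a dense open of $\A^n$, nonempty as $L$ is dominant onto $\overline{\im L}\subseteq X$ and $U$ is dense in $X$.'' This fails whenever $\overline{\im L}$ is a proper closed subvariety of $X$: a dense open $U\subseteq X$ may be disjoint from $\overline{\im L}$ (e.g.\ $X=\A^2$, $\overline{\im L}$ a line, $U$ its complement), in which case $L^{-1}(U)=\emptyset$ and the containment $\im(L)\subseteq\im(M)$ is carried entirely by the lower-dimensional strata of the constructible set $\im(M)$, which your argument never touches. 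The repair is to pull back to the source: by Chevalley $\im(M)$ is constructible, hence so is $C:=L^{-1}(\im(M))\subseteq\A^n$; by hypothesis $C$ contains every closed point of $\A^n$ (this is where algebraic closedness of $\F$ is used), so $\overline{C}=\A^n$, and a constructible subset that is dense in the irreducible scheme $\A^n$ must contain a nonempty --- hence dense --- open subset, which contains $\eta$. This yields $L(\eta)\in\im(M)$ directly, after which your fiber-nonemptiness and $\overline{\K}$-point argument goes through as written.
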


The need to pass to the algebraic closure is already evident in the example at the end of this section. One ought to see the above as an algebraic result in a similar vein to the implicit function theorem. However, unlike the local nature of the implicit function theorem, this statement is more global\footnote{Branches of algebraic functions can be defined over a (large) Zariski open subset of the domain, but not necessarily the whole domain.}. Further, the implicit function theorem usually requires some non-degeneracy condition to be satisfied, and this is not the case for the above result (but we do have extra structure).

Our eventual goal is really to get power series rather than algebraic functions. When $\F = \C$, we can get power series by interpreting (an appropriate branch of) an algebraic function as an analytic function. The analogous statement holds for any algebraically closed field $\F$ of characteristic zero, but formulating and proving this requires some care. In particular, the notion of analytic does not exist, so we use the notion of \'etale morphisms as a suitable replacement. Before stating the second step, we will first recall the ring of power series.

Let $\bz = (z_1,\dots,z_n)$ be a vector of indeterminates. Recall from Section~\ref{sec:prelim} that $\F[|\bz - \bc|]$ denotes the ring of power series in the variables $\bz = (z_1,\dots,z_n)$ around $\bc \in \F^n$


\begin{proposition}\label{prop-ps}
Let $\F$ be an algebraically closed field of characteristic zero. Suppose we have a finite collection of elements 
$b_1(\bz),\dots,b_r(\bz) \in \overline{\K} = \overline{\F(z_1,\dots,z_n)}$. 
Then for some choice of $\bc \in \F^n$ (a generic choice will do), we have an $\F$-algebra homomorphism
$$
\F[z_1,\dots,z_n,b_1(\bz),\dots,b_r(\bz)] \longrightarrow \F[| \bz - \bc |],
$$
which extends the canonical inclusion 
$\F[z_1,\dots,z_n] \hookrightarrow \F[| \bz -\bc |]$.
\end{proposition}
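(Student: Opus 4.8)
The plan is to realize the field $\overline{\K}$ (or at least the finitely generated subring containing $b_1,\dots,b_r$) geometrically, as the function field of an affine variety mapping to $\A^n$, and then to pull back the germ of power series at a suitably chosen point through an \'etale neighborhood. Concretely, first I would choose a finitely generated $\F$-subalgebra $R \subseteq \overline{\K}$ containing $z_1,\dots,z_n$ and $b_1(\bz),\dots,b_r(\bz)$; since $\overline{\K}$ is an algebraic extension of $\K=\F(\bz)$, each $b_i$ is a root of a monic polynomial over $\K$, so after clearing denominators we may take $R = \F[z_1,\dots,z_n][b_1,\dots,b_r]$ localized at one nonzero element $g\in\F[\bz]$ so that $R$ is a domain, finite (hence finite type) over the localization $\F[\bz]_g$. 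Geometrically, $\operatorname{Spec} R$ is an affine variety $X$ with a dominant, generically finite morphism $\pi\colon X \to \A^n$ (the one induced by $\F[\bz]_g \hookrightarrow R$), and $\F(X) \subseteq \overline{\K}$.

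Next I would invoke generic smoothness / generic \'etaleness: in characteristic zero the extension $\F(X)/\F(\bz)$ is separable, so by generic smoothness there is a dense open $U\subseteq X$ on which $\pi$ is \'etale; moreover $\pi(U)$ contains a dense open of $\A^n$. Pick a generic $\bc \in \F^n$ lying in this open image and a point $x \in U$ with $\pi(x) = \bc$ (here "generic" also lets us avoid the vanishing locus of $g$ and of the relevant discriminant). Since $\pi$ is \'etale at $x$ and the residue field is $\F$ (as $\F$ is algebraically closed and $\bc$ is a closed point), the induced map on completed local rings $\widehat{\OO}_{\A^n,\bc} \to \widehat{\OO}_{X,x}$ is an isomorphism. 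Now $\widehat{\OO}_{\A^n,\bc}$ is exactly the power series ring $\F[|\bz-\bc|]$ (the $\mathfrak{m}_{\bc}$-adic completion of $\F[\bz]$ localized at $\bc$, by the characterization of the ring of power series recalled in Section~\ref{sec:prelim}). Composing $R \hookrightarrow \OO_{X,x} \to \widehat{\OO}_{X,x} \xrightarrow{\ \sim\ } \widehat{\OO}_{\A^n,\bc} = \F[|\bz-\bc|]$ gives the desired $\F$-algebra homomorphism, and by construction it restricts to the completion map $\F[\bz] \to \F[|\bz-\bc|]$, i.e. the canonical inclusion that sends $z_i$ to $z_i$.

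The step I expect to be the main obstacle is producing the \'etale point with $\F$-rational image and verifying that the completed-local-ring isomorphism genuinely extends the \emph{canonical} inclusion $\F[z_1,\dots,z_n]\hookrightarrow\F[|\bz-\bc|]$ — i.e. that the recipe sends each $z_i$ to $z_i$ (shifted), not to some other power series with the same image. This is where characteristic zero is essential (separability, hence generic \'etaleness), and where one must be careful that the chosen preimage point $x$ is unramified and that the isomorphism $\widehat{\OO}_{\A^n,\bc}\cong\widehat{\OO}_{X,x}$ is the one induced by $\pi$ so that it is compatible with the structure map; once $\pi$ is \'etale at $x$ this compatibility is automatic, but one should check that the genericity conditions (avoiding $V(g)$, avoiding the branch locus, landing in $\pi(U)$) are simultaneously satisfiable, which holds because each excludes only a proper closed subset of $\A^n$ and $\F$ is infinite. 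A minor additional point is handling the case where some $b_i$ already lies in $\K$ or in $\F[\bz]$, which is harmless. Everything else (Noether normalization of $R$ over $\F[\bz]_g$, the identification of $\widehat{\OO}_{\A^n,\bc}$ with $\F[|\bz-\bc|]$) is standard commutative algebra and can be quoted.
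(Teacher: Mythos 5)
Your proposal is correct and follows essentially the same route as the paper: realize $R=\F[\bz,b_1,\dots,b_r]$ as an affine variety over $\A^n$, use characteristic zero to find a point where the structure morphism is \'etale, and exploit the resulting isomorphism of completed local rings $\F[|\bz-\bc|]\xrightarrow{\sim}\widehat{\OO}_{R,p}$ (inverted) to produce the homomorphism. The only difference is cosmetic: you localize $\F[\bz]$ at some $g$ to make $R$ finite over $\F[\bz]_g$ and thereby obtain relative dimension zero, whereas the paper instead notes $\dim\MSpec(R)=n$ via a transcendence degree count and invokes generic relative dimension zero for a dominant morphism; both then apply generic smoothness.
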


Let us illustrate Theorem~\ref{main-ag} in a very simple case. Let $L:\F \rightarrow \F$ be given by $L(x) = x$. Let $M:\F \rightarrow \F$ be given by $M(y) = y^2$. Then, since $L$ and $M$ are surjective, it is clear that $\im(L) \subseteq \im(M)$, i.e., the hypothesis of Theorem~\ref{main-ag} is satisfied. If $z$ is an indeterminate, then $L(z) = z = M(\sqrt{z})$. Note that $\sqrt{z} \in \overline{\F(z)}$. In particular, this demonstrates the need to pass to the algebraic closure of the function field in Proposition~\ref{prop:gensymb}. Now, consider $\sqrt{z}$. This does not have a power series around $0 \in \F$. However, it will have a power series around some other point, say $1 \in \F$ (as claimed by Proposition~\ref{prop-ps}). To get this power series expansion, we expand $\sqrt{z} = (1 + (z-1))^{1/2}$ using the well known binomial theorem, to get
$$
\sqrt{z} = 1 + \frac{1}{2}(z-1) + \frac{-1}{8} (z-1)^3 + \dots.
$$
So, we have 
$$
L(z) = z = M\left(1 + \frac{1}{2}(z-1) + \frac{-1}{8} (z-1)^3 + \dots\right).
$$
Or equivalently, we get
$$
L(z + 1) = z+1 = M\left(1 + \frac{1}{2}z + \frac{-1}{8} z^3 + \dots\right)
$$
as claimed by Theorem~\ref{main-ag}.


\section{Numeric to symbolic transfer} \label{Sec:gen2symb}
This section will be devoted to developing the necessary tools from algebraic geometry, and using them to prove the numeric to symbolic transfer statement, i.e., Theorem~\ref{main-ag}. We will begin with some basic definitions. 

Let $\F$ be an algebraically closed field. For a finitely generated $\F$-algebra $A$, we denote by $\MSpec (A)$ the corresponding affine variety (over $\F$). As a set, $\MSpec(A)$ consists of all the maximal ideals of $A$. We further give it a topology called the Zariski topology by defining which subsets are closed. A subset of $\MSpec(A)$ is closed if it is of the form $\mathbb{V}(I) = \{m \in \MSpec(A)\ |\ I \subseteq m\}$ for some ideal $I$ of $A$.

Since $\F$ is algebraically closed, there is another description of $\MSpec(A)$ as $\F$-algebra homomorphisms from $A$ to $\F$. We denote by $\Hom(A,\F)$ the set of $\F$-algebra homomorphisms from $A$ to $\F$. Indeed, consider the map $\zeta: \MSpec(A) \rightarrow \Hom(A,\F)$ defined by the canonical quotient map $m \mapsto \{\zeta(m) : A \longrightarrow A/m = \F\}$. Note that since $A$ is a finitely generated $\F$-algebra, and $\F$ is algebraically closed, there is a canonical isomorphism $A/m = \F$. In the other direction, consider the map $\eta: \Hom(A,\F) \rightarrow \MSpec(A)$ defined by $\phi \mapsto \Ker(\phi)$. We leave it to the reader to check that the two maps are inverses to each other.

\begin{lemma} \label{max-hom}
The maps $\zeta$ and $\eta$ are inverses to each other. In particular, we have a canonical bijection between $\MSpec(A)$ and $\Hom(A,\F)$.
\end{lemma}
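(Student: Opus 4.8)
The plan is to verify directly that $\zeta$ and $\eta$ are mutually inverse set maps; since both $\MSpec(A)$ and $\Hom(A,\F)$ are plain sets at this point, there is nothing to check beyond the pointwise identities $\eta\circ\zeta = \mathrm{id}$ and $\zeta\circ\eta = \mathrm{id}$. The only genuine input is the standard fact that when $A$ is a finitely generated $\F$-algebra and $\F$ is algebraically closed, every maximal ideal $m$ has residue field $A/m$ equal to $\F$ (this is the weak Nullstellensatz / Zariski's lemma); I would invoke this to make sense of $\zeta(m)$ as a homomorphism $A \to \F$ rather than merely $A \to A/m$.

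First I would unwind $\zeta\circ\eta$. Given $\phi \in \Hom(A,\F)$, set $m = \eta(\phi) = \Ker(\phi)$. Since $\phi$ is a nonzero $\F$-algebra homomorphism into the field $\F$, its image is an $\F$-subalgebra of $\F$, hence all of $\F$, so $\phi$ is surjective and $m$ is a maximal ideal (the first isomorphism theorem gives $A/m \cong \F$). Then $\zeta(m)$ is by definition the composite $A \twoheadrightarrow A/m \xrightarrow{\sim} \F$, where the isomorphism $A/m \cong \F$ is the canonical one. The point is that $\phi$ itself factors as $A \twoheadrightarrow A/m \xrightarrow{\bar\phi} \F$ with $\bar\phi$ an $\F$-algebra isomorphism, and any $\F$-algebra map $\F \to \F$ is the identity, so $\bar\phi$ coincides with the canonical identification; hence $\zeta(\eta(\phi)) = \phi$.

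Next I would unwind $\eta\circ\zeta$. Given $m \in \MSpec(A)$, the map $\zeta(m): A \to A/m = \F$ is the quotient map, whose kernel is exactly $m$; thus $\eta(\zeta(m)) = \Ker(\zeta(m)) = m$. This direction is essentially immediate. Combining the two computations shows $\zeta$ and $\eta$ are inverse bijections, and in particular the bijection is canonical since it was built only from quotient maps and the canonical residue-field identification.

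I do not expect a real obstacle here: the statement is a bookkeeping lemma. The one spot that deserves a sentence of care is the claim that $A/m \cong \F$ canonically, which is where Zariski's lemma enters and where the hypotheses "finitely generated" and "algebraically closed" are used; everything else is formal manipulation with the first isomorphism theorem and the rigidity of $\F$-algebra endomorphisms of $\F$.
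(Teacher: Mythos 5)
Your proof is correct, and it fills in a verification the paper explicitly leaves to the reader (the text preceding Lemma~\ref{max-hom} says ``We leave it to the reader to check that the two maps are inverses to each other''). So there is no ``paper's own proof'' to compare against; the expected argument is exactly what you wrote. You correctly isolate the two nontrivial inputs: Zariski's lemma (weak Nullstellensatz) to get the canonical identification $A/m = \F$, which is where finite generation and algebraic closedness are used, and the rigidity of $\F$-algebra endomorphisms of $\F$ to conclude that $\bar\phi$ agrees with the canonical residue-field isomorphism in the $\zeta\circ\eta$ direction. The $\eta\circ\zeta$ direction is immediate as you say. Nothing further is needed.
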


Suppose $A,B$ are finitely generated $\F$-algebras with an $\F$-algebra homomorphism $\iota: A \rightarrow B$. Then this gives a map $\iota^*: \Hom(B,\F) \rightarrow \Hom(A,\F)$ by $\phi \mapsto \phi \circ \iota$. Using the above lemma, we will also think of $\iota^*$ as a map from $\MSpec(B)$ to $\MSpec(A)$, and this is continuous with respect to the Zariski topology.




\subsection{Symbolic decomposition in terms of algebraic functions}
We will prove Proposition~\ref{prop:gensymb} in this subsection. The proof will be based on Hilbert's nullstellensatz.

\begin{proof} [Proof of Proposition~\ref{prop:gensymb}]
Let $L_i$ (resp. $M_i$) denote the coordinate functions of $L$, i.e., $L = (L_1,\dots,L_m)$ (resp. $M = (M_1,\dots,M_m))$. Let $\by = (y_1,\dots,y_r)$ be a vector of indeterminates. The hypothesis can be interpreted as follows -- for all $\balpha =  (\alpha_1,\dots,\alpha_n) \in \F^n$, the system of equations $\{L_i(\alpha_1,\dots,\alpha_n) = M_i(y_1,\dots,y_r)\}_{1\leq i \leq m}$ has a solution. 

Assume for the sake of contradiction that there are no $b_i(\bz) \in \overline{\K}$ such that $L(z_1,\dots,z_n) = M(b_1(\bz),\dots,b_r(\bz))$. This means that the system of equations 
$$
\{L_i(z_1,\dots,z_n) = M_i(y_1,\dots,y_r)\}_{1\leq i \leq m}
$$
has no solution. Just to be clear, we interpret these as $m$ equations in the indeterminates $y_1,\dots,y_r$ and coefficients in $\overline{\K}$. In other words, the zero locus of the collection of polynomials $\{M_i(\by) - L_i(\bz)\}_{1 \leq i \leq m} \subseteq \overline{\K}[y_1,\dots,y_r]$ is empty. By Hilbert's nullstellensatz, we get that
$$
\sum_i f_i(\by) \cdot (M_i(\by) - L_i(\bz)) = 1 \ (\text{in } \overline{\K}[y_1,\dots,y_r]),
$$
for some $f_i \in \overline{\K}[y_1,\dots,y_r]$. Write each $f_i = \sum_{\be \in \N^r} f_{i,\be}\by^{\be}$. Let $T := \F[z_1,\dots,z_n,f_{i,\be} \forall i,\be]$, which is a finitely generated ring.

Observe that the above equality can be intepreted in $T[y_1,\dots,y_r] \subseteq \overline{\K}[y_1,\dots,y_r]$.  Take any $\F$-algebra homomorphism $\phi:T \rightarrow \F$. That such a homomorphism exists is a consequence of Lemma~\ref{max-hom} and the fact that maximal ideals always exist. We can extend $\phi$ to a map $T[y_1,\dots,y_r] \rightarrow \F[y_1,\dots,y_r]$ which we will also call $\phi$ by abuse of notation. 

Let $\phi(z_i) = \beta_i \in \F$, and let $\bbeta = (\beta_1,\dots,\beta_n) \in \F^n$.  By applying $\phi$ to the above equality, we get
$$
\sum_i \phi(f_i) (M_i(\by) - L_i(\bbeta)) = 1 \ (\text{in } \F[y_1,\dots,y_r]) .
$$
which again by Hilbert's nullstellensatz means that the system of equations $\{L_i(\beta_1,\dots,\beta_n) = M_i(y_1,\dots,y_r)\}_{1\leq i \leq m}$ has no solution, which contradicts the hypothesis.
\end{proof}

\begin{remark}
An alternate proof using more modern algebro-geometric language is as follows. Consider $L$ as map of schemes rather than varieties, i.e., $L: \A^n \rightarrow \A^m$, where $\A^m$ defines the $m$-dimensional affine space (over $\F$). Similarly, consider $M$ also as a map of schemes. The hypothesis then tells us that $L(p) \in \im(M)$ for every closed point $p \in \A^n$. From this, one deduces that for the generic point $\eta \in \A^n$, $L(\eta) \in \im(M)$. This means that the fiber $M^{-1}(L(\eta))$ is non-empty. The symbolic vector $L(z_1,\dots,z_n)$ can be interpreted as a $\overline{\K}$-point lying over $\eta$. Since $M^{-1}(L(\eta))$ is non-empty, one can deduce that there is a closed $\overline{\K}$-point of $\A^r$ which is sent to $L(z_1,\dots,z_n)$ by $M$ (this uses that $\overline{\K}$ is algebraically closed). This just means that there is $(b_1(\bz),\dots,b_r(\bz)) \in (\overline{\K})^r$ such that 
$$
L(z_1,\dots,z_n) = M(b_1(\bz),\dots,b_r(\bz)).
$$
\end{remark}

\subsection{Power series representations of algebraic functions}
This subsection will be devoted to proving Proposition~\ref{prop-ps}. The intuition for the result is as follows. We want to give power series for each of the $b_i(\bz)$'s. Roughly speaking, power series around some point are analytic functions in some small (analytic) neighbourhood. So, we want to interpret all the $b_i(\bz)$ as analytic functions locally. The $b_i(\bz)$'s are in $\overline{\K}$ are not `functions' in $n$ variables -- at best they can be interpreted as `multi-valued' functions. Take for example the algebraic function $\sqrt{z}$ discussed at the end of Section~\ref{sec:new ideas}. At any non-zero point, there are two possible values for $\sqrt{z}$, and there is no canonical choice\footnote{In this particular case, there is a canonical choice over $\R$, but for us $\F$ is an algebraically closed field.}. On the other hand, there is a natural algebraic variety on which the $b_i(\bz)$'s are naturally functions. This variety is $\MSpec(R)$ where $R = \F[z_1,\dots,z_n,b_1(\bz),\dots,b_r(\bz)]$. 

Now, that the $b_i(\bz)$'s have been interpreted as functions on some algebraic variety, we observe that there is a morphism of varieties $f: \MSpec(R) \rightarrow \F^n$ given by the inclusion $\F[z_1,\dots,z_n] \hookrightarrow R$. Using the map $f$, we want to `push down' the functions $b_i(\bz)$ to functions on $\F^n$ locally. If we can do this, then we can interpret the $b_i(\bz)$ as functions in some small analytic neighbourhood of $\F^n$, which gives them power series. 

\begin{center}
\includegraphics[scale=0.5]{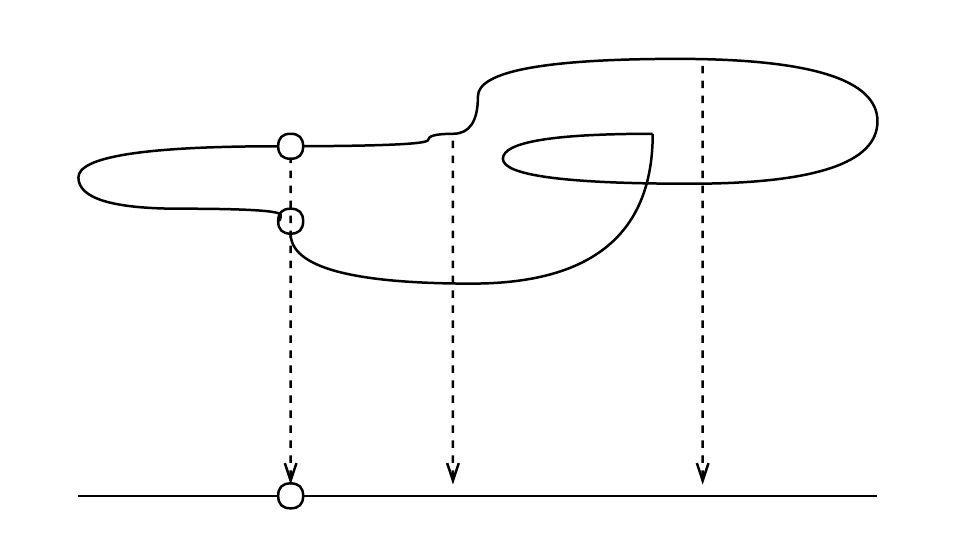}
\end{center}

The (toy) picture to have in the mind is the above one where the map $f$ is pictorially represented by projecting down (along the dotted arrows). The top curved picture represents $\MSpec(R)$ and the line at the bottom represents $\F^n$. Around most points in the domain, the map $f$ is an (analytic) isomorphism in some small local neighbourhood (in the analytic topology). So, using such a local isomorphism, we can interpret the $b_i(\bz)$'s as analytic functions in a small neighbourhood of $\F^n$, thereby giving them a power series.

Of course, as we are working with algebraic varieties, we do not have the analytic topology, but rather the very coarse Zariski topology. The correct notion to fix this issue is the notion of \'etale morphisms. We refer the reader to \cite{Stacks, Hartshorne, Ravi-Vakil} for standard results.

We will now proceed to give a rigorous proof of Proposition~\ref{prop-ps}. As usual, let $\bz = (z_1,\dots,z_n)$ denote indeterminates. In this section, $\F$ will be an algebraically closed field of characteristic zero. We note that the characteristic zero assumption seems to be crucial.

\begin{lemma}
Define the ring $R = \F[z_1,\dots,z_n,b_1(\bz),\dots,b_r(\bz)] \subseteq \overline{\K}$. Then $\dim(\MSpec(R)) = n$.
\end{lemma}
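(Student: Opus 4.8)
The plan is to exploit two facts: that $R$ is a finitely generated $\F$-algebra sitting inside $\overline{\K}$, and that it contains the polynomial ring $\F[z_1,\dots,z_n]$. The Krull dimension of a finitely generated $\F$-algebra that is an integral domain equals the transcendence degree over $\F$ of its fraction field (this is the standard dimension theory of affine varieties over a field). So I would first observe that $R$ is a domain, being a subring of the field $\overline{\K}$, and compute $\operatorname{trdeg}_\F \operatorname{Frac}(R)$.

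First I would note the chain of field extensions $\F \subseteq \K = \F(z_1,\dots,z_n) \subseteq \operatorname{Frac}(R) \subseteq \overline{\K}$. Since $\operatorname{Frac}(R)$ is squeezed between $\K$ and its algebraic closure $\overline{\K}$, the extension $\operatorname{Frac}(R)/\K$ is algebraic; indeed each generator $b_i(\bz)$ is algebraic over $\K$ by definition of $\overline{\K}$. Consequently $\operatorname{trdeg}_\F \operatorname{Frac}(R) = \operatorname{trdeg}_\F \K = n$, because transcendence degree is additive in towers and the top step contributes $0$. Hence $\dim(\MSpec(R)) = \operatorname{trdeg}_\F \operatorname{Frac}(R) = n$.

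For the converse direction of the inequality (or as an alternative self-contained argument), I would invoke the inclusion $\F[z_1,\dots,z_n] \hookrightarrow R$, which realizes $R$ as an integral extension of a polynomial ring — integral precisely because each $b_i$ is algebraic over $\K$, hence (after clearing denominators, which lie in $\F[z_1,\dots,z_n]$) satisfies a polynomial relation whose leading coefficient can be absorbed, so $R$ is a finite, hence integral, extension of the localization of $\F[z_1,\dots,z_n]$ at the relevant denominators; in any case the fraction fields have the same transcendence degree. By going-up / the fact that integral extensions preserve Krull dimension, $\dim R = \dim \F[z_1,\dots,z_n] = n$. Either route gives the claim.

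The only mild subtlety — not really an obstacle — is making precise that $R$ is finitely generated as stated (it is, by construction: finitely many $z_i$ together with finitely many $b_j$), and that "algebraic over $\K$" transfers to "integral over $\F[z_1,\dots,z_n]$ after inverting a single polynomial"; this is the usual trick of multiplying a monic-over-$\K$ relation by the product of denominators. None of this requires characteristic zero, so the lemma is genuinely elementary; the characteristic-zero hypothesis is only needed later (for the étale/power-series step in Proposition~\ref{prop-ps}), not here.
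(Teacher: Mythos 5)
Your primary argument is exactly the paper's proof: note $R$ is a finitely generated $\F$-algebra and a domain (being a subring of the field $\overline{\K}$), invoke the fact that Krull dimension equals transcendence degree of the fraction field for affine domains over a field, and then observe $\K \subseteq \operatorname{Frac}(R) \subseteq \overline{\K}$ forces $\operatorname{trdeg}_\F \operatorname{Frac}(R) = n$. Your remark that characteristic zero plays no role here (it enters only later, for generic smoothness and the \'etale step) is also correct.

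One caveat on your ``alternative'' route: $R$ need not be integral over $\F[z_1,\dots,z_n]$ itself. For instance with $n=1$ and $b_1 = 1/z$ (which lies in $\overline{\F(z)}$), the ring $R = \F[z,1/z]$ is not integral over $\F[z]$. What you can say is that $R[1/f]$ is integral over $\F[z_1,\dots,z_n][1/f]$ after inverting the product $f$ of leading coefficients, but then concluding $\dim R = \dim R[1/f]$ for an affine domain already invokes the transcendence-degree equality, so this route is not genuinely independent of your first argument. As you yourself hedge with ``in any case the fraction fields have the same transcendence degree,'' the transcendence-degree computation is doing the real work in either version, and it is the same as the paper's.
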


\begin{proof}
We have an inclusion $\F[z_1,\dots,z_n] \subseteq R \subseteq \overline{\K}$. Thus $R$ is an integral domain and a finitely generated $\F$-algebra. Thus, its Krull dimension is equal to its transcendence degree. But we know $\F(z_1,\dots,z_n) \subseteq {\rm Frac(R)} \subseteq \overline{\K}$, so the transcendence degree is $n$. Thus $\dim(\MSpec(R)) = n = \dim(\MSpec(\F[z_1,\dots,z_n])$. 
\end{proof}

Recall that an inclusion of rings $\F[z_1,\dots,z_n] \hookrightarrow R$ gives a dominant map\footnote{This just means that the image is dense.} $f: \MSpec(R) \rightarrow \F^n = \MSpec(\F[z_1,\dots,z_n])$. We say that the map $f$ is \'etale at a point $p \in \MSpec(R)$ if it is smooth at $p$ and relative dimension zero at $p$, i.e., the fiber $f^{-1}(f(p))$ is $0$-dimensional. We refer to the standard sources \cite{Stacks, Hartshorne, Ravi-Vakil} for the definition of smoothness of a morphism.

\begin{lemma}
There is a point $p \in \MSpec(R)$ at which $f$ is \'etale. 
\end{lemma}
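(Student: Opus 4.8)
The goal is to show that the dominant map $f: \MSpec(R) \to \F^n$, where $R = \F[z_1,\dots,z_n,b_1(\bz),\dots,b_r(\bz)] \subseteq \overline{\K}$, is \'etale at some point. The plan is to use the characteristic zero assumption crucially, via generic smoothness, together with the fact (established in the previous lemma) that $\dim \MSpec(R) = n = \dim \F^n$, so that $f$ is a dominant morphism between integral varieties of the \emph{same} dimension.

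First I would invoke generic smoothness (a standard consequence of generic flatness plus the Jacobian criterion in characteristic zero): since $f: \MSpec(R) \to \F^n$ is a dominant morphism of integral schemes over a field of characteristic zero, there is a nonempty (hence dense) Zariski open subset $U \subseteq \F^n$ such that the restriction $f^{-1}(U) \to U$ is smooth. Here one should note $f^{-1}(U)$ is nonempty because $f$ is dominant, so we may pick a point $p \in f^{-1}(U)$; then $f$ is smooth at $p$. The second step is to check the relative dimension at $p$ is zero, i.e. that the fiber $f^{-1}(f(p))$ is $0$-dimensional. This is exactly where equality of dimensions enters: a smooth morphism has fibers of pure dimension $\dim \MSpec(R) - \dim \F^n$ over the relevant component (more precisely, the local relative dimension at $p$ equals $\dim_p f^{-1}(f(p))$, and for a dominant morphism of irreducible varieties of equal dimension the generic fiber dimension is $n - n = 0$). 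To be safe one can shrink $U$ further so that all fibers over $U$ have dimension $0$: by upper semicontinuity of fiber dimension and the fact that the generic fiber of $f$ is $0$-dimensional (as $\dim R = n$ and $R$ is a domain, the function field extension $\F(\bz) \hookrightarrow \operatorname{Frac}(R)$ is algebraic, hence finite, so the generic fiber is finite), the locus where the fiber dimension is $0$ contains a dense open subset of $\F^n$; intersect this with the smooth locus. Then at any $p$ lying over a point of this refined $U$, the map $f$ is both smooth and of relative dimension zero, i.e. \'etale at $p$.

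The main obstacle — really the only subtle point — is making sure the characteristic zero hypothesis is being used correctly and that the two conditions (smoothness and relative dimension zero) can be simultaneously arranged; the danger in positive characteristic would be an inseparable function field extension $\F(\bz) \hookrightarrow \operatorname{Frac}(R)$, which would force $f$ to be ramified everywhere and never smooth of relative dimension zero, matching the remark in the text that characteristic zero is crucial. In characteristic zero this cannot happen: the extension is automatically separable, so generic smoothness applies and the argument goes through cleanly. I would also remark, for the benefit of the reader, that once $f$ is \'etale at $p$ one gets the desired local picture — an \'etale-local (or formal-local) isomorphism between a neighborhood of $p$ and a neighborhood of $f(p)$ in $\F^n$ — which is precisely what is needed to push the functions $b_i(\bz)$ down and obtain their power series expansions around a suitable $\bc \in \F^n$, thereby setting up the proof of Proposition~\ref{prop-ps}.
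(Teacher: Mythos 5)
Your proof is correct and uses the same two ingredients as the paper: generic smoothness (valid by the characteristic-zero hypothesis) to get a dense locus where $f$ is smooth, and the equality $\dim \MSpec(R) = n$ to get a dense locus where the relative dimension is zero, then intersect. The only cosmetic difference is that you state generic smoothness over a dense open in the target together with upper semicontinuity of fiber dimension, whereas the paper cites generic smoothness on the source (\cite[Thm 25.3.1]{Ravi-Vakil}) together with openness of the relative-dimension-zero locus (\cite[Prop.~11.4.1]{Ravi-Vakil}); these are equivalent formulations and both are fine here since $R$ is a domain, so $\MSpec(R)$ is generically regular.
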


\begin{proof}
Since we are in characteristic zero, we have generic smoothness on the source\footnote{This is a highly non-trivial result, and the analog of Sard's theorem.}, see for example \cite[Thm 25.3.1]{Ravi-Vakil}. This means that there is a dense open subset $U \subseteq \MSpec(R)$ such that $f$ is smooth on $U$. Further, since $f$ is dominant, there is an open subset $V \subseteq \MSpec(R)$ where $f$ has relative dimension zero (see~\cite[Proposition~11.4.1]{Ravi-Vakil}). Let $p \in U \cap V$. Then $f$ is smooth at $p$ with relative dimension zero at $p$, i.e., $f$ is \'etale at $p$.
\end{proof}


For a point $p \in \MSpec(R)$, we denote by $\OO_{R,p} := R_{p}$ the local ring at $p$. Roughly speaking the local ring at $p$ is the ring of functions which are polynomial in some small neighbourhood of $p$. We denote by $\widehat{\OO}_{R,p}$ the completion of $\OO_{R,p}$ with respect to the maximal ideal $p\OO_{R,p}$\footnote{For a ring $R$, its completion with respect to an ideal $I$ is the inverse limit $\widehat{T} := \lim\limits_{\leftarrow} R/I^n$.}. We have a canonical homomorphism from $R \rightarrow \OO_{R,p} \rightarrow \widehat{\OO}_{R,p}$.

\begin{lemma}
Let $p \in \MSpec(R)$ be a point at which $f$ is \'etale. Let $f(p) = \bc \in \F^n$. 
Then the inclusion $\F[z_1,\dots,z_n] \hookrightarrow R$ gives an isomorphism on the completions of local rings 
$$\F[| \bz-\bc |] \xrightarrow\sim \widehat{\OO}_{R,p}.$$
\end{lemma}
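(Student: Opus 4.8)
The plan is to exploit the standard fact that an étale morphism induces an isomorphism on completed local rings. Recall the setup: $f: \MSpec(R) \to \F^n = \MSpec(\F[z_1,\dots,z_n])$ corresponds to the inclusion $\F[z_1,\dots,z_n] \hookrightarrow R$, and $p \in \MSpec(R)$ is a point at which $f$ is étale, with $f(p) = \bc$. First I would identify $\widehat{\OO}_{\F^n,\bc}$, the completion of the local ring of $\F^n$ at the point $\bc$: since $\F^n = \MSpec(\F[z_1,\dots,z_n])$ and $\bc$ corresponds to the maximal ideal $\mathfrak{m}_\bc = (z_1 - c_1,\dots,z_n-c_n)$, the completion of $\F[z_1,\dots,z_n]_{\mathfrak m_\bc}$ with respect to $\mathfrak m_\bc$ is precisely the formal power series ring $\F[|z_1-c_1,\dots,z_n-c_n|] = \F[|\bz-\bc|]$. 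This is a routine computation (the associated graded ring of the polynomial ring localized at a $\F$-rational point is a polynomial ring, so the completion is the power series ring). So the target of the claimed isomorphism is literally $\widehat{\OO}_{\F^n,\bc}$.

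Next I would invoke the key structural result about étale morphisms: if $f: X \to Y$ is étale at a point $p$ with $f(p) = q$, then the induced map on completed local rings $\widehat{\OO}_{Y,q} \to \widehat{\OO}_{X,p}$ is an isomorphism. This is standard — see for instance the Stacks Project or \cite{Ravi-Vakil} — and follows from the characterization of étale morphisms as those which are flat and unramified, together with the fact that unramified plus the residue field extension being trivial (which holds here since $\F$ is algebraically closed, so both residue fields are $\F$) forces $\widehat{\OO}_{X,p}$ to be a finite flat $\widehat{\OO}_{Y,q}$-algebra of rank $1$, hence equal to $\widehat{\OO}_{Y,q}$. Applying this with $X = \MSpec(R)$, $Y = \F^n$, $q = \bc$, and using the identification of the previous paragraph, we get the isomorphism $\F[|\bz-\bc|] \xrightarrow{\sim} \widehat{\OO}_{R,p}$, and by construction it is the map extending the inclusion $\F[z_1,\dots,z_n] \hookrightarrow R$ (both sides receive compatible maps from $R$, or rather from $\F[z_1,\dots,z_n]$ and $R$ respectively, and the isomorphism is $\F[z_1,\dots,z_n]$-linear).

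I expect the main obstacle — really the only non-bookkeeping point — to be making sure the hypotheses of the étale-implies-iso-on-completions statement are exactly met and citing it cleanly: specifically, that "étale at $p$" in the sense used here (smooth at $p$ of relative dimension zero at $p$) coincides with the notion of étale under which the completed-local-ring statement is usually proved (flat + unramified, or equivalently flat + $\Omega_{X/Y}$ vanishing at $p$), and that the residue field at $p$ is indeed $\F$ (which it is, since $R$ is a finitely generated $\F$-algebra that is a domain and $\F$ is algebraically closed, so $R/p = \F$ by the Nullstellensatz, as noted in Lemma~\ref{max-hom}). Once that identification of the two notions of étale is in hand, the proof is essentially a one-line citation plus the identification of $\widehat{\OO}_{\F^n,\bc}$ with $\F[|\bz-\bc|]$. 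I would also remark that the characteristic-zero hypothesis is not needed for this particular lemma (it was used earlier only to produce the étale point $p$ via generic smoothness), so I would phrase the proof to depend only on "$f$ étale at $p$" and "$\F$ algebraically closed."
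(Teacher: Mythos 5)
Your proposal matches the paper's proof essentially verbatim: both cite the standard fact that an étale morphism induces an isomorphism on completed local rings (the paper references \cite[Theorem~1.6]{JTnotes} and \cite[Proposition~17.6.3]{EGA4}), and both then identify $\widehat{\OO}_{\F^n,\bc}$ with $\F[|\bz-\bc|]$. The extra remarks you make about the residue field being $\F$ and the equivalence of characterizations of étale are correct and reasonable supporting observations, but not a different route.
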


\begin{proof}
It is a well known result that if $f$ is \'etale  at $p$, then we have an isomorphism on the completions of the local rings (at $p$ and $f(p)$), see for example \cite[Theorem~1.6]{JTnotes} or \cite[Proposition~17.6.3]{EGA4}.  The completion of the local ring at $\bc \in \F^n$ is $\F[| \bz-\bc |]$. Thus we get an isomorphism 
$$
\F[| \bz - \bc|] \xrightarrow\sim   \widehat {\OO}_{R,p}.
$$
\end{proof}

\begin{proof} [Proof of Proposition~\ref{prop-ps}]
The above discussion can be summarized in the following commutative diagram.
\begin{center}
\begin{tikzcd}
\F[\bz] \arrow[hookrightarrow,r] \arrow[hookrightarrow,d]
& R \arrow[d] \\
\F[| \bz-\bc |] \arrow[r, "\sim" ]
& \widehat{\OO}_{R,p}
\end{tikzcd}
\end{center}

By inverting the lower horizontal isomorphism, we get the following commuting diagram, which is all we wanted.

\begin{center}
\begin{tikzcd}
\F[\bz] \arrow[hookrightarrow,r] \arrow[hookrightarrow,d]
& R \arrow[dl] \\
\F[| \bz-\bc |]
\end{tikzcd}
\end{center}
\end{proof}

\begin{proof} [Proof of Theorem~\ref{main-ag}]
From Proposition~\ref{prop:gensymb}, we get $L(z_1,\dots,z_n) = M(b_1(\bz),\dots,b_r(\bz)).$ Then, we can apply the homomorphism $\F[z_1,\dots,z_n,b_1(\bz),\dots,b_r(\bz)] \longrightarrow \F[| \bz - \bc |]$ given by Proposition~\ref{prop-ps}. This replaces the $b_i(\bz)$ by power series around $\bc$ for each $i$. Then, applying the shift $\bz \mapsto \bz + \bc$, we get the required conclusion. 
\end{proof}

\begin{proof} [Proof of Corollary~\ref{cor:ag}]
This is a special case of Theorem~\ref{main-ag}. To see this, we only need to find the right $M$. Take $M : ((\F^m)^k)^a \rightarrow \Ten(m,k)$ given by 
$$
M((\bv^{(1)}_1,\dots,\bv^{(k)}_1),\dots,(\bv^{(1)}_a,\dots,\bv^{(k) }_a)) = \sum_{i=1}^a \bv^{(1)}_i \otimes \bv^{(2)}_i \otimes \dots \otimes \bv^{(k)}_i,
$$
where all $\bv_i^{(j)} \in \F^m$. Observe that $M$ parametrizes the set of all tensors of rank $\leq a$. Hence, we have $\im(L) \subseteq \im(M)$. Thus, we can apply Theorem~\ref{main-ag} and deduce the required result.
\end{proof}

\section{Tensor rank lower bound methods}\label{sec:tensor-rank}

In this section, we will prove upper bounds on the potency of $T_k$-rank methods. In the previous sections, we have discussed the necessary algebraic geometry that allows us to parallel the arguments in \cite{EGOW18}.

Let us first collect some general statements. Let $S \subseteq V$ be a spanning set (of simples). Suppose we have a  linear map $\phi: V \rightarrow \Ten(m,k)$, i.e., a $T_k$-rank method.

\begin{definition}
Suppose $U \subseteq \Ten(m,k)$ is a linear subspace. Then we define
$$
r(U) := \max \{ \trk(T) \ | T \in U\}.
$$
\end{definition}

The following lemma is straightforward, and is in the same spirit as Lemma~\ref{basic}.

\begin{lemma}\label{basic2}
Suppose $\phi(S) \subseteq \sum_{\I} U_{\I}$ for a finite collection of linear subspaces $U_{\I}$, then $\phi(V) \subseteq \sum_{\I} U_{\I}$. Thus, we have
$$
\mu_\phi(V) \leq \sum_{\I} r(U_{\I}).
$$
\end{lemma}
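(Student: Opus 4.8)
\textbf{Proof plan for Lemma~\ref{basic2}.}

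The plan is to mimic exactly the argument that established Lemma~\ref{basic}, with the subspace $U_1\otimes\F^l+\F^k\otimes U_2$ replaced by the general sum $\sum_{\I}U_{\I}$. The first observation I would make is the elementary but essential one: a finite sum of linear subspaces $\sum_{\I}U_{\I}\subseteq\Ten(m,k)$ is itself a linear subspace, hence closed under addition. This is the only structural feature of the ``good'' set we need — it plays the role that ``matrices of the displayed block form are closed under addition'' played in the matrix-rank sketch of Section~\ref{Sec:pfsketch}.

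Next I would unwind the definitions. We are given $\phi(S)\subseteq\sum_{\I}U_{\I}$. Take an arbitrary $v\in V$. Since $S$ spans $V$, we can write $v=s_1+s_2+\dots+s_r$ for some $s_1,\dots,s_r\in S$ (here I would note that $S$ being a spanning set of the vector space $V$ is exactly what licenses writing every element as a finite sum of elements of $S$ — scalar multiples of simples may be needed, but for $S$-rank the relevant $S$ is already closed under the scalings that arise, and in any case $\phi$ is linear so $\phi(\lambda s)=\lambda\phi(s)$ lies in the same subspace $\sum_{\I}U_{\I}$). Applying the linear map $\phi$ gives $\phi(v)=\phi(s_1)+\dots+\phi(s_r)$. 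Each $\phi(s_i)\in\sum_{\I}U_{\I}$ by hypothesis, and since $\sum_{\I}U_{\I}$ is a linear subspace it contains this sum; hence $\phi(v)\in\sum_{\I}U_{\I}$. As $v$ was arbitrary, $\phi(V)\subseteq\sum_{\I}U_{\I}$.

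For the final inequality I would invoke subadditivity of tensor rank together with the definition of $r(U)$. Any $T\in\sum_{\I}U_{\I}$ can be written as $T=\sum_{\I}T_{\I}$ with $T_{\I}\in U_{\I}$, so $\trk(T)\leq\sum_{\I}\trk(T_{\I})\leq\sum_{\I}r(U_{\I})$, using $\trk(T_{\I})\leq r(U_{\I})$ by definition. Therefore $\mu_\phi(V)=\max\{\trk(\phi(v))\mid v\in V\}\leq\max\{\trk(T)\mid T\in\sum_{\I}U_{\I}\}\leq\sum_{\I}r(U_{\I})$, which is the claimed bound. I do not anticipate a genuine obstacle here: the statement is a clean abstraction of the matrix-rank argument, and the only point requiring a moment's care is the bookkeeping around the spanning property of $S$ and the need for scalar multiples, which is harmless because $\phi$ is linear. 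The real work of the paper lies elsewhere — in producing, via Corollary~\ref{cor:ag}, a good collection of subspaces $U_{\I}$ whose ranks $r(U_{\I})$ sum to something close to $n^{\lfloor(k-1)d/k\rfloor}$; this lemma is merely the mechanism that converts such a collection into a potency bound.
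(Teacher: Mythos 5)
Your proof is correct and is exactly the ``straightforward'' argument the paper has in mind (the paper in fact omits the proof, remarking only that the lemma is in the same spirit as Lemma~\ref{basic}): closure of a finite sum of subspaces under addition carries the containment from $\phi(S)$ to $\phi(V)$ via linearity of $\phi$, and sub\-additivity of tensor rank together with the definition of $r(U_{\I})$ gives the final bound. Your side remark that scalar multiples are harmless because $\phi$ is linear is the right bookkeeping point, and nothing further is needed.
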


Now, let us consider some special subspaces. 
\begin{definition} \label{basic-sub}
A subspace $U \subseteq \Ten(m,k)$ is called {\em basic} if there exist subspaces $U_i \subseteq \F^m$ for $1 \leq i \leq k$ such that  
$$
U = U_1 \otimes U_2 \otimes \dots \otimes U_k \subseteq \Ten(m,k).
$$
\end{definition}

\begin{lemma}\label{basic2r}
Suppose $U = U_1 \otimes U_2 \otimes \dots \otimes U_k \subseteq \Ten(m,k)$ is a basic subspace. Then for any $p$ such that $1 \leq p \leq k$, we have
$$
r(U) \leq \prod_{i \neq p} \dim(U_i).
$$

\end{lemma}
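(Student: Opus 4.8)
The plan is to show that any tensor $T \in U = U_1 \otimes \dots \otimes U_k$ can be written as a sum of at most $\prod_{i\neq p}\dim(U_i)$ simple tensors. The key observation is that tensoring a space of the form $U_1 \otimes \dots \otimes U_k$ against a bounded number of "slots" collapses once we fix a basis of all but the $p$-th factor. Concretely, first I would choose a basis $\{\bu^{(i)}_1,\dots,\bu^{(i)}_{d_i}\}$ of $U_i$ for each $i \neq p$, where $d_i = \dim(U_i)$. Any $T \in U$ can then be expanded as
$$
T = \sum_{(j_i)_{i\neq p}} \left(\bigotimes_{i<p} \bu^{(i)}_{j_i}\right) \otimes \bw_{(j_i)} \otimes \left(\bigotimes_{i>p} \bu^{(i)}_{j_i}\right),
$$
where the sum ranges over all tuples $(j_i)_{i\neq p}$ with $1 \le j_i \le d_i$, and each $\bw_{(j_i)} \in U_p \subseteq \F^m$ is some vector (the coefficient of that basis tensor in the other slots). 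This expansion is just the statement that $\{\bigotimes_{i<p}\bu^{(i)}_{j_i} \otimes \bigotimes_{i>p}\bu^{(i)}_{j_i}\}$ spans (a space containing) the relevant projections of $T$; since $T$ lies in $U_1\otimes\cdots\otimes U_k$, the coefficient vectors indeed land in $U_p$, though for the rank bound we only need them in $\F^m$.

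The point is that each summand $\left(\bigotimes_{i<p}\bu^{(i)}_{j_i}\right) \otimes \bw_{(j_i)} \otimes \left(\bigotimes_{i>p}\bu^{(i)}_{j_i}\right)$ is a simple (rank $\le 1$) tensor, being a tensor product of $k$ vectors in $\F^m$. Hence $\trk(T)$ is at most the number of tuples, which is exactly $\prod_{i\neq p} d_i = \prod_{i\neq p}\dim(U_i)$. Since this holds for every $T \in U$, we get $r(U) = \max\{\trk(T) \mid T \in U\} \le \prod_{i\neq p}\dim(U_i)$, as desired.

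I don't expect any real obstacle here — this is a routine multilinear-algebra fact, the tensor analogue of the observation in Section~\ref{Sec:pfsketch} that a matrix whose rows (or columns) lie in a small subspace has small rank. The only point to be slightly careful about is bookkeeping: making sure the expansion above is valid for an arbitrary element of $U_1\otimes\cdots\otimes U_k$ (not just decomposable ones), which follows by linearity since decomposable tensors span and both sides of the claimed identity are linear in $T$. One could alternatively phrase the argument via the flattening that groups the $p$-th factor against the rest, reducing to the matrix statement, but the direct basis expansion is cleanest and avoids invoking Example~\ref{flattening}.
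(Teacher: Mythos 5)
Your proof is correct, and the paper itself gives no proof of this lemma — it is presented as a routine multilinear-algebra fact, which is exactly what your basis-expansion argument supplies. Choosing a basis $\{\bu^{(i)}_j\}_{j\le d_i}$ of each $U_i$ for $i\neq p$ (and, if one likes, also of $U_p$), expanding $T$ in the induced basis of $U_1\otimes\cdots\otimes U_k$, and then regrouping over the $p$-th index gives precisely the displayed decomposition with $\prod_{i\neq p}d_i$ simple summands, whence $\trk(T)\le\prod_{i\neq p}\dim(U_i)$ for every $T\in U$.

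One small caveat on the remark at the end: the flattening route is not quite a one-line ``reduction to the matrix statement.'' A bound on the \emph{matrix} rank of the flattening of $T$ is in general a \emph{lower} bound on $\trk(T)$, not an upper bound, so quoting the matrix inequality alone does not finish the job. What does work is to note that the column space of the flattening $\M_{m^{k-1},m}$ (grouping the non-$p$ slots) lies inside the image of $U_1\otimes\cdots\widehat{U_p}\cdots\otimes U_k$, which admits a basis of simple tensors of size $\prod_{i\neq p}d_i$; writing $M_T$ as a combination of these simple column vectors tensored with vectors in $\F^m$ and unflattening gives the desired rank-one decomposition of $T$. That is, after unwinding, the flattening argument is the same as your basis expansion, so your choice to present the direct version is the right one.
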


The idea for barriers is as follows. We will be given a rank method $\phi$ with $\mu_\phi(S) = a$. We will find a collection of basic subspaces $U_{\I}$ that satisfy the hypothesis of Lemma~\ref{basic2}. This will give an upper bound on $\mu_\phi(V)$ in terms of $r(U_{\I})$'s, which in turn can be computed by Lemma~\ref{basic2r}. Since $\Pot(\phi) = \frac{\mu_\phi(V)}{a}$, the upper bound on $\mu_\phi(V)$ gives an upper bound on potency as well. So, all that is left to do is to exhibit the required collections of basic subspaces in the cases that we need to prove.

\subsection{Set multi-grading}
We will collect some notation and basic facts regarding set multi-degree that we will use subsequently to prove our main barrier results. Let $\bz = (\bz_1, \ldots, \bz_d)$ be a set of variables, where each set
$\bz_i = (z_{i1}, \ldots, z_{in})$ corresponds to the $i^{th}$ set of 
variables in $\bz$. We can identify the set of monomials in $\bz$ with their exponent vectors
$\be \in (\N^n)^d$, that is $\bz^\be \leftrightarrow \be$, and we will do so whenever convenient.

We will define a set multi-grading on the polynomial ring $\F[\bz]$, which is an $\N^d$-valued grading. This grading will be a refinement of the grading by (total) degree. The indeterminate $z_{ij}$ will be set multi-homogeneous, and its set multi-degree will be $\smdeg(z_{ij}) = \bdelta_i = (0,\dots,0,\underbrace{1}_{i},0,\dots,0)$ (i.e., a 1 in the $i^{th}$ position). Hence, any monomial $\bz^{\be} = \prod_{i,j} z_{ij}^{e_{ij}}$ will have
$$
\smdeg(\bz^\be) = \smdeg(\be) = \sum_{i,j} e_{ij} \bdelta_i = (\sum_{j} e_{1j}, \sum_j e_{2j}, \dots, \sum_j e_{dj}).
$$ 

We now give the formal definition.
\begin{definition}[Set multi-grading]
We define an $\N^d$-grading on $\F[\bz]$ called the set multi-grading by writing $\F[\bz] = \oplus_{\bff \in \N^d} \F[\bz]_{\bff}$, where $\F[\bz]_{\bff}$ is spanned by monomials $\bz^\be$ such that $(\sum_{j} e_{1j}, \sum_j e_{2j}, \dots, \sum_j e_{dj})= \bff$. For a set multi-homogeneous polynomial $q(\bz) \in \F[\bz]_{\bff}$, we write $\smdeg(q(\bz)) = \bff$, and call this the set multi-degree of $q(\bz)$.
\end{definition}

Define a partial order on $\N^d$ given by $(a_1,\dots,a_d) \preceq (b_1,\dots,b_d)$ if 
$a_i \leq b_i$ for $1 \leq i \leq d$ (in which case we say $(a_1,\dots,a_d)$ is smaller than 
$(b_1,\dots,b_d)$).

{\bf Zero-one vectors and subsets.} 
A zero-one vector in $\N^d$ is a vector whose coordinates are either zero or one. There is a natural correspondence between zero-one vectors in $\N^d$ with subsets of $[d]$. We will now make this precise. First, given a zero-one vector in $\N^d$, we define its support subset.

\begin{definition} [Support subset] \label{supp.subset}
Given a zero-one vector $\bff = (f_1,\dots,f_d) \in \N^d$, we define its support subset 
$\supp(\bff) := \{i \in [d]\ |\ f_i \neq 0\} =  \{i \in [d]\ |\ f_i = 1\} \subseteq [d]$
\end{definition}

In the other direction, we define indicator vector to a subset. 

\begin{definition} [Indicator vector] \label{ind.vector}
Given a subset $J \subseteq [d]$, 
we define its indicator vector $\bdelta_J $ to be the zero-one vector which has a $1$ in the $i^{th}$ position if and 
only if $i \in J$. 
\end{definition}

\begin{align*}
\left\{\text{Zero-one vectors } \in \N^d \right\} & \longleftrightarrow  \left\{\text{subsets of } [d] \right\} \\
\bff & \longrightarrow  \supp(\bff) \\
\bdelta_J &\longleftarrow J
\end{align*}

The above correspondence takes a zero-one vector to its support subset, and in the other direction takes a subset to its indicator vector. Moreover, using the partial order $\preceq$ defined above, we also get:
$$
\left\{\bff \in \N^d\ | \ \bff \preceq (1,1,\dots,1) \right\} = \left\{\text{Zero-one vectors } \in \N^d \right\} \longleftrightarrow \left\{\text{subsets of } [d] \right\}.
$$

{\bf Set Partitions.}
A set partition of $[d] := \{1,2,\dots,d\}$ into $k$ sets is a tuple of subsets $\I = (I_1,\dots,I_k)$, 
where $I_1 \sqcup I_2 \sqcup \dots \sqcup I_k = [d]$. Let $\SP(d,k)$ denote all set partitions of 
$[d]$ into $k$ sets. Note that $|\SP(d,k)| = k^d$.

Using the dictionary between subsets and zero-one vectors, we get the following correspondence.

$$
\left\{ \bff^{(1)},\dots,\bff^{(k)} \in \N^d \ | \ \sum_j \bff^{(j)} = (1,1,\dots,1) \right\} \longleftrightarrow \SP(d,k)
$$

\subsection{Barriers for tensor rank lower bounds}
This whole section parallels the proof sketch of Theorem~\ref{barr-egow-war} given earlier in Section~\ref{sec:new ideas}. There are a few differences. The first is that we need to consider the more refined notion of set multi-degree as opposed to the usual notion of (total) degree. The second is that we add a homogenization step, which exists in the original proof in \cite{EGOW18} (but was not strictly necessary). In this case, however, the homogenization step significantly simplifies the proof, so we include it. Finally, we replace the second and third steps of the proof sketch of Theorem~\ref{barr-egow-war} in one go, by using Corollary~\ref{cor:ag}. 

We point out explicitly the analogous steps: Equation~(\ref{L-power series}) is the power series decomposition, and Equation~(\ref{L-truncated}) is the finite monomial decomposition. From this monomial decomposition, we extract subspaces spanned by coefficient vectors (Lemma~\ref{bubbly}). Finally, we compute the upper bound on potency we get from applying Lemma~\ref{basic2r}.

Let $\psi: (\F^n)^{\times d} \rightarrow \Ten(n,d)$ be the parametrization of $S$ (the set of rank $1$ tensors) given by $(\bv_1,\dots,\bv_d) \mapsto \bv_1 \otimes \bv_2 \otimes \dots \otimes \bv_d$, where $\bv_i \in \F^n$. Let $\bz = (\bz_1, \ldots, \bz_d)$ be a set of variables, where each set
$\bz_i = (z_{i1}, \ldots, z_{in})$ corresponds to the $i^{th}$ set of 
variables in $\bz$. Let $\phi: \Ten(n,d) \rightarrow \Ten(m,k)$ be a linear map, i.e., a $T_k$-rank method. Let $L = \phi \circ \psi: (\F^n)^{\times d} \rightarrow \Ten(m,k)$.

Observe that $\psi(\bz)$ is a tensor whose entries are set multi-homogeneous polynomials (in $\bz$) of set multi-degree $(1,1,\dots,1) \in \N^d$. Since $\phi$ is linear, $L(\bz)$ is also a tensor whose entries are set multi-homogeneous polynomials (in $\bz$) of set multi-degree $(1,1,\dots,1)$.

Let $a = \mu_\phi(S) = \max\{\rk(\phi(s)) \ |\ s \in S\}$. Observe that $\im(L) = \phi(S)$, so $\trk(L(\bbeta)) \leq a$ for all $\bbeta \in (\F^n)^{\times d}$. So, we can apply Corollary~\ref{cor:ag} to get a power series decomposition:

\begin{equation} \label{L-power series}
L(\bz+\bc) = \sum_{i=1}^a \bp_i^{(1)} (\bz) \otimes \bp_i^{(2)}(\bz) \otimes \dots \otimes \bp_i^{(k)}(\bz),
\end{equation}
for some $\bc \in (\F^n)^{\times d}$, where $\bp_i^{(j)}(\bz)$ are power series (around $\bo$).  Write each 
$$
\bp_i^{(j)}(\bz) = \sum_{\be \in (\N^n)^{d}} \bp_{i,\be}^{(j)} \bz^\be.
$$

Observe that we have:

$$
L(\bz) = L(\bz+\bc)_{(1,1,\dots,1)}= \sum_{i=1}^a (\bp_i^{(1)} (\bz) \otimes \bp_i^{(2)}(\bz) \otimes \dots \otimes \bp_i^{(k)}(\bz))_{(1,1,\dots,1)}.
$$

So, we can consider the relevant terms to get a monomial decomposition.

\begin{equation} \label{L-truncated}
L(\bz) = \sum_{i=1}^a \left(\sum_{\begin{array}{c} \be^{(1)},\dots,\be^{(k)} \\ \sum \smdeg( \be^{(j)}) = (1,1,\dots,1) \end{array}} \bp^{(1)}_{i,{\be^{(1)}}} \otimes \bp^{(2)}_{i,{\be^{(2)}}} \otimes \dots \otimes \bp^{(k)}_{i,{\be^{(k)}}} (\bz^{\be^{(1)} + \dots +\be^{(k)}})  \right)
\end{equation}


We will show that $\im(L) = \phi(S)$ is contained in a union of certain basic subspaces. To describe these basic subspaces we need to introduce a bit of notation.



Recall that $\SP(d,k)$ denotes the set of all partitions of $[d]$ into 
$k$ sets. Also recall that for any subset $I \subseteq [d]$, we define ${\bdelta}_I \in \N^d$ to be its indicator vector.
For $\I = (I_1,\dots,I_k) \in \SP(d,k)$ and $1 \leq i \leq a$, we define
$$
\mathcal{C}^i_{\I} := \mathcal{C}^i_{I_1} \otimes \mathcal{C}^i_{I_2} \otimes \dots \otimes \mathcal{C}^i_{I_k},
$$
where for all $j$, we have
$$
\mathcal{C}^i_{I_j} := \spa( \bp^{(j)}_{i,\be}  \mid \smdeg(\be) = \bdelta_{I_j} ) \subseteq \F^m.
$$

The following lemma is the crucial observation that $\phi(S)$ is contained in the sum of all the basic subspaces $\mathcal{C}^i_{\I}$ for all $\I \in \SP(d,k)$. 

\begin{lemma} \label{bubbly}
We have $\phi(S) = \im(L) \subseteq \sum_{i=1}^a \sum_{\I \in \SP(d,k)} \mathcal{C}^i_{\I}.$
\end{lemma}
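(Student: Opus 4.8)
\textbf{Proof proposal for Lemma~\ref{bubbly}.}

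The plan is to take an arbitrary element $\phi(s) \in \phi(S) = \im(L)$ and show it lies in $\sum_{i=1}^a \sum_{\I \in \SP(d,k)} \mathcal{C}^i_{\I}$, by evaluating the monomial decomposition~(\ref{L-truncated}) at a point and carefully bookkeeping which basic subspace each resulting term lands in. Concretely, since $\im(L) = \phi(S)$, any $\phi(s)$ equals $L(\balpha)$ for some $\balpha \in (\F^n)^{\times d}$; after the shift we may write $\phi(s) = L(\balpha) $ where $\balpha$ is substituted into~(\ref{L-truncated}). (More precisely, Corollary~\ref{cor:ag} gives the decomposition of $L(\bz + \bc)$, so $\phi(s) = L(\bz+\bc)$ evaluated at the appropriate point; for clarity of exposition I would rename variables so that~(\ref{L-truncated}) holds with $L(\bz)$ as written, the set-multidegree-$(1,\dots,1)$ component already extracted.) Thus $\phi(s)$ is a finite sum, over $i \in \{1,\dots,a\}$ and over tuples $(\be^{(1)},\dots,\be^{(k)})$ with $\sum_j \smdeg(\be^{(j)}) = (1,1,\dots,1)$, of terms $\bp^{(1)}_{i,\be^{(1)}} \otimes \dots \otimes \bp^{(k)}_{i,\be^{(k)}}$ scaled by a scalar $\balpha^{\be^{(1)}+\dots+\be^{(k)}}$.

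The key combinatorial step is the following: for each tuple $(\be^{(1)},\dots,\be^{(k)})$ appearing in the sum, the condition $\sum_j \smdeg(\be^{(j)}) = (1,1,\dots,1)$ forces each $\smdeg(\be^{(j)})$ to be a zero-one vector in $\N^d$, and these $k$ zero-one vectors sum to $(1,\dots,1)$; by the dictionary recalled in the excerpt (Section on set partitions), this is exactly the data of a set partition $\I = (I_1,\dots,I_k) \in \SP(d,k)$ with $\bdelta_{I_j} = \smdeg(\be^{(j)})$. Indeed, if some coordinate of some $\smdeg(\be^{(j)})$ were $\geq 2$, or if the supports of two of them overlapped, the sum could not equal $(1,\dots,1)$ coordinatewise; and the supports must cover all of $[d]$. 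Hence for that tuple, $\be^{(j)}$ satisfies $\smdeg(\be^{(j)}) = \bdelta_{I_j}$, so $\bp^{(j)}_{i,\be^{(j)}} \in \mathcal{C}^i_{I_j} = \spa(\bp^{(j)}_{i,\be} \mid \smdeg(\be) = \bdelta_{I_j})$, and therefore
$$
\bp^{(1)}_{i,\be^{(1)}} \otimes \dots \otimes \bp^{(k)}_{i,\be^{(k)}} \in \mathcal{C}^i_{I_1} \otimes \dots \otimes \mathcal{C}^i_{I_k} = \mathcal{C}^i_{\I}.
$$
Scaling by the scalar $\balpha^{\be^{(1)}+\dots+\be^{(k)}}$ keeps us in the same subspace, so each term of the decomposition of $\phi(s)$ lies in some $\mathcal{C}^i_{\I}$, and summing gives $\phi(s) \in \sum_{i,\I} \mathcal{C}^i_{\I}$. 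Since $s$ was arbitrary, $\phi(S) \subseteq \sum_{i=1}^a \sum_{\I \in \SP(d,k)} \mathcal{C}^i_{\I}$.

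I do not expect a serious obstacle here — the lemma is essentially a bookkeeping exercise once Corollary~\ref{cor:ag} is in hand, mirroring Step (5) of the proof sketch of Theorem~\ref{barr-egow-war}. The one point requiring care is the indexing: making sure that the set-multidegree constraint $\sum_j \smdeg(\be^{(j)}) = (1,\dots,1)$ is correctly propagated from the extraction of the $(1,\dots,1)$-graded component of $L(\bz+\bc)$ in~(\ref{L-truncated}), and that the correspondence ``$k$-tuples of zero-one vectors summing to $(1,\dots,1)$'' $\leftrightarrow$ $\SP(d,k)$ is applied correctly (each valid tuple gives exactly one set partition $\I$ and the $j$-th block is $\supp(\smdeg(\be^{(j)}))$). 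A minor subtlety worth a sentence is that the power series $\bp^{(j)}_i(\bz)$ are genuine (infinite) power series, but only finitely many coefficient vectors $\bp^{(j)}_{i,\be}$ with $\smdeg(\be)$ a zero-one vector can contribute to the $(1,\dots,1)$-component, so all sums in sight are finite and the subspaces $\mathcal{C}^i_{I_j}$ are finite-dimensional — this is what makes the subsequent dimension count (via Lemma~\ref{basic2} and Lemma~\ref{basic2r}) meaningful.
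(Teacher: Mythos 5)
Your proof is correct and matches the paper's own argument essentially line for line: plug $\balpha$ into the monomial decomposition~(\ref{L-truncated}), note that the constraint $\sum_j \smdeg(\be^{(j)}) = (1,\dots,1)$ forces each $\smdeg(\be^{(j)})$ to be a zero-one vector and the supports to form a set partition $\I \in \SP(d,k)$, and then each term $\bp^{(1)}_{i,\be^{(1)}} \otimes \cdots \otimes \bp^{(k)}_{i,\be^{(k)}}$ (times a scalar) lands in $\mathcal{C}^i_{\I}$. Your parenthetical worry about $L(\bz)$ versus $L(\bz+\bc)$ is already handled by the paper: (\ref{L-truncated}) is stated for $L(\bz)$ because it is the extracted $(1,\dots,1)$-set-multidegree component of $L(\bz+\bc)$.
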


\begin{proof}
For any $\balpha \in \F^{nd}$, we need to show that $L(\balpha) \in \sum_{i=1}^a \sum_{\I \in \SP(d,k)} \mathcal{C}^i_{\I}.$ Fix $\balpha \in \F^{nd}$. Plug in $\bz = \balpha$ into equation~(\ref{L-truncated}). This gives $L(\balpha)$ as a sum of many terms. It suffices to show that each term is in $\mathcal{C}^i_{\I}$ for some $\I \in \SP(d,k)$ and $1 \leq i \leq a$. To this end, take some term in the sum
$$
t = \bp^{(1)}_{i,{\be^{(1)}}} \otimes \bp^{(2)}_{i,\be^{(2)}} \otimes \dots \otimes \bp^{(k)}_{i,\be^{(k)}} (\balpha^{\be^{(1)} + \dots + \be^{(k)}}),
$$
with  $\sum_j \smdeg(\be^{(j)}) = (1,1,\dots,1)$.
Thus, all we need to do is to produce $\I$ such that $t \in \mathcal{C}_{\I}^i$. First, observe that $\sum_j \smdeg (\be^{(j)}) =  (1,1,\dots,1)$ means that $\smdeg ( \be^{(j)}) \in \N^d$ is a zero-one vector for all $j$. Recall that zero-one vectors correspond to subsets, and that the subset corresponding to a zero-one vector is called the support subset (see Definition~\ref{supp.subset}). Let $I_j  = \supp(\smdeg(\be^{(j)}) \subseteq [d]$ be the support subset of $\smdeg(\be^{(j)})$. It follows from $\sum_j \smdeg (\be^{(j)}) =  (1,1,\dots,1)$ that $I_1 \sqcup I_2 \sqcup \dots \sqcup I_k = [d]$. Thus, $\I = (I_1,\dots,I_k) \in \SP(d,k)$.

For all $j$, we have that $\bdelta_{I_j} = \smdeg(\be^{(j)})$ by definition of $I_j$. Hence, $\bp^{(j)}_{i,\be^{(j)}} \in \mathcal{C}^i_{I_j}$. Thus, the term $t \in \mathcal{C}^i_{\I}$ as required. Note that $\balpha^{\be^{(1)} + \dots + \be^{(k)}}$ is just a constant.
\end{proof}

Combining Lemma~\ref{basic2} with the above lemma, we get the following corollary.

\begin{corollary} \label{alia}
We have
$$
\Pot(\phi) = \frac{\mu_\phi(V)}{a} \leq \frac{\displaystyle\sum_{i=1}^a \left(\sum_{\I \in \SP(d,k)} r(\mathcal{C}^i_{\I})\right)}{a} . 
$$
\end{corollary}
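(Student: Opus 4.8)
The plan is to simply assemble the two preceding lemmas. We have fixed a $T_k$-rank method $\phi:\Ten(n,d)\rightarrow \Ten(m,k)$ with $a = \mu_\phi(S)$, and Lemma~\ref{bubbly} already establishes the inclusion $\phi(S) = \im(L) \subseteq \sum_{i=1}^a \sum_{\I \in \SP(d,k)} \mathcal{C}^i_{\I}$. First I would observe that $\SP(d,k)$ is a finite set (of cardinality $k^d$) and that the index $i$ ranges over the finite set $\{1,\dots,a\}$, so the family $\{\mathcal{C}^i_{\I}\}_{1\le i\le a,\ \I\in\SP(d,k)}$ is a \emph{finite} collection of linear subspaces of $\Ten(m,k)$; this is exactly the setting in which Lemma~\ref{basic2} applies.

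Next I would invoke Lemma~\ref{basic2} with this collection: the inclusion $\phi(S)\subseteq \sum_{i,\I}\mathcal{C}^i_{\I}$ upgrades to $\phi(V)\subseteq \sum_{i,\I}\mathcal{C}^i_{\I}$, and moreover $\mu_\phi(V) \le \sum_{i=1}^a\sum_{\I\in\SP(d,k)} r(\mathcal{C}^i_{\I})$. It then remains only to divide both sides by $a$ and recall that $\Pot(\phi) = \mu_\phi(V)/\mu_\phi(S) = \mu_\phi(V)/a$ by definition, which produces exactly the claimed inequality.

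I expect essentially no obstacle here: the corollary is a bookkeeping consequence of Lemma~\ref{bubbly} (the step with real content, which itself rests on Corollary~\ref{cor:ag}) together with the abstract subadditivity Lemma~\ref{basic2}. The only point worth a word of care is that $r(\mathcal{C}^i_{\I})$ denotes the maximal tensor rank over the \emph{entire} subspace $\mathcal{C}^i_{\I}$ rather than over the piece of $\phi(S)$ landing in it; but this over-counting is harmless, since it only enlarges the right-hand side, and each $\mathcal{C}^i_{\I}$ is a basic subspace, so $r(\mathcal{C}^i_{\I})$ will subsequently be controlled via Lemma~\ref{basic2r} — that estimate is what will eventually yield the explicit numeric bound of Theorem~\ref{TtoT}.
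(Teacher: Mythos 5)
Your proposal is correct and matches the paper's argument exactly: the paper likewise obtains this corollary by combining Lemma~\ref{bubbly} with Lemma~\ref{basic2} and dividing by $a=\mu_\phi(S)$. Nothing further is needed.
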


Now, we have just one computation left.

\begin{lemma}\label{lem:tensor-count}
We have
$$
\sum_{\I \in \SP(d,k)} r(\mathcal{C}^i_{\I}) \leq k^d n^{\lfloor \frac{(k-1)d}{k} \rfloor}.
$$
\end{lemma}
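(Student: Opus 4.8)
The plan is to bound, for a fixed $i$, the quantity $\sum_{\I \in \SP(d,k)} r(\mathcal{C}^i_{\I})$ by bounding each term $r(\mathcal{C}^i_{\I})$ via Lemma~\ref{basic2r} and then summing. Recall that $\mathcal{C}^i_{\I} = \mathcal{C}^i_{I_1} \otimes \dots \otimes \mathcal{C}^i_{I_k}$ is a basic subspace, so Lemma~\ref{basic2r} gives, for \emph{any} choice of $p \in [k]$, the bound $r(\mathcal{C}^i_{\I}) \leq \prod_{j \neq p} \dim(\mathcal{C}^i_{I_j})$. The key preliminary estimate is a bound on $\dim(\mathcal{C}^i_{I_j})$: since $\mathcal{C}^i_{I_j}$ is spanned by the coefficient vectors $\bp^{(j)}_{i,\be}$ with $\smdeg(\be) = \bdelta_{I_j}$, its dimension is at most the number of exponent vectors $\be \in (\N^n)^d$ with $\smdeg(\be) = \bdelta_{I_j}$. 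An exponent vector with this set multi-degree has, for each coordinate $\ell \in I_j$, exactly one unit of degree distributed among the $n$ variables $z_{\ell 1},\dots,z_{\ell n}$ (so $n$ choices), and zero degree in all coordinates outside $I_j$. Hence $\dim(\mathcal{C}^i_{I_j}) \leq n^{|I_j|}$.

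Now fix $\I = (I_1,\dots,I_k) \in \SP(d,k)$. Choosing $p$ in Lemma~\ref{basic2r} to be the index maximizing $|I_p|$, and using $\dim(\mathcal{C}^i_{I_j}) \leq n^{|I_j|}$, we get
$$
r(\mathcal{C}^i_{\I}) \leq \prod_{j \neq p} n^{|I_j|} = n^{\sum_{j \neq p} |I_j|} = n^{d - |I_p|} = n^{d - \max_j |I_j|}.
$$
Since $\sum_j |I_j| = d$ and there are $k$ parts, the largest part has size $\max_j |I_j| \geq \lceil d/k \rceil$, so $d - \max_j |I_j| \leq d - \lceil d/k \rceil = \lfloor (k-1)d/k \rfloor$. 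Therefore $r(\mathcal{C}^i_{\I}) \leq n^{\lfloor (k-1)d/k \rfloor}$ for every $\I \in \SP(d,k)$. Summing over all $\I \in \SP(d,k)$ and using $|\SP(d,k)| = k^d$ gives
$$
\sum_{\I \in \SP(d,k)} r(\mathcal{C}^i_{\I}) \leq k^d \cdot n^{\lfloor (k-1)d/k \rfloor},
$$
which is exactly the claimed bound.

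I do not expect a serious obstacle here; the statement is essentially a counting lemma once Lemma~\ref{basic2r} is in hand. The one point requiring a little care is the dimension bound $\dim(\mathcal{C}^i_{I_j}) \leq n^{|I_j|}$: one must observe that only the coordinates of $\be$ lying in the blocks indexed by $I_j$ can be nonzero (those outside contribute $0$ to $\smdeg$, forcing those entries of $\be$ to vanish), and within each such block the total degree is forced to be exactly $1$, giving precisely $n$ monomials per block and $n^{|I_j|}$ overall. After that, the only subtlety is choosing the dropped index $p$ in Lemma~\ref{basic2r} to be the one with the largest block, and invoking the pigeonhole bound $\max_j |I_j| \geq \lceil d/k\rceil$ to convert the exponent into the uniform bound $\lfloor (k-1)d/k\rfloor$ independent of $\I$.
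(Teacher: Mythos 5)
Your proposal is correct and follows essentially the same route as the paper: bound $\dim(\mathcal{C}^i_{I_j}) \leq n^{|I_j|}$ by counting exponent vectors with the prescribed set multi-degree, apply Lemma~\ref{basic2r} dropping the largest block, and multiply by $|\SP(d,k)| = k^d$. Your explicit pigeonhole step ($\max_j |I_j| \geq \lceil d/k \rceil$, hence $d - \max_j |I_j| \leq \lfloor (k-1)d/k \rfloor$) is exactly the inequality the paper asserts in passing.
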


\begin{proof}
We  have $|\SP(d,k)| = k^d$. So, it suffices to show that for each $r(\mathcal{C}^i_{\I}) \leq n^{\lfloor \frac{(k-1)d}{k} \rfloor}$ for all $\I \in \SP(d,k)$. We do this as follows. First, note that $\dim(\mathcal{C}^i_{I_j}) \leq n^{|I_j|} = $ number of monomials whose $\smdeg$ is $\bdelta_{I_j}$. 

Let $r$ be such that $|I_r| \geq |I_j|$ for all $j$. Then $\sum_{j \neq r} |I_j| \leq \lfloor \frac{(k-1)d}{k} \rfloor$. Thus, we have
$$
r(\mathcal{C}^i_{\I}) \leq \prod_{j \neq r} \dim(\mathcal{C}^i_{I_j}) \leq \prod_{j \neq r} n^{|I_j|} = n^{\sum_{j \neq r} |I_j|} \leq n^{\lfloor \frac{(k-1)d}{k} \rfloor}
$$
as required.
\end{proof}

\begin{proof} [Proof of Theorem~\ref{TtoT}]
This follows from combining Lemma~\ref{lem:tensor-count} with Corollary~\ref{alia}.
\end{proof}

\subsection{Barriers for Waring rank lower bounds}
This subsection follows a completely identical strategy to the previous one. The only difference is that we do not use the notion of set multi-grading, but the usual grading on polynomials given by total degree.

For this section, let $\psi: \F^n = P(n,1) \rightarrow P(n,d)$ denote the map $\ell \mapsto \ell^d$ for $\ell \in P(n,1)$. Then $\psi$ parametrizes the simples in this case, i.e, $\im(\psi) = S = \{\ell^d\ |\ \ell \in P(n,1)\}$. Let $\phi: P(n,d) \rightarrow \Ten(m,k)$ be a linear map, i.e., a $T_k$-rank method. Let $a = \max\{\trk(\phi(s)) \ |\ s \in S\} = \mu_\phi(S)$. Let $L = \phi \circ \psi: \F^n \rightarrow P(n,d)$. 

Note that $\psi$ is a homogeneous polynomial map of degree $d$, and $\phi$ is linear. So, $L$ is a homogeneous polynomial map of degree $d$. By Corollary~\ref{cor:ag}, we have the power series decomposition
$$
L(\bz+\bc) = \sum_{i=1}^a \bp_i^{(1)} (z) \otimes \bp_i^{(2)}(z) \otimes \dots \otimes \bp_i^{(k)}(z)
$$
for some $\bc \in \F^n$, and $\bp_i^{(j)}$ are power series around $\bo$. Write each $\bp_i^{(j)}(\bz) = \sum_{\be \in \N^n} \bp_{i,\be}^{(j)} \bz^\be.$ We have
$$
L(\bz) = L(\bz+\bc)_{d}= \sum_{i=1}^a (\bp_i^{(1)} (\bz) \otimes \bp_i^{(2)}(\bz) \otimes \dots \otimes \bp_i^{(k)}(\bz))_{d}.
$$

Hence, we get the finite monomial decomposition:

\begin{equation} \label{L-truncated2}
L(\bz) = \sum_{i=1}^a \left(\sum_{\begin{array}{c} \be^{(1)},\dots,\be^{(k)} \\ \sum \deg \be^{(i)} = d \end{array}} \bp^{(1)}_{i,\be^{(1)}} \otimes \bp^{(2)}_{i, \be^{(2)}} \otimes \dots \otimes \bp^{(k)}_{i,\be^{(k)}} (\bz^{\be^{(1)} + \dots + \be^{(k)}})  \right)
\end{equation}

Let $\Pp(d,k) = \{\mu = (\mu_1,\dots,\mu_k) \ |\ \sum_i \mu_i = d\}$ denote the set of ordered $k$-partitions of $d$. For $\mu \in \Pp(d,k)$, let us define
$$
\mathcal{C}^i_{\mu} = \mathcal{C}^i_{\mu_1} \otimes \mathcal{C}^i_{\mu_2} \otimes \dots \otimes \mathcal{C}^i_{\mu_k},
$$
where
$$
\mathcal{C}^i_{\mu_j} = \spa (\bp_{i,\be} \ |\ \deg \be = \mu_j\}.
$$

\begin{lemma}
We have 
$$
\phi(S) = \im(L) \subseteq \sum_{i=1}^a \sum_{\mu \in \Pp(d,k)} \mathcal{C}^i_{\mu}.
$$
\end{lemma}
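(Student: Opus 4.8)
The plan is to mirror the proof of Lemma~\ref{bubbly} almost verbatim, replacing the set multi-grading bookkeeping by ordinary total-degree bookkeeping. Concretely, I would start from the finite monomial decomposition~(\ref{L-truncated2}), where $L(\bz) = \phi\circ\psi(\bz)$ is homogeneous of degree $d$ in the single set of variables $\bz = (z_1,\dots,z_n)$, and observe that $\im(L) = \phi(S)$ since $\psi$ parametrizes $S$. So it suffices to show that for every $\balpha \in \F^n$, the vector $L(\balpha)$ lies in $\sum_{i=1}^a \sum_{\mu \in \Pp(d,k)} \mathcal{C}^i_{\mu}$.

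First I would fix $\balpha \in \F^n$, substitute $\bz = \balpha$ into~(\ref{L-truncated2}), and reduce to showing that each individual summand lies in some $\mathcal{C}^i_\mu$. A typical summand has the form
$$
t = \bp^{(1)}_{i,\be^{(1)}} \otimes \bp^{(2)}_{i,\be^{(2)}} \otimes \dots \otimes \bp^{(k)}_{i,\be^{(k)}} \cdot \balpha^{\be^{(1)} + \dots + \be^{(k)}},
$$
with $\sum_{j=1}^k \deg(\be^{(j)}) = d$. I would then set $\mu_j := \deg(\be^{(j)})$, so that $\mu = (\mu_1,\dots,\mu_k)$ satisfies $\sum_j \mu_j = d$, hence $\mu \in \Pp(d,k)$. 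By the definition of $\mathcal{C}^i_{\mu_j} = \spa(\bp_{i,\be} \mid \deg\be = \mu_j)$ (here the factor index $j$ on $\bp^{(j)}$ matches the tensor slot), each vector $\bp^{(j)}_{i,\be^{(j)}}$ lies in $\mathcal{C}^i_{\mu_j}$, and since $\balpha^{\be^{(1)}+\dots+\be^{(k)}}$ is merely a scalar, $t \in \mathcal{C}^i_{\mu_1} \otimes \dots \otimes \mathcal{C}^i_{\mu_k} = \mathcal{C}^i_\mu$. Summing over all terms gives $L(\balpha) \in \sum_{i=1}^a \sum_{\mu \in \Pp(d,k)} \mathcal{C}^i_\mu$, and since $\balpha$ was arbitrary and $\im(L) = \phi(S)$, this yields the claim.

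There is essentially no serious obstacle here: the only thing to be careful about is the translation between the homogeneity statement and the monomial decomposition, i.e. that $L(\bz+\bc)$ has all entries of degree $\le d$ so that its degree-$d$ part $L(\bz+\bc)_d = L(\bz)$ is captured by the finite index set $\{(\be^{(1)},\dots,\be^{(k)}) : \sum_j \deg\be^{(j)} = d\}$ — this is exactly what~(\ref{L-truncated2}) records, and it relies on $\psi$ being homogeneous of degree $d$ and $\phi$ linear, both already noted. Beyond that the argument is a routine finiteness-and-indexing check identical in structure to Lemma~\ref{bubbly}, just without the set-multidegree refinement, so I would keep the write-up to a few lines and then proceed (as in the tensor case) to bound $\Pot(\phi)$ via Lemma~\ref{basic2} and a counting lemma computing $\sum_{\mu \in \Pp(d,k)} r(\mathcal{C}^i_\mu)$.
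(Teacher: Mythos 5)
Your proposal is correct and takes exactly the same approach as the paper: the paper itself only says ``This is similar to Lemma~\ref{bubbly}, so we omit the details,'' and you have supplied precisely those omitted details by replacing the set multi-degree bookkeeping of Lemma~\ref{bubbly} with total-degree bookkeeping, setting $\mu_j := \deg(\be^{(j)})$ to produce the required ordered $k$-partition of $d$. You also correctly flag (and fix) the implicit superscript $(j)$ on $\bp^{(j)}_{i,\be}$ in the definition of $\mathcal{C}^i_{\mu_j}$, which matches the $j$-th tensor slot.
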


\begin{proof}
This is similar to Lemma~\ref{bubbly}, so we omit the details.
\end{proof}

Combining with Lemma~\ref{basic2}, we have

\begin{corollary}
We have
$$
\Pot(\phi) \leq \frac{\mu_\phi(V)}{a} \leq \frac{\sum_{i=1}^a \sum_{\mu \in \Pp(d,k)} r(\mathcal{C}^i_{\mu})}{a} . 
$$
\end{corollary}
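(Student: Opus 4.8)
The plan is to obtain this corollary as a purely formal consequence of the lemma immediately preceding it together with Lemma~\ref{basic2}. First I would observe that the relevant index set $\{(i,\mu) : 1 \le i \le a,\ \mu \in \Pp(d,k)\}$ is finite, since $a$ is finite and $|\Pp(d,k)| < \infty$, and that each $\mathcal{C}^i_{\mu} = \mathcal{C}^i_{\mu_1} \otimes \mathcal{C}^i_{\mu_2} \otimes \dots \otimes \mathcal{C}^i_{\mu_k}$ is a linear subspace of $\Ten(m,k)$ (in fact a basic subspace in the sense of Definition~\ref{basic-sub}, being the tensor product of the subspaces $\mathcal{C}^i_{\mu_j} \subseteq \F^m$). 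Thus the family $\{\mathcal{C}^i_{\mu}\}_{i,\mu}$ is exactly the kind of finite collection of linear subspaces to which Lemma~\ref{basic2} applies.

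Next I would feed the preceding lemma, namely $\phi(S) = \im(L) \subseteq \sum_{i=1}^a \sum_{\mu \in \Pp(d,k)} \mathcal{C}^i_{\mu}$, into Lemma~\ref{basic2} with the $U_{\I}$ there taken to be the $\mathcal{C}^i_{\mu}$. This yields both $\phi(V) \subseteq \sum_{i=1}^a \sum_{\mu \in \Pp(d,k)} \mathcal{C}^i_{\mu}$ and the estimate
$$
\mu_\phi(V) \;\leq\; \sum_{i=1}^a \sum_{\mu \in \Pp(d,k)} r(\mathcal{C}^i_{\mu}).
$$
Finally, since $a = \mu_\phi(S)$ by its very definition, we have $\Pot(\phi) = \mu_\phi(V)/\mu_\phi(S) = \mu_\phi(V)/a$ — so the first displayed inequality in the statement is in fact an equality — and dividing the previous estimate through by $a$ gives the claimed chain of inequalities.

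I do not expect any genuine obstacle here: the corollary is a bookkeeping step, and the only things worth verifying carefully are that the $\mathcal{C}^i_{\mu}$ really are linear subspaces (so Lemma~\ref{basic2} applies as stated, rather than needing its hypotheses to be re-derived) and that the indexing over $i$ and $\mu$ is finite so that the sum on the right-hand side is meaningful. Both are immediate from the definitions already in place, so the proof will be a single short paragraph citing the preceding lemma and Lemma~\ref{basic2}.
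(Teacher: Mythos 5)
Your proposal is correct and matches the paper's own proof exactly: the paper obtains the corollary by combining the preceding lemma with Lemma~\ref{basic2} and dividing by $a = \mu_\phi(S)$, which is precisely what you do. Your observation that the first inequality is in fact an equality is also correct and harmless.
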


\begin{lemma}
For all $\mu \in \Pp(d,k)$, we have 
$$
r(\mathcal{C}^i_\mu) \leq \prod\limits_{j \neq l} {n + \mu_j - 1 \choose \mu_j}.
$$ where $l$ is such that $\mu_l \geq \mu_j$ for all $j$.
\end{lemma}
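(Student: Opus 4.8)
The plan is to bound $r(\mathcal{C}^i_\mu)$ using exactly the same mechanism that gave Lemma~\ref{lem:tensor-count} in the tensor case, namely Lemma~\ref{basic2r}. The subspace $\mathcal{C}^i_\mu = \mathcal{C}^i_{\mu_1} \otimes \cdots \otimes \mathcal{C}^i_{\mu_k}$ is a basic subspace of $\Ten(m,k)$, so Lemma~\ref{basic2r} applies: for any choice of index $p$,
$$
r(\mathcal{C}^i_\mu) \leq \prod_{j \neq p} \dim(\mathcal{C}^i_{\mu_j}).
$$
So the first step is simply to invoke Lemma~\ref{basic2r} with $p = l$, where $l$ is the index maximizing $\mu_l$.

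The second step is to bound each factor $\dim(\mathcal{C}^i_{\mu_j})$. By definition $\mathcal{C}^i_{\mu_j} = \spa(\bp_{i,\be} \mid \deg(\be) = \mu_j) \subseteq \F^m$, so its dimension is at most the number of exponent vectors $\be \in \N^n$ with $\deg(\be) = \mu_j$, i.e. the number of degree-$\mu_j$ monomials in $n$ variables, which is $\binom{n + \mu_j - 1}{\mu_j}$. Plugging this into the product over $j \neq l$ yields
$$
r(\mathcal{C}^i_\mu) \leq \prod_{j \neq l} \binom{n + \mu_j - 1}{\mu_j},
$$
which is exactly the claimed inequality.

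I do not anticipate a real obstacle here: the proof is a verbatim transcription of the argument in Lemma~\ref{lem:tensor-count}, with the role of ``number of monomials of set multi-degree $\bdelta_{I_j}$'' (which was $n^{|I_j|}$) replaced by ``number of monomials of total degree $\mu_j$'' (which is $\binom{n+\mu_j-1}{\mu_j}$). The only point requiring a sentence of care is confirming that the coefficient vectors $\bp_{i,\be}$ indexed by $\be$ with $\deg(\be) = \mu_j$ are the right objects — but this is immediate from the monomial decomposition~(\ref{L-truncated2}) and the definition of $\mathcal{C}^i_{\mu_j}$ given just above the lemma statement. If anything, one should note that the bound on $r(\mathcal{C}^i_\mu)$ is independent of which representative $p$ we drop, but choosing $p = l$ (the index with the largest $\mu_l$) is what makes the subsequent summation over $\Pp(d,k)$ give the desired $n^{\lfloor (k-1)d/k \rfloor}$-type bound in the proof of Theorem~\ref{PtoT}; that optimization, however, belongs to the next lemma, not this one.
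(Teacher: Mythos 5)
Your proof is correct and matches the paper's argument exactly: both invoke Lemma~\ref{basic2r} applied to the basic subspace $\mathcal{C}^i_\mu = \mathcal{C}^i_{\mu_1} \otimes \cdots \otimes \mathcal{C}^i_{\mu_k}$ and then bound $\dim \mathcal{C}^i_{\mu_j}$ by the size of its defining spanning set, namely the number $\binom{n+\mu_j-1}{\mu_j}$ of exponent vectors $\be \in \N^n$ with $\deg(\be) = \mu_j$. Your aside about why the choice $p = l$ is the useful one for the next step is a fair observation, and consistent with the paper's stating the lemma with that specific choice already made.
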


\begin{proof}
This just follows from Lemma~\ref{basic2r} and the fact that $\dim \mathcal{C}^i_{\mu_j} \leq {n + \mu_j - 1 \choose \mu_j}$, which is the size of its defining spanning set.
\end{proof}

Let us define 
$$
\Upsilon_\mu := \prod\limits_{j \neq l} {n + \mu_j - 1 \choose \mu_j},
$$
where $l$ is such that $\mu_l \geq \mu_j$ for all $j$. Since ${n + \mu_j - 1 \choose \mu_j} = O(n^{\mu_j})$, and $\sum_{j \neq l} \mu_j \leq  \lfloor \frac{(k-1)d}{k} \rfloor$, we have that 
$$
\Upsilon_\mu \leq O(n^{ \lfloor \frac{(k-1)d}{k} \rfloor }).
$$

\begin{corollary} \label{actual-war-bound}
We have $\Pot(\phi) \leq \displaystyle\frac{\sum_{i=1}^a \sum_{\mu \in \Pp(d,k)} r(\mathcal{C}^i_{\mu})}{a} \leq \sum_{\mu \in \Pp(d,k)} \Upsilon_\mu$.
\end{corollary}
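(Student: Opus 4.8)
The plan is simply to chain together the two facts that have just been established. The first inequality, $\Pot(\phi) \leq \frac{1}{a}\sum_{i=1}^a \sum_{\mu \in \Pp(d,k)} r(\mathcal{C}^i_{\mu})$, is literally the content of the corollary stated just before the lemma bounding $r(\mathcal{C}^i_\mu)$; it already packages Lemma~\ref{basic2} together with the containment $\phi(S) = \im(L) \subseteq \sum_{i}\sum_\mu \mathcal{C}^i_\mu$ and the identity $\Pot(\phi) = \mu_\phi(V)/a$. So there is nothing new to prove for that half.

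For the second inequality, I would invoke the bound $r(\mathcal{C}^i_\mu) \leq \prod_{j \neq l}{n+\mu_j-1 \choose \mu_j} = \Upsilon_\mu$ supplied by the preceding lemma, where $l$ indexes a largest part of $\mu$. The only point worth noting is that $\Upsilon_\mu$ is independent of the index $i$ — it depends purely on the composition $\mu$ — so summing over $i=1,\dots,a$ merely multiplies by $a$: one gets $\sum_{i=1}^a \sum_{\mu \in \Pp(d,k)} r(\mathcal{C}^i_\mu) \leq \sum_{i=1}^a \sum_{\mu \in \Pp(d,k)} \Upsilon_\mu = a \sum_{\mu \in \Pp(d,k)} \Upsilon_\mu$, and dividing by $a$ gives $\Pot(\phi) \leq \sum_{\mu \in \Pp(d,k)} \Upsilon_\mu$, as claimed.

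There is essentially no obstacle here: the statement is a bookkeeping corollary whose real content lives in the earlier ingredients (the power series decomposition of Corollary~\ref{cor:ag}, the containment lemma, and Lemma~\ref{basic2r} bounding $r$ of a basic subspace). The only thing one might double-check is that the uniform-in-$i$ cancellation is legitimate, but since each $\mathcal{C}^i_\mu$ is a basic subspace with factors of dimension at most ${n+\mu_j-1 \choose \mu_j}$ regardless of $i$, this is immediate. If one then wants Theorem~\ref{PtoT}, the subsequent step is the estimate ${n+\mu_j-1 \choose \mu_j} = O(n^{\mu_j})$ together with $\sum_{j\neq l}\mu_j \leq \lfloor\frac{(k-1)d}{k}\rfloor$ (the largest part $\mu_l$ being omitted from the product), yielding $\sum_{\mu}\Upsilon_\mu = B_{d,k}\cdot n^{\lfloor (k-1)d/k\rfloor}$ with $B_{d,k}$ absorbing $|\Pp(d,k)|$ and the hidden constants — but that is exactly what the remark following the definition of $\Upsilon_\mu$ records.
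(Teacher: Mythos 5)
Your proof is correct and follows exactly the intended route: the first inequality restates the preceding corollary (which packages Lemma~\ref{basic2} with the containment $\phi(S) \subseteq \sum_{i,\mu}\mathcal{C}^i_\mu$), and the second plugs in the bound $r(\mathcal{C}^i_\mu) \leq \Upsilon_\mu$ from the preceding lemma and uses that $\Upsilon_\mu$ is independent of $i$ so the factor of $a$ cancels. This is precisely the paper's bookkeeping; there is nothing to add.
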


\begin{proof} [Proof of Theorem~\ref{PtoT}]
This follows from the previous corollary since $\Upsilon_\mu \leq O(n^{ \lfloor \frac{(k-1)d}{k} \rfloor })$ as we saw above, and $|\Pp(d,k)|$ is just some constant that depends only on $d$ and $k$.
\end{proof}

\section{Waring rank lower bound methods} \label{sec:war-rank}
We will derive the upper bounds on the potency for $W_k$-rank methods from the upper bounds on the potency for $T_k$-rank methods. The upper bounds will be weaker, but the loss is a constant that depends only on $k$.

Let $S \subseteq V$ be a spanning subset (simples). Let $\phi: V \rightarrow P(m,k)$ be a linear map, i.e., a $W_k$-rank method. Let $\tilde{\phi}: V \rightarrow \Ten(m,k)$ be the composite map $\iota \circ \phi$, where $\iota: P(m,k) \hookrightarrow \Ten(m,k)$ is the natural inclusion of polynomials of degree $k$ as symmetric $k$-tensors. Let $S_k$ denote the symmetric group on $k$ letters. The group $S_k$ acts on $\Ten(m,k)$ by permuting the tensor factors. A tensor is called symmetric if it is invariant under this action. 

Let us describe the map $\iota$. First note that $P(m,1)$ is a vector space of dimension $n$, so we have an isomorphism $P(m,1) \rightarrow \K^m$ given by $\ell = \ell_1x_1 + \ell_2x_2 + \dots + \ell_mx_m \mapsto \bell = (\ell_1,\dots,\ell_m)$. In the following, we will use the identification $\ell \leftrightarrow \bell$ freely to represent the isomorphism. Using this identification, we can describe $\iota$ by describing it on monomials. 


$$
\iota(\ell^{(1)} \ell^{(2)}\cdots \ell^{(k)}) = \frac{1}{k!} \sum_{\sigma \in S_k} \bell^{(\sigma(1))} \otimes \bell^{(\sigma(2))} \otimes \dots \otimes \bell^{(\sigma(k))}.
$$

Note in particular that this means

$$
\iota(\ell^k) = \bell \otimes \bell \otimes \dots \otimes \bell.
$$

\begin{lemma}
For any $f \in P(m,k)$, we have 
$$
\trk(\iota(f)) \leq \wrk(f) \leq 2^{k-1} \trk(\iota(f)).
$$
\end{lemma}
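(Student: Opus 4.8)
The inequality splits into two parts, and I would handle them separately. For the lower bound $\trk(\iota(f)) \leq \wrk(f)$: suppose $\wrk(f) = r$, so that $f = \sum_{i=1}^r \lambda_i (\ell^{(i)})^k$ for scalars $\lambda_i$ and linear forms $\ell^{(i)} \in P(m,1)$ (absorbing the scalar into the linear form, one may even write $f = \sum_{i=1}^r (\ell^{(i)})^k$). Applying the linear map $\iota$ and using the identity $\iota(\ell^k) = \bell \otimes \bell \otimes \dots \otimes \bell$ recorded just above, we get
\[
\iota(f) = \sum_{i=1}^r \bell^{(i)} \otimes \bell^{(i)} \otimes \dots \otimes \bell^{(i)},
\]
which is a sum of $r$ simple tensors, so $\trk(\iota(f)) \leq r = \wrk(f)$.

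For the upper bound $\wrk(f) \leq 2^{k-1}\trk(\iota(f))$: suppose $\trk(\iota(f)) = s$, so $\iota(f) = \sum_{i=1}^s \bv_i^{(1)} \otimes \bv_i^{(2)} \otimes \dots \otimes \bv_i^{(k)}$ for vectors $\bv_i^{(j)} \in \F^m$. The key point is that $\iota(f)$ is a symmetric tensor, so we may symmetrize: averaging over $S_k$ gives $\iota(f) = \frac{1}{s}\sum_i$ (symmetrization of the $i$-th term), i.e. $\iota(f) = \sum_{i=1}^s \iota\bigl(\ell_i^{(1)}\ell_i^{(2)}\cdots\ell_i^{(k)}\bigr)$ where $\ell_i^{(j)}$ is the linear form corresponding to $\bv_i^{(j)}$, using the formula for $\iota$ on a product of linear forms. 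Since $\iota$ is injective (it is the inclusion of degree-$k$ polynomials as symmetric $k$-tensors), this yields the \emph{symbolic} identity $f = \sum_{i=1}^s \ell_i^{(1)}\ell_i^{(2)}\cdots\ell_i^{(k)}$ in $P(m,k)$. Now each product $\ell_i^{(1)}\cdots\ell_i^{(k)}$ of $k$ linear forms has Waring rank at most $2^{k-1}$: this is precisely Glynn's formula from Example~\ref{glynn-formula} (applied after a linear change of variables sending $x_1,\dots,x_k$ to $\ell_i^{(1)},\dots,\ell_i^{(k)}$, or rather: the formula there expresses a product of $k$ linearly independent linear forms as a combination of $2^{k-1}$ powers of linear forms, and a degenerate product of linearly dependent forms has even smaller Waring rank). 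By subadditivity of Waring rank, $\wrk(f) \leq \sum_{i=1}^s 2^{k-1} = 2^{k-1} s = 2^{k-1}\trk(\iota(f))$.

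I expect the main (mild) obstacle to be the second part: one must be careful that the symmetrization step legitimately produces an expression of $\iota(f)$ as a sum of $\iota$-images of products of linear forms — this uses that $\iota(f)$ is fixed by the $S_k$-action and that symmetrizing a simple tensor $\bv_i^{(1)} \otimes \dots \otimes \bv_i^{(k)}$ gives exactly $k!$ times $\iota$ of the corresponding monomial — and then that the injectivity of $\iota$ transfers this back to an identity of polynomials. One should also double-check the edge case where some $\ell_i^{(j)} = 0$ (then that term vanishes and contributes nothing) and the general-scalar version of Glynn's formula for an arbitrary product $c\,\ell^{(1)}\cdots\ell^{(k)}$, which only changes the scalars in front of the $k$-th powers and hence does not affect the rank bound $2^{k-1}$. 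The rest is routine.
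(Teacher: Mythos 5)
Your proof is correct and follows essentially the same route as the paper: the lower bound comes from pushing a Waring decomposition through $\iota$, and the upper bound comes from symmetrizing a tensor-rank decomposition of $\iota(f)$, pulling it back to a sum of products of $k$ linear forms via injectivity of $\iota$, and then applying Glynn's formula termwise. One small slip: where you write $\iota(f) = \frac{1}{s}\sum_i(\dots)$, the averaging factor should be $\frac{1}{k!}$ (you in fact state the correct factor $k!$ in your closing remarks); this is a typo, not a gap, and everything downstream is fine.
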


\begin{proof}
Under the map $\iota$, a power of a linear form, i.e., $\ell^k$ is sent to a rank $1$ tensor. This means that a decomposition of $f$ as a sum of powers of linear forms is sent to a decomposition of $\iota(f)$ as a sum of rank $1$ tensors. This gives $\trk(\iota(f)) \leq \wrk(f)$. 

On the other hand, suppose $\iota(f) = \sum_{i=1}^r \bell^{(i1)} \otimes \dots \otimes \bell^{(ik)}$. Then since $\iota(f)$ is symmetric, we can write $\iota(f) = \sum_i \bell^{(i\sigma(1))} \otimes \dots \otimes \bell^{(i\sigma(k))}$ for any permutation $\sigma \in  S_k$. In particular, we have
$$
\iota(f) =  \frac{1}{k!} \sum_{\sigma \in S_k} \sum_{i=1}^r \bell^{(i\sigma(1))} \otimes \dots \otimes \bell^{(i\sigma(k))}.
$$
But this means that
$$
\iota(f) = \sum_{i=1}^r \iota(\ell^{(i1)}\ell^{(i2)} \dots \ell^{(ik)})
$$
Since $\iota$ is an injective (and linear) map, we deduce that
$$
f = \sum_{i=1}^r \ell^{(i1)} \ell^{(i2)} \dots \ell^{(ik)}
$$
Now, each term $\ell^{(i1)}\dots \ell^{(ik)}$ can be written as a sum of $2^{k-1}$ linear forms by Glynn's formula (\cite{Glynn}) that we recalled in Example~\ref{glynn-formula}. This gives $f$ as a sum of $2^{k-1}r$ powers of linear forms. In other words, we have $\wrk(f) \leq 2^{k-1} \trk(\iota(f))$.

\end{proof}

\begin{corollary} 
We have $\mu_{\tilde{\phi}}(S) \leq \mu_{\phi}(S)$.
\end{corollary}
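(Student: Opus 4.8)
The plan is to deduce this immediately from the preceding lemma, which states that $\trk(\iota(f)) \leq \wrk(f)$ for every $f \in P(m,k)$. The only thing to unwind is the definition of the two sub-additive measures involved: by construction $\tilde\phi = \iota \circ \phi$, so for any $v \in V$ we have $\mu_{\tilde\phi}(v) = \trk(\tilde\phi(v)) = \trk(\iota(\phi(v)))$, while $\mu_\phi(v) = \wrk(\phi(v))$.

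First I would fix an arbitrary simple $s \in S$ and set $f = \phi(s) \in P(m,k)$. Applying the left-hand inequality of the lemma to this $f$ gives
$$
\mu_{\tilde\phi}(s) = \trk(\iota(\phi(s))) \leq \wrk(\phi(s)) = \mu_\phi(s).
$$
Then I would take the maximum over all $s \in S$ on both sides: since $\mu_{\tilde\phi}(S) = \max\{\mu_{\tilde\phi}(s) \mid s \in S\}$ and $\mu_\phi(S) = \max\{\mu_\phi(s) \mid s \in S\}$, the pointwise inequality above yields $\mu_{\tilde\phi}(S) \leq \mu_\phi(S)$, which is exactly the claim.

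There is essentially no obstacle here — the content is entirely in the lemma, and this corollary is a formal consequence obtained by evaluating at the simples and passing to the supremum. The only point requiring a (trivial) check is that $\phi(s)$ indeed lands in $P(m,k)$ so that the lemma applies, which is immediate since $\phi: V \to P(m,k)$ by hypothesis. I would state the proof in two lines along exactly these lines.
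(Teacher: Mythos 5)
Your proof is correct and follows exactly the same route as the paper: apply the left-hand inequality $\trk(\iota(f)) \leq \wrk(f)$ of the preceding lemma with $f = \phi(s)$ for each simple $s \in S$, then pass to the maximum over $S$. Nothing is missing.
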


\begin{proof}
From the above lemma, we know $\trk(\iota(f)) \leq \wrk(f)$ for all $f \in P(m,k)$. In particular, using this for every $f = \phi(s)$ for $s \in  S$, we see that $\trk(\tilde{\phi}(s)) = \trk(\iota(\phi(s)) \leq \wrk(\phi(s)).$
\end{proof}

A similar argument shows the following.

\begin{corollary}
We have $\mu_\phi(V) \leq 2^{k-1} \mu_{\tilde{\phi}}(V)$.
\end{corollary}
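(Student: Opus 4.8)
The plan is to reverse the roles of the two inequalities in the preceding lemma, now applying the second one ($\wrk(f) \leq 2^{k-1}\trk(\iota(f))$) pointwise over all elements in the image of $\phi$, rather than just over $\phi(S)$. Concretely, recall that $\mu_\phi(V) = \max\{\wrk(\phi(v)) \mid v \in V\}$ and $\mu_{\tilde\phi}(V) = \max\{\trk(\tilde\phi(v)) \mid v \in V\}$, where $\tilde\phi = \iota \circ \phi$, so that $\tilde\phi(v) = \iota(\phi(v))$ for every $v \in V$.

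The steps are as follows. First, fix an arbitrary $v \in V$ and set $f = \phi(v) \in P(m,k)$. Second, apply the second inequality of the lemma to this $f$, obtaining
$$
\wrk(\phi(v)) = \wrk(f) \leq 2^{k-1}\trk(\iota(f)) = 2^{k-1}\trk(\tilde\phi(v)) \leq 2^{k-1}\mu_{\tilde\phi}(V),
$$
where the last inequality is just the definition of $\mu_{\tilde\phi}(V)$ as a maximum. Third, since $v \in V$ was arbitrary, take the maximum over all $v \in V$ on the left-hand side to conclude $\mu_\phi(V) = \max_{v \in V}\wrk(\phi(v)) \leq 2^{k-1}\mu_{\tilde\phi}(V)$, as claimed.

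There is essentially no obstacle here: the entire content is packaged in the already-proved lemma relating $\wrk(f)$ and $\trk(\iota(f))$, and the only thing to check is that the inequality, being valid for every individual polynomial in $P(m,k)$, survives passage to the supremum over the image of the linear map $\phi$ — which it trivially does since both sides are monotone under enlarging the set over which one maximizes. The one minor point worth noting is that we should assume $\mu_{\tilde\phi}(V)$ (equivalently, $\mu_\phi(V)$) is finite, which it is since $P(m,k)$ and $\Ten(m,k)$ are finite-dimensional and every element has finite Waring (resp. tensor) rank; this guarantees the maxima are attained and the manipulation is legitimate.
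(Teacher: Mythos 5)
Your proof is correct and is exactly what the paper intends by its one-line remark that ``a similar argument shows the following'': you apply the second inequality $\wrk(f) \leq 2^{k-1}\trk(\iota(f))$ of the preceding lemma to $f=\phi(v)$ for each $v\in V$ and then take the maximum. Nothing further is needed.
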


Combining the previous two corollaries, we get:

\begin{corollary}
We have $\Pot(\phi) \leq 2^{k-1} \Pot(\tilde{\phi}).$
\end{corollary}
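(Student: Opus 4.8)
The plan is simply to unwind the definition of potency and feed in the two preceding corollaries. Recall that by definition $\Pot(\phi) = \mu_\phi(V)/\mu_\phi(S)$ and $\Pot(\tilde\phi) = \mu_{\tilde\phi}(V)/\mu_{\tilde\phi}(S)$. We have already established the two comparison estimates $\mu_\phi(V) \leq 2^{k-1}\mu_{\tilde\phi}(V)$ (bounding Waring rank by tensor rank up to the Glynn factor) and $\mu_{\tilde\phi}(S) \leq \mu_\phi(S)$ (the trivial direction $\trk(\iota(f)) \leq \wrk(f)$). The only thing to check is that these two inequalities point in compatible directions when we form the ratio: we want the numerator of $\Pot(\phi)$ controlled from above and its denominator controlled from below in terms of the corresponding quantities for $\tilde\phi$, and that is exactly what the two corollaries give.

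Concretely, I would write
$$
\Pot(\phi) = \frac{\mu_\phi(V)}{\mu_\phi(S)} \leq \frac{2^{k-1}\mu_{\tilde\phi}(V)}{\mu_\phi(S)} \leq \frac{2^{k-1}\mu_{\tilde\phi}(V)}{\mu_{\tilde\phi}(S)} = 2^{k-1}\Pot(\tilde\phi),
$$
where the first inequality uses $\mu_\phi(V) \leq 2^{k-1}\mu_{\tilde\phi}(V)$ and the second uses $\mu_{\tilde\phi}(S) \leq \mu_\phi(S)$ together with the fact that all these measures are positive, so decreasing the denominator only increases the fraction.

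There is essentially no obstacle here: the genuine content was absorbed into the lemma relating $\trk(\iota(f))$ and $\wrk(f)$ via Glynn's formula (Example~\ref{glynn-formula}), and into the two corollaries derived from it. The one point worth a half-sentence of care is the positivity/finiteness of the quantities involved, so that the manipulation of the ratio is legitimate (this is guaranteed since $S$ spans $V$, so $\mu_{\tilde\phi}(S)$ and $\mu_\phi(S)$ are positive whenever $\mu_{\tilde\phi}(V)$, $\mu_\phi(V)$ are, which is the only case of interest). Combining this last corollary with Theorems~\ref{TtoT} and~\ref{PtoT} then immediately yields Theorems~\ref{TtoP} and~\ref{PtoP}, with the stated constants $C_{d,k} = 2^{k-1}k^d$ and $2^{k-1}B_{d,k}$.
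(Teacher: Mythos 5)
Your proof is correct and matches the paper's approach exactly: the paper states the two comparison corollaries (namely $\mu_{\tilde\phi}(S) \leq \mu_\phi(S)$ and $\mu_\phi(V) \leq 2^{k-1}\mu_{\tilde\phi}(V)$) and then immediately derives the claim by combining them in the ratio, just as you do. Your remark about positivity is a reasonable sanity check but is not made explicit in the paper.
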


\begin{proof} [Proofs of Theorems~\ref{TtoP} and ~\ref{PtoP}]
These follows from applying the above corollary to Theorems~\ref{TtoT} and ~\ref{PtoT}.
\end{proof}

\begin{remark} 
A famous conjecture of Comon was that $\wrk(f) = \trk(\iota(f))$ for any $f \in P(m,k)$, see \cite{CGLM08}. This was proved to be true in many special cases. Recently, a rather complicated counterexample has appeared in \cite{Shitov}. While this means that Comon's conjecture is false, the evidence would suggest that the inequality $\wrk(f) \leq 2^{k-1} \trk(\iota(f))$ is far from being sharp. It is an interesting question to find more optimal replacements for the factor of $2^{k-1}$.
\end{remark}

\section{Barriers for border rank methods}\label{sec:border-rank}
In this section we prove analogous theorems to Theorem~\ref{TtoT} and Theorem~\ref{PtoT}, but now
instead of proving upper bounds on the potency of tensor rank methods, we prove upper bounds on the potency of border rank methods for tensors. Roughly speaking, a border rank method will use a linear map to lift border rank lower bounds for low degree tensors to border rank lower bounds for higher degree tensors. While this is a natural generalization, we require additional tools from algebraic geometry to establish barriers for these border rank methods.

We will briefly recall some notions regarding border rank, and then prove the analogous barriers. The high level strategy remains the same, with more details to be worked out. The key new ideas are the notion of degenerations and the ability to switch between topological border rank and algebraic border rank. In this entire section, we will assume $\F$ is an algebraically closed field of characteristic zero.

We will need to work over the polynomial ring $\F[\veps]$ of polynomials over the variable $\veps$. We define $\Ten_{\F[\veps]}(m,k) := (\F[\veps]^m)^{\otimes k}$. We will say $T \in \Ten_{\F[\veps]}(m,k)$ is simple if $T = \bp_1(\veps) \otimes \bp_2(\veps) \otimes \dots \otimes \bp_k(\veps)$ for some $\bp_i(\veps) \in \F[\veps]^m$. For any tensor $T \in \Ten_{\F[\veps]}(m,k)$, we will write 
$$
\trk_{\F[\veps]}(T) = \min\{ r\ | \ T = T_1 + \dots + T_r, \ T_i \text{ simple}\}.
$$
We will write $\Ten_{\F[\veps]}(m,k)_{\leq r} :=\{T \ | \  \trk_{\F[\veps]}(T) \leq r\}$.

\begin{definition}[Topological Border Rank]\label{def:top-brk}
	 A tensor $T \in \Ten_\F(m,k)$ has border rank $\leq r$, denoted by $\brk(T) \leq r$ if
	 $T \in \overline{\Ten_\F(m,k)_{\leq r}}$, where the closure is the Zariski closure of the set
	 of rank $r$ tensors.
\end{definition}

The above definition defines border rank implicitly as $\brk(T) = \min\{r\ |\ T \in \overline{\Ten_{\F}(m,k)_{\leq r}} \}$. But we prefer the above definition for later use.

\begin{definition}[Degeneration~\cite{BCS13}]	
	Given a tensor $T \in \Ten_\F(m,k)$ we say that $T$ is a 
	{\em degeneration of order} $q$ of a rank $r$ tensor, denoted by 
	$T \unlhd_q \langle r \rangle$, if 
	there exist tensors $T_1 \in \Ten_{\F[\veps]}(m,k)_{\leq r}$ and 
	$T_2 \in \Ten_{\F[\veps]}(m,k)$ such that 
	$$ \veps^{q-1} \cdot T = T_1 + \veps^q \cdot T_2. $$
\end{definition}

\begin{definition}[Algebraic Border Rank~\cite{BCS13}]\label{def:alg-rk}
	We say that a tensor $T \in \Ten_\F(m,k)$ has algebraic border rank $\leq r$ if there exists a number 
	$q \in \Z_{\geq 1}$ such that $T \tleq_q \langle r \rangle $. 
\end{definition}

Again, this defines algebraic border rank implicitly.

\begin{theorem}[Theorem 20.24 in~\cite{BCS13} due to Strassen]\label{thm:top-eq-alg-rk}
	Definitions~\ref{def:top-brk} and~\ref{def:alg-rk} are equivalent. That is, given a tensor 
	$T \in \Ten_\F(m,k)$
	$$ \exists q \in \N \st T \tleq_q \langle r \rangle \iff T \in \overline{\Ten_\F(m,k)_{\leq r}}. $$
\end{theorem}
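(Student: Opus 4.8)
The plan is to prove the two implications separately, treating the "algebraic $\Rightarrow$ topological" direction as the easy one and the "topological $\Rightarrow$ algebraic" direction as the substantive one. For the easy direction, suppose $T \tleq_q \langle r\rangle$, so that $\veps^{q-1} T = T_1 + \veps^q T_2$ with $T_1 \in \Ten_{\F[\veps]}(m,k)_{\leq r}$. The idea is to exhibit $T$ as a limit of rank $\leq r$ tensors: write $T_1 = \sum_{i=1}^r \bp_i^{(1)}(\veps)\otimes\cdots\otimes\bp_i^{(k)}(\veps)$, substitute a nonzero scalar value $\veps = t$, and divide by $t^{q-1}$. This realizes $T = \lim_{t\to 0}\bigl(T_1(t)/t^{q-1}\bigr)$, a limit (in the Zariski, hence also classical, sense over an algebraically closed field of characteristic zero) of rank $\leq r$ tensors — more precisely, one should phrase this via a curve $\A^1 \dashrightarrow \Ten_\F(m,k)$ whose image lies generically in $\Ten_\F(m,k)_{\leq r}$ and which extends to send $0 \mapsto T$, using that $\overline{\Ten_\F(m,k)_{\leq r}}$ is closed and contains the image of any such curve away from finitely many points; since $\veps^q T_2$ vanishes to high enough order, $T_1(t)/t^{q-1} \to T$. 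Hence $T \in \overline{\Ten_\F(m,k)_{\leq r}}$.

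For the hard direction, assume $T \in \overline{\Ten_\F(m,k)_{\leq r}}$. The standard tool here is the curve selection lemma (valid over any algebraically closed field of characteristic zero): since $T$ lies in the Zariski closure of the irreducible constructible set $\Ten_\F(m,k)_{\leq r}$, there is a curve through $T$ lying generically in $\Ten_\F(m,k)_{\leq r}$, and after normalizing and restricting one obtains a morphism $\gamma: \operatorname{Spec}\F[[\veps]] \to \Ten_\F(m,k)$ (or a $\F[\veps]_{(\veps)}$-point) with $\gamma(\veps = 0) = T$ and $\gamma$ landing in $\Ten_\F(m,k)_{\leq r}$ over the generic point. Concretely this gives vectors $\bp_i^{(j)}(\veps)$ with entries in the power series (or localized polynomial) ring such that $\sum_{i=1}^r \bp_i^{(1)}(\veps)\otimes\cdots\otimes\bp_i^{(k)}(\veps)$ equals $T$ plus higher-order terms in $\veps$. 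One then clears denominators/truncates: multiply through by a suitable power $\veps^{q-1}$ to absorb the negative powers of $\veps$ appearing in the $\bp_i^{(j)}$, collect all terms of $\veps$-order $\geq q$ into $\veps^q T_2$, and verify that what remains is exactly of the form $\veps^{q-1} T = T_1 + \veps^q T_2$ with $T_1 \in \Ten_{\F[\veps]}(m,k)_{\leq r}$ (it is a sum of $r$ simples over $\F[\veps]$ by construction) and $T_2 \in \Ten_{\F[\veps]}(m,k)$. This is precisely $T \tleq_q \langle r\rangle$.

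I expect the main obstacle to be the bookkeeping in the hard direction: extracting an honest $\F[\veps]$-decomposition (with polynomial, not merely power-series, entries, bounded by $r$ simples) from the abstract curve, and choosing $q$ correctly so that the "remainder" genuinely has $\veps$-order $\geq q$ and the "main part" genuinely has $\veps^{q-1}$-scaling — in particular handling the fact that the naive decomposition may involve Laurent series in $\veps$, so one must track the pole orders of the $\bp_i^{(j)}(\veps)$ and clear them uniformly. A secondary subtlety is justifying the curve selection / valuative criterion step over a general algebraically closed field of characteristic zero rather than over $\C$; this is where the characteristic-zero and algebraically-closed hypotheses are used, and one should either cite the scheme-theoretic valuative criterion of closedness or reduce to $\C$ via a standard Lefschetz-principle argument. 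Since this is a classical result of Strassen (Theorem 20.24 in~\cite{BCS13}), I would ultimately present a streamlined version of the argument there rather than reprove every lemma.
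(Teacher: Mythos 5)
The paper does not prove this theorem. It is quoted verbatim as a known result — Theorem 20.24 in B\"urgisser--Clausen--Shokrollahi, attributed to Strassen — and the only remark the authors add is that $\F$ being algebraically closed is essential. So there is no in-paper proof for you to compare against; what I can do is assess your sketch on its own terms.

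Your sketch of the easy direction (algebraic $\Rightarrow$ topological) is correct. The cleanest way to phrase it, which you gesture at, is that $\veps^{q-1}T = T_1 + \veps^q T_2$ implies $T_1(\veps)/\veps^{q-1} = T - \veps T_2(\veps)$ is an honest polynomial curve in $\Ten_\F(m,k)$; away from $\veps=0$ it lies in $\Ten_\F(m,k)_{\leq r}$ (scaling a simple tensor by a nonzero scalar preserves rank), so its value at $\veps=0$, namely $T$, lies in the Zariski closure.

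For the hard direction your outline identifies the right tools, but it glosses over one genuine subtlety that is specific to this theorem and is exactly where the algebraically-closed, characteristic-zero hypotheses earn their keep. After you obtain a formal curve $\gamma:\operatorname{Spec}\F[[\veps]]\to\Ten_\F(m,k)$ with $\gamma(0)=T$ and with $\gamma$ meeting the constructible set $\Ten_\F(m,k)_{\leq r}$ at the generic point, the resulting rank-$\leq r$ decomposition of $\gamma$ at the generic fiber a priori has its vectors defined only over a finite extension $L$ of $\F((\veps))$, not over $\F((\veps))$ or $\F[[\veps]]$ itself. (One cannot avoid this by lifting the curve upstairs to the parameter space $\A^{mkr}$ first: since the parametrization $\psi$ is a total morphism from an affine space, its graph is already closed, so a curve in the graph closure through a point over $T$ would force $T$ to have genuine rank $\leq r$ — which fails, e.g., for the W-tensor with $r=2$.) The fix is Newton--Puiseux: because $\F$ is algebraically closed of characteristic zero, $L$ embeds into $\F((\veps^{1/N}))$ for some $N$, and one reparametrizes $\veps\mapsto\veps^N$ to land back in $\F((\veps))$. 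You mention tracking pole orders of Laurent series, which is correct as far as it goes, but the necessity of passing to a ramified cover and reparametrizing is an additional step you should name explicitly; it is the reason the degeneration order $q$ can be larger than the naive pole order suggests, and it is a place where a positive-characteristic version of the argument genuinely breaks. With that step in place, the truncation-and-clear-denominators bookkeeping you describe (choose $q-1$ to clear all poles, truncate the power-series vectors at $\veps$-degree $\geq q-1$, and absorb the tail into $\veps^q T_2$, noting that the resulting $T_2=(\veps^{q-1}T - T_1)/\veps^q$ is then an honest element of $\Ten_{\F[\veps]}(m,k)$) does close the argument.

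So: your plan is correct in spirit and your flagged concerns are the right ones, but the Puiseux/reparametrization step is a missing idea, not merely bookkeeping, and should be stated. Since the paper only cites the result, your instinct to defer the full details to the treatment in the cited reference is also consistent with the authors' own choice.
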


Note that $\F$ being algebraically closed is crucial for the above theorem to hold. By default, border rank will mean topological border rank.

 A subset $U \subseteq \F[\veps]^{\ell}$ is called a $\F[\veps]$-submodule if it is closed under addition and multiplication by elements of $\F[\veps]$. A subset $\{\bp_1(\veps),\bp_2(\veps),\dots \bp_k(\veps)\} \subseteq \F[\veps]^{\ell}$ is a generating set for the submodule $U$ if $\forall T \in U$, one can write $T = \sum_i \bp_i(\veps) c_i(\veps)$ for some $c_i(\veps) \in \F[\veps]$. 
\begin{definition}
For an $\F[\veps]$-submodule $U \subseteq \F[\veps]^m$, we define its rank $\rk_{\F[\veps]}(U)$ to be the size of its smallest generating set.
\end{definition}

Note that $\Ten_{\F[\veps]}(m,k) \cong \F[\veps]^{m^k}$, so it makes sense to talk about submodules of $\Ten_{\F[\veps]}(m,k)$ We will consider degenerations to modules, as the following definition alludes to.

\begin{definition}[Degeneration to a Submodule]
	We say that a tensor $T \in \Ten_{\F[\veps]}(m,k)$ degenerates to a module 
	$U \subseteq \Ten_{\F[\veps]}(m,k)$ with order $q$, written $T \tleq_q U$ if there exist tensors
	$T_1 \in U$ and $T_2 \in \Ten_{\F[\veps]}(m,k)$ such that 
	$$ \veps^{q-1} \cdot T = T_1 + \veps^q \cdot T_2. $$
	More generally, we say that a subset $W \subseteq \Ten_\F(m,k)$ degenerates to a module
	$U \subseteq \Ten_{\F[\veps]}(m,k)$ with order $q$, written $W \tleq_q U$ if every tensor in $W$
	degenerates to $U$ with order $q$.
\end{definition}

\begin{definition}
For any submodule $U \subseteq \Ten_{\F[\veps]}(m,k)$, we define 
$$
r(U) = \min\{r \ | U \subseteq \Ten_{\F[\veps]}(m,k)_{\leq r}\}.
$$
\end{definition}

The following corollary is straightforward.

\begin{corollary}
Suppose $U \subseteq \Ten_{\F[\veps]}(m,k)$ is an $F[\veps]$-submodule. Suppose $T \tleq_q U$ for some $q \in \Z_{\geq 1}$. Then 
$$
\brk(T) \leq r(U).
$$
\end{corollary}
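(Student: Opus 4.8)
The plan is to unwind the definitions and then invoke the equivalence between topological and algebraic border rank (Theorem~\ref{thm:top-eq-alg-rk}). Let $r = r(U)$, so by definition $U \subseteq \Ten_{\F[\veps]}(m,k)_{\leq r}$. Assume $T \tleq_q U$ for some $q \in \Z_{\geq 1}$; this means there exist $T_1 \in U$ and $T_2 \in \Ten_{\F[\veps]}(m,k)$ with $\veps^{q-1}\cdot T = T_1 + \veps^q \cdot T_2$. Since $T_1 \in U \subseteq \Ten_{\F[\veps]}(m,k)_{\leq r}$, we have $\trk_{\F[\veps]}(T_1) \leq r$, so this witnesses exactly that $T \tleq_q \langle r\rangle$ in the sense of the degeneration definition. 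Then by Definition~\ref{def:alg-rk}, $T$ has algebraic border rank $\leq r$, and by Theorem~\ref{thm:top-eq-alg-rk} this is equivalent to $T \in \overline{\Ten_\F(m,k)_{\leq r}}$, i.e. $\brk(T) \leq r = r(U)$, as desired.

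First I would spell out that the only content here is matching up the quantifiers in the two relevant definitions: the hypothesis gives a single order $q$ that works simultaneously for the degeneration $\veps^{q-1}T = T_1 + \veps^q T_2$ with $T_1 \in U$, and membership $U \subseteq \Ten_{\F[\veps]}(m,k)_{\leq r}$ upgrades "$T_1 \in U$" to "$\trk_{\F[\veps]}(T_1) \leq r$". That is precisely the data required by Definition~\ref{def:alg-rk} for the witness $\langle r\rangle$. I should be a little careful that the definition of $T \tleq_q \langle r\rangle$ asks for $T_1 \in \Ten_{\F[\veps]}(m,k)_{\leq r}$ and $T_2 \in \Ten_{\F[\veps]}(m,k)$ with $\veps^{q-1}T = T_1 + \veps^q T_2$, which is exactly what we produced.

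Strictly speaking there is no real obstacle: the statement is an immediate consequence of chaining Definition~\ref{def:alg-rk} and Theorem~\ref{thm:top-eq-alg-rk}, so the "hard part" was already done in establishing Strassen's equivalence theorem, which is quoted. The one point worth a sentence is the convention that $\brk(T) = \min\{r \mid T \in \overline{\Ten_\F(m,k)_{\leq r}}\}$ and $r(U) = \min\{r \mid U \subseteq \Ten_{\F[\veps]}(m,k)_{\leq r}\}$ are defined as minima, so it suffices to exhibit $r = r(U)$ as some value of $r$ with $T \in \overline{\Ten_\F(m,k)_{\leq r}}$; the minimum defining $\brk(T)$ is then automatically $\leq r(U)$. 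That completes the argument.
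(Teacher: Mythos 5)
Your argument is correct, and it is exactly the ``straightforward'' chain the paper has in mind (the paper gives no written proof, just announces the corollary as an immediate consequence of the surrounding definitions and Theorem~\ref{thm:top-eq-alg-rk}). You correctly unwind: $T \tleq_q U$ produces $T_1 \in U$ and $T_2$ with $\veps^{q-1}T = T_1 + \veps^q T_2$; the definition of $r(U)$ upgrades $T_1 \in U$ to $T_1 \in \Ten_{\F[\veps]}(m,k)_{\leq r(U)}$, which is precisely what $T \tleq_q \langle r(U)\rangle$ requires; and Strassen's equivalence then gives $\brk(T) \leq r(U)$. There is no gap and no genuinely different route here to compare.
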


We can now extend Lemma~\ref{basic2r} to the border rank setting:
\begin{lemma}\label{border-basic2}
Let $S \subseteq V$ be a spanning subset (of simples). Suppose $\phi:V \rightarrow \Ten(m,k)$ is a linear map and suppose $\phi(S) \tleq_q \sum_{\I} U_{\I}$ for some $\F[\veps]$-submodules $U_{\I} \subseteq \Ten_{\F[\veps]}(m,k)$. Then $\phi(V) \tleq_q \sum_{\I} U_{\I}$. In particular, for any $T \in V$, we have 
$$
\brk(\phi(T)) \leq \sum_{\I} r(U_{\I}).
$$
\end{lemma}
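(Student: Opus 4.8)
The plan is to mimic the proof of Lemma~\ref{basic2} in the ordinary rank setting, replacing ``membership in a sum of subspaces'' by ``degeneration to a sum of submodules.'' The only genuinely new content needed is that the relation $T \tleq_q U$ behaves well under linear combinations (with scalar coefficients from $\F$): concretely, that $\{T \in \Ten_\F(m,k) \mid T \tleq_q \sum_{\I} U_{\I}\}$ is an $\F$-linear subspace of $\Ten_\F(m,k)$. Granting that, the argument is immediate: $\phi(S)$ spans $\phi(V)$ because $S$ spans $V$ and $\phi$ is linear, so if every element of $\phi(S)$ degenerates to $\sum_{\I} U_{\I}$ with order $q$, then so does every $\F$-linear combination, hence all of $\phi(V)$. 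Combining this with the preceding corollary (which says $T \tleq_q U \implies \brk(T) \le r(U)$ for a submodule $U$, noting $\sum_\I U_\I$ is again an $\F[\veps]$-submodule) and the obvious subadditivity $r(\sum_\I U_\I) \le \sum_\I r(U_\I)$ gives the stated bound $\brk(\phi(T)) \le \sum_{\I} r(U_{\I})$ for all $T \in V$.

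So the work reduces to the closure-under-$\F$-linear-combinations claim. First I would check closure under scaling: if $\veps^{q-1} T = T_1 + \veps^q T_2$ with $T_1 \in \sum_\I U_\I$ and $T_2 \in \Ten_{\F[\veps]}(m,k)$, then for $\lambda \in \F$ we have $\veps^{q-1}(\lambda T) = (\lambda T_1) + \veps^q (\lambda T_2)$, and $\lambda T_1 \in \sum_\I U_\I$ since each $U_\I$ is an $\F[\veps]$-module (in particular an $\F$-vector space). Next, closure under addition: if $\veps^{q-1} T = T_1 + \veps^q T_2$ and $\veps^{q-1} T' = T_1' + \veps^q T_2'$ with $T_1, T_1' \in \sum_\I U_\I$, then $\veps^{q-1}(T + T') = (T_1 + T_1') + \veps^q(T_2 + T_2')$, and $T_1 + T_1' \in \sum_\I U_\I$ because a sum of submodules is a submodule. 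Thus the set of tensors degenerating to $\sum_\I U_\I$ with the \emph{fixed} order $q$ is already an $\F$-subspace — no need to change the order $q$, which is convenient.

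The main point to get right (and the only thing one might call an obstacle) is keeping the order $q$ uniform: the hypothesis $\phi(S) \tleq_q \sum_\I U_\I$ already supplies a single $q$ that works for every simple, and the above shows the same $q$ then works for every element of the span, so we never need to take a maximum over infinitely many orders. Finally, to conclude the displayed inequality, one writes any $T \in V$ as an $\F$-linear combination $T = \sum_j \lambda_j s_j$ of elements of $S$ (possible since $S$ spans $V$), applies linearity of $\phi$ to get $\phi(T) = \sum_j \lambda_j \phi(s_j)$, deduces $\phi(T) \tleq_q \sum_\I U_\I$ from the subspace property, and then invokes the preceding corollary together with $r\bigl(\sum_\I U_\I\bigr) \le \sum_\I r(U_\I)$ — the latter because a generating set for each $U_\I$ as a rank-$\le r(U_\I)$ submodule assembles into a witness that $\sum_\I U_\I \subseteq \Ten_{\F[\veps]}(m,k)_{\le \sum_\I r(U_\I)}$. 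This yields $\brk(\phi(T)) \le r\bigl(\sum_\I U_\I\bigr) \le \sum_\I r(U_\I)$, as claimed.
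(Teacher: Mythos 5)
Your proof is correct, and since the paper states this lemma without proof (treating it as a straightforward analogue of Lemma~\ref{basic2}), your write-up simply supplies the details the authors left implicit: the key observation that the set of tensors degenerating to $\sum_{\I} U_{\I}$ with fixed order $q$ is an $\F$-linear subspace, and the subadditivity $r(\sum_{\I} U_{\I}) \le \sum_{\I} r(U_{\I})$. Both steps are verified cleanly, and your note about keeping $q$ uniform is exactly the right thing to be careful about; this matches the approach the authors evidently had in mind.
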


We define basic submodules $U$ for which we can upper bound $r(U)$.

\begin{definition}
A submodule of the form $U = U_1 \otimes U_2 \otimes \dots \otimes U_k \subseteq \Ten_{\F[\veps]}(m,k)$ is called a basic $F[\veps]$-submodule. 
\end{definition}

\begin{lemma}
Suppose $U = U_1 \otimes U_2 \otimes \dots \otimes U_k \subseteq \Ten_{\F[\veps]}(m,k)$ is basic $F[\veps]$-submodule. Then for any $1 \leq p \leq k$,
$$
r(U) \leq\prod_{i \neq p} \rk_{\F[\veps]}(U_i).
$$
\end{lemma}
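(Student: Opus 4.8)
The plan is to transport the ``flattening'' bound of Lemma~\ref{basic2r} verbatim to the module setting over $\F[\veps]$. First I would unwind the definitions: by definition of $r(U)$, the inequality $r(U) \le N$ means exactly that every $T \in U$ satisfies $\trk_{\F[\veps]}(T) \le N$. So it suffices to show that an arbitrary element $T$ of $U = U_1 \otimes U_2 \otimes \dots \otimes U_k$ can be written as a sum of at most $\prod_{i \neq p} \rk_{\F[\veps]}(U_i)$ simple tensors over $\F[\veps]$.

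Next I would fix, for each $i$, a minimal $\F[\veps]$-generating set $\bp^{(i)}_1(\veps),\dots,\bp^{(i)}_{r_i}(\veps)$ of the submodule $U_i \subseteq \F[\veps]^m$, where $r_i := \rk_{\F[\veps]}(U_i)$ (finite, since $\F[\veps]$ is Noetherian). Since $U = U_1 \otimes \dots \otimes U_k$ is by definition the $\F[\veps]$-span inside $\Ten_{\F[\veps]}(m,k)$ of the pure tensors $\bu_1 \otimes \dots \otimes \bu_k$ with $\bu_i \in U_i$, expanding each $\bu_i$ in the chosen generators and using multilinearity shows that the $\prod_i r_i$ tensors $\bp^{(1)}_{j_1} \otimes \dots \otimes \bp^{(k)}_{j_k}$ (with $1 \le j_i \le r_i$) generate $U$ as an $\F[\veps]$-module. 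Hence any $T \in U$ can be written as $T = \sum_{(j_1,\dots,j_k)} c_{j_1\dots j_k}(\veps)\, \bp^{(1)}_{j_1} \otimes \dots \otimes \bp^{(k)}_{j_k}$ with coefficients $c_{j_1\dots j_k}(\veps) \in \F[\veps]$.

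Then I would collect terms according to the multi-index with the $p$-th coordinate deleted. Because scalars from the commutative ring $\F[\veps]$ can be absorbed into the $p$-th tensor factor, each group collapses to one simple tensor: for each choice of $(j_i)_{i \neq p}$,
\[
\sum_{j_p} c_{j_1\dots j_k}(\veps)\, \bp^{(1)}_{j_1} \otimes \dots \otimes \bp^{(k)}_{j_k} = \bp^{(1)}_{j_1} \otimes \dots \otimes \Bigl(\sum_{j_p} c_{j_1\dots j_k}(\veps)\, \bp^{(p)}_{j_p}\Bigr) \otimes \dots \otimes \bp^{(k)}_{j_k},
\]
where the vector in the $p$-th slot again lies in $\F[\veps]^m$, so the right-hand side is simple over $\F[\veps]$. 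Summing over the $\prod_{i \neq p} r_i$ choices of $(j_i)_{i\neq p}$ exhibits $T$ as a sum of that many simple tensors, giving $\trk_{\F[\veps]}(T) \le \prod_{i \neq p} \rk_{\F[\veps]}(U_i)$; since $T \in U$ was arbitrary, $r(U) \le \prod_{i \neq p} \rk_{\F[\veps]}(U_i)$.

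I do not expect a genuine obstacle: this is the proof of Lemma~\ref{basic2r} with ``dimension of a subspace'' replaced by ``rank of a submodule'' and ``span over $\F$'' replaced by ``span over $\F[\veps]$''. The only points that deserve a line of care are purely formal — that a product of $\F[\veps]$-generating sets of the factors generates the tensor-product module (standard multilinear algebra over a commutative ring, not needing freeness), and that $\rk_{\F[\veps]}(U_i)$, being the minimal size of a generating set, is precisely the count produced by the grouping step.
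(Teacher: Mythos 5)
Your proof is correct and is the natural argument; the paper states this lemma (like its vector-space counterpart Lemma~\ref{basic2r}) without proof, treating it as routine, and the argument you give — expand in generating sets of each $U_i$, regroup the sum over $j_p$ into the $p$-th slot using $\F[\veps]$-multilinearity of $(\F[\veps]^m)^{\otimes k}$, and count the remaining multi-indices — is exactly the intended one. The only thing to be slightly more explicit about is that $\Ten_{\F[\veps]}(m,k)$ is defined as a tensor product over the ring $\F[\veps]$ (cf. the definition $\Ten_R(n,d) := (R^n)^{\otimes d}$ in Section~2), which is precisely what licenses absorbing the coefficient $c_{j_1\dots j_k}(\veps)$ and the sum over $j_p$ into a single vector of $\F[\veps]^m$; once that is in place, your argument goes through verbatim.
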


\begin{definition} [Border potency]
Suppose $\phi:\Ten(n,d) \rightarrow \Ten(m,k)$ is a linear map, i.e., a $T_k$-rank method. Suppose $\brk(\phi(s)) \leq a$ for all $s \in S$. Then, for any $T \in \Ten(n,d)$, we have
$$
\brk(T) \geq \frac{\brk(\phi(T))}{a}.
$$
Thus, this is a method to prove lower bounds on border rank of tensors in $\Ten(n,d)$. Analogous to potency, we define border potency as
$$
\Bpot(\phi) = \frac{\max\{\brk(\phi(T))\  |\  T \in \Ten(n,d)\}}{a}.
$$
\end{definition}

\begin{theorem}\label{thm:border-TtoT}
	For any $T_k$-rank method $\phi : \Ten(n,d) \rightarrow \Ten(m,k)$, its border potency is
	$$ \Bpot(\phi) \leq k^d \cdot n^{\lfloor (k-1)d/k \rfloor}. $$
\end{theorem}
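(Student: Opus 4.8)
\textbf{Proof plan for Theorem~\ref{thm:border-TtoT}.}
The plan is to mirror the proof of Theorem~\ref{TtoT}, replacing each ``rank/linear span'' ingredient by its border-rank/$\F[\veps]$-module counterpart. Let $\psi:(\F^n)^{\times d}\to\Ten(n,d)$ be the usual parametrization of the rank-$1$ tensors, let $\phi:\Ten(n,d)\to\Ten(m,k)$ be the given $T_k$-rank method, and set $a=\max\{\brk(\phi(s))\mid s\in S\}$, the quantity appearing in the denominator of $\Bpot(\phi)$. The key first step is to produce, for every $s\in S$, a degeneration witness living over $\F[\veps]$: by Theorem~\ref{thm:top-eq-alg-rk} (algebraic $=$ topological border rank), $\brk(\phi(s))\le a$ means $\phi(s)\tleq_{q_s}\langle a\rangle$ for some order $q_s$. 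Since $S=\im(\psi)$ is the image of a single polynomial map, I expect that one can choose a \emph{uniform} order $q$ and a uniform symbolic witness: concretely, one runs the numeric-to-symbolic machinery (Proposition~\ref{prop:gensymb} and Proposition~\ref{prop-ps}, i.e.\ the ingredients of Theorem~\ref{main-ag}/Corollary~\ref{cor:ag}) not on $\phi\circ\psi$ directly but on the map whose image is $\{$tensors that are order-$q$ degenerations of rank-$\le a$ tensors$\}$, which is itself (for fixed $q$) the image of a polynomial map into $\Ten_{\F[\veps]}(m,k)$ after clearing denominators in $\veps$. This yields, after a generic shift $\bz\mapsto\bz+\bc$, a power-series-over-$\F[\veps]$ decomposition
$$
\veps^{q-1}L(\bz+\bc)=\sum_{i=1}^{a}\bp_i^{(1)}(\bz,\veps)\otimes\cdots\otimes\bp_i^{(k)}(\bz,\veps)\ +\ \veps^{q}\cdot(\text{remainder}),
$$
where $L=\phi\circ\psi$ and the $\bp_i^{(j)}$ are $m$-vectors of power series in $\bz$ with coefficients in $\F[\veps]$ (one may truncate $\veps$ modulo $\veps^{q}$ throughout, keeping everything polynomial in $\veps$).

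Next I would run the same set-multidegree bookkeeping as in Section~\ref{sec:tensor-rank}. Because $\psi(\bz)$ has entries that are set-multihomogeneous of set-multidegree $(1,\dots,1)$ and $\phi$ is linear, extracting the $(1,\dots,1)$ set-multigraded component of the above identity (the $\veps$-coefficients do not interfere with the $\bz$-grading) gives a \emph{finite} monomial decomposition exactly as in Equation~(\ref{L-truncated}), now with coefficient vectors $\bp^{(j)}_{i,\be}\in\F[\veps]^m$. Define the $\F[\veps]$-submodules
$$
\mathcal{C}^i_{I_j}:=\text{$\F[\veps]$-span}\bigl(\bp^{(j)}_{i,\be}\ \bigm|\ \smdeg(\be)=\bdelta_{I_j}\bigr)\subseteq\F[\veps]^m,\qquad
\mathcal{C}^i_{\I}:=\mathcal{C}^i_{I_1}\otimes\cdots\otimes\mathcal{C}^i_{I_k},
$$
for $\I=(I_1,\dots,I_k)\in\SP(d,k)$. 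The analogue of Lemma~\ref{bubbly} then reads $\phi(S)\tleq_{q}\sum_{i=1}^a\sum_{\I\in\SP(d,k)}\mathcal{C}^i_{\I}$: plugging $\bz=\balpha$ into the truncated identity writes $\veps^{q-1}\phi(\psi(\balpha))$ as an element of $\sum\mathcal{C}^i_{\I}$ modulo $\veps^{q}$, which is precisely the definition of degeneration to that submodule. By Lemma~\ref{border-basic2} we get $\brk(\phi(T))\le\sum_{i,\I}r(\mathcal{C}^i_{\I})$ for every $T\in\Ten(n,d)$, hence
$$
\Bpot(\phi)\le\frac{1}{a}\sum_{i=1}^a\sum_{\I\in\SP(d,k)}r(\mathcal{C}^i_{\I}).
$$
Finally, the rank bound for a basic $\F[\veps]$-submodule gives $r(\mathcal{C}^i_{\I})\le\prod_{j\ne r}\rk_{\F[\veps]}(\mathcal{C}^i_{I_j})$, and each $\rk_{\F[\veps]}(\mathcal{C}^i_{I_j})$ is at most the number of generators $n^{|I_j|}$; choosing $r$ with $|I_r|$ maximal yields $r(\mathcal{C}^i_{\I})\le n^{\lfloor(k-1)d/k\rfloor}$ exactly as in Lemma~\ref{lem:tensor-count}, and summing over the $k^d$ partitions and dividing by $a$ gives $\Bpot(\phi)\le k^d n^{\lfloor(k-1)d/k\rfloor}$.

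The main obstacle I anticipate is the first step: upgrading the pointwise degenerations $\phi(s)\tleq_{q_s}\langle a\rangle$ to a single symbolic degeneration over $\F[\veps]$ with a uniform order $q$ and with the witnessing tensors depending \emph{algebraically/power-series-ly} on the parameters $\bz$. One needs (a) a uniform bound on the order $q$ valid for all $s\in S$ — plausibly because the $s$ range over an irreducible variety and the minimal degeneration order is an upper-semicontinuous-type invariant, so generically constant, with the special locus handled by enlarging $q$ — and (b) an application of the numeric-to-symbolic transfer in which the target map $M$ parametrizes, for this fixed $q$, the set $\{T_1+\veps^q T_2:\ \trk_{\F[\veps]}(T_1)\le a\}$ intersected appropriately, so that $\im(L)\subseteq\im(M)$ holds by hypothesis and Corollary~\ref{cor:ag} applies. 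Once this uniform symbolic degeneration is in hand, the remaining steps are routine transcriptions of Section~\ref{sec:tensor-rank} with ``span'' replaced by ``$\F[\veps]$-span'' and ``$\dim$'' by ``$\rk_{\F[\veps]}$''; the Waring-rank variant (the analogue of Theorem~\ref{PtoT}) follows by the identical argument using ordinary degree in place of set-multidegree.
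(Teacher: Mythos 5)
The second half of your plan --- the set-multidegree bookkeeping, the $\F[\veps]$-submodules $\cC^{i}_{\I}$, the degeneration to a sum of basic submodules via Lemma~\ref{border-basic2}, and the final count over $\SP(d,k)$ --- tracks the paper's argument closely and is essentially correct. The gap is exactly where you flag it: producing a single symbolic degeneration over $\overline{\F(\bz)}$ from the pointwise bound $\brk(\phi(s))\le a$. Your proposed route (uniformize the degeneration order $q$ across $s\in S$, then package order-$q$ degenerations as the image of a polynomial map $M$ and invoke Theorem~\ref{main-ag}/Corollary~\ref{cor:ag}) does not obviously close. A uniform bound on $q$ and on the $\veps$-degree of the witnessing $T_1,T_2$, valid over the whole parameter space, is a genuinely nontrivial ``error-degree'' result that you have not established. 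Moreover, even granting such a bound, the set of order-$q$ degenerations with bounded $\veps$-degree is not the image of a polynomial map on an affine space in the plain sense required by Theorem~\ref{main-ag}: the constraint $\veps^{q-1}T=T_1+\veps^{q}T_2$ forces $T_1\equiv 0\pmod{\veps^{q-1}}$, so the natural parameters (the coefficient vectors of a rank-$\le a$ decomposition of $T_1$) only parametrize degenerations after restriction to a nonlinear subvariety, and $\im(L)\subseteq\im(M)$ is no longer an instance of Corollary~\ref{cor:ag}.

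The paper resolves the crux by a structurally different first step, which bypasses uniformization entirely. It proves in the appendix (Proposition~\ref{field.ind} and Corollary~\ref{eqns.extn.field}) that the ideal of equations cutting out $\overline{\Ten_\F(m,k)_{\le a}}$ is defined over $\F$ and is compatible with base change. From $\brk_\F(L(\bbeta))\le a$ for all $\bbeta$ one then concludes, by a polynomial-identity-testing argument over the infinite field $\F$, that $\brk_{\overline{\F(\bz)}}(L(\bz))\le a$ directly. Strassen's Theorem~\ref{thm:top-eq-alg-rk} is applied \emph{once}, over the algebraically closed field $\overline{\F(\bz)}$, yielding a single $q$ and a degeneration witness with coefficients in $\overline{\F(\bz)}[\veps]$; Proposition~\ref{prop-ps} then supplies power-series expansions of the finitely many algebraic coefficients at a generic $\bc$. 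In other words, the appendix result plays the role that Proposition~\ref{prop:gensymb} played in the rank case, and Proposition~\ref{prop-ps} is reused as you expected, but the first ingredient of Theorem~\ref{main-ag} is replaced rather than reapplied. Your module-theoretic bookkeeping is on target; the route to the symbolic degeneration has a real gap, and the paper's resolution differs from your sketch in a way that matters.
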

One can also prove the analogous result.

\begin{theorem}\label{thm:border-PtoT}
	For any $T_k$-rank method $\phi : P(n,d) \rightarrow \Ten(m, k)$, its border potency is
	$$ \Bpot(\phi) \leq B_{d,k} \cdot n^{\lfloor (k-1)d/k \rfloor}. $$
\end{theorem}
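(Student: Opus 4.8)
The plan is to replay the proof of Theorem~\ref{thm:border-TtoT} with the Waring parametrization in place of the set-multilinear one, working with the ordinary grading by total degree rather than the set multi-grading --- exactly the way Theorem~\ref{PtoT} is obtained from the argument behind Theorem~\ref{TtoT}. So let $\psi : \F^n = P(n,1) \to P(n,d)$ be the parametrization $\ell \mapsto \ell^d$ of $S = \{\ell^d \mid \ell \in P(n,1)\}$, a homogeneous polynomial map of degree $d$, and put $L = \phi \circ \psi : \F^n \to \Ten(m,k)$. Since $\phi$ is linear, $L$ is homogeneous of degree $d$; and since $\im(L) = \phi(S)$, the hypothesis (that $a := \max\{\brk(\phi(s)) \mid s \in S\}$ is finite) says exactly that $L(\bbeta) \in \overline{\Ten(m,k)_{\leq a}}$ for every $\bbeta \in \F^n$.

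The first --- and only genuinely new --- step is the border-rank analogue of Corollary~\ref{cor:ag} (the same device already needed for Theorem~\ref{thm:border-TtoT}): from the fact that $L$ is a polynomial map all of whose values have border rank $\leq a$, one extracts a single degeneration order $q \geq 1$, a point $\bc \in \F^n$ (a generic choice works), and vectors of power series $\bp_i^{(j)}(\bz)$ with coefficients in $\F[\veps]^m$ (for $1 \leq i \leq a$, $1 \leq j \leq k$) such that
\[
\veps^{q-1}\, L(\bz + \bc) \equiv \sum_{i=1}^{a} \bp_i^{(1)}(\bz) \otimes \bp_i^{(2)}(\bz) \otimes \dots \otimes \bp_i^{(k)}(\bz) \pmod{\veps^{q}}.
\]
Writing $\bp_i^{(j)}(\bz) = \sum_{\be \in \N^n} \bp_{i,\be}^{(j)} \bz^{\be}$ and extracting the part of $\bz$-degree exactly $d$ --- legitimate because $L$ is homogeneous of degree $d$, so $L(\bz+\bc)$ has entries of degree $\leq d$ and $L(\bz) = L(\bz+\bc)_d$ --- gives a finite decomposition mirroring Equation~(\ref{L-truncated2}):
\[
\veps^{q-1}\, L(\bz) \equiv \sum_{i=1}^{a} \ \sum_{\substack{\be^{(1)},\dots,\be^{(k)} \\ \sum_j \deg(\be^{(j)}) = d}} \bp_{i,\be^{(1)}}^{(1)} \otimes \dots \otimes \bp_{i,\be^{(k)}}^{(k)} \cdot \bz^{\be^{(1)} + \dots + \be^{(k)}} \pmod{\veps^{q}}.
\]

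Next I would introduce, for each ordered partition $\mu = (\mu_1,\dots,\mu_k) \in \Pp(d,k)$ and each $1 \leq i \leq a$, the basic $\F[\veps]$-submodule $\mathcal{C}^i_\mu := \mathcal{C}^i_{\mu_1} \otimes \dots \otimes \mathcal{C}^i_{\mu_k}$, where $\mathcal{C}^i_{\mu_j} \subseteq \F[\veps]^m$ is the $\F[\veps]$-submodule generated by $\{\bp_{i,\be}^{(j)} \mid \deg(\be) = \mu_j\}$. Substituting $\bz = \balpha$ into the last display and sorting each monomial term by the partition $\mu_j = \deg(\be^{(j)})$ it induces --- the verbatim analogue of Lemma~\ref{bubbly} --- yields $\veps^{q-1} L(\balpha) \in \sum_{i,\mu}\mathcal{C}^i_\mu + \veps^{q}\,\Ten_{\F[\veps]}(m,k)$ for every $\balpha$, i.e. $\phi(S) \tleq_q \sum_{i=1}^{a}\sum_{\mu \in \Pp(d,k)} \mathcal{C}^i_\mu$. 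Lemma~\ref{border-basic2} then propagates this to $\phi(V)$, so $\brk(\phi(T)) \leq \sum_{i=1}^{a}\sum_{\mu} r(\mathcal{C}^i_\mu)$ for all $T \in P(n,d)$. Since each $\mathcal{C}^i_\mu$ is a basic $\F[\veps]$-submodule, the bound on $r$ for basic submodules gives $r(\mathcal{C}^i_\mu) \leq \prod_{j \neq l}\rk_{\F[\veps]}(\mathcal{C}^i_{\mu_j}) \leq \prod_{j\neq l}\binom{n+\mu_j-1}{\mu_j} = \Upsilon_\mu$, with $l$ indexing a largest part of $\mu$; as $\sum_{j\neq l}\mu_j \leq \lfloor (k-1)d/k \rfloor$, this is $O(n^{\lfloor(k-1)d/k\rfloor})$. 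Dividing by $a$ yields $\Bpot(\phi) \leq \sum_{\mu \in \Pp(d,k)} \Upsilon_\mu \leq B_{d,k}\, n^{\lfloor(k-1)d/k\rfloor}$ with the same $B_{d,k}$ as in Theorem~\ref{PtoT}.

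The main obstacle is the symbolic degeneration step of the second paragraph: one must produce a degeneration order $q$ that works uniformly for all $\bbeta \in \F^n$, and then truncate the witnessing tensors $T_1, T_2$ in the relation $\veps^{q-1}L(\bbeta) = T_1 + \veps^q T_2$ in the $\veps$-variable, so that the condition ``$L(\bbeta) \tleq_q \langle a \rangle$'' becomes membership of $\veps^{q-1}L(\bbeta)$ in the image of an honest polynomial map between finite-dimensional vector spaces (the $\veps$-truncated tensor modules). It is precisely here that the equivalence of topological and algebraic border rank, Theorem~\ref{thm:top-eq-alg-rk}, is needed: it is what lets us trade the Zariski closure $\overline{\Ten(m,k)_{\leq a}}$ for such a polynomial-image condition, at which point Corollary~\ref{cor:ag} (hence Theorem~\ref{main-ag}) applies verbatim. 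Everything after that step is the routine Waring-rank bookkeeping of Section~\ref{sec:tensor-rank}, with total degree replacing set multi-degree throughout --- so this proof is to Theorem~\ref{thm:border-TtoT} exactly what the proof of Theorem~\ref{PtoT} is to that of Theorem~\ref{TtoT}.
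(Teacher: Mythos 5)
Your overall strategy is correct and is exactly what the paper asks the reader to supply: the paper only proves Theorem~\ref{thm:border-TtoT} (and only for $k=3$) and then states that the analogous Waring-side result holds. You correctly combine the $\veps$-degeneration machinery of that proof with the ordered-partition bookkeeping of Theorem~\ref{PtoT} --- the Waring parametrization, the passage to ordinary total degree, the basic $\F[\veps]$-submodules $\mathcal{C}^i_\mu$ indexed by $\Pp(d,k)$, the application of Lemma~\ref{border-basic2}, and the final bound $\sum_\mu \Upsilon_\mu$ are all exactly as they should be. The one place where your description deviates from what the paper actually does is the crucial symbolic-degeneration step, and it is worth flagging because your route needs an extra fact.

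You propose to first extract a \emph{uniform} degeneration order $q$ valid for all $\bbeta \in \F^n$, truncate modulo $\veps^q$ so that ``$L(\bbeta) \tleq_q \langle a\rangle$'' becomes membership of a polynomial map's value in the image of another polynomial map over the finite-dimensional $\F$-algebra $\F[\veps]/(\veps^q)$, and only then apply Theorem~\ref{main-ag} (via Corollary~\ref{cor:ag}). This can be made to work, but it relies on a uniform bound on the degeneration order across the whole variety $\overline{\Ten(m,k)_{\leq a}}$, which is \emph{not} what Theorem~\ref{thm:top-eq-alg-rk} gives --- that statement quantifies over one tensor at a time and produces a $q$ depending on the tensor. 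A uniform $q = q(m,k,a)$ does exist (it is a known refinement of Strassen's theorem), but it is not proved in this paper and is not an immediate corollary of the version cited. The paper's actual proof of Theorem~\ref{thm:border-TtoT} avoids this issue entirely and, notably, never invokes Theorem~\ref{main-ag} or Proposition~\ref{prop:gensymb} in the border-rank section. Instead it: (i) transfers $\brk_\F(L(\bbeta)) \leq a$ for all $\bbeta$ to $\brk_{\overline{\F(\bz)}}(L(\bz)) \leq a$ via Corollary~\ref{eqns.extn.field}, using that equations for border rank are defined over $\F$ (a numeric-to-symbolic transfer mechanism specific to closed conditions, quite different in flavor from the Nullstellensatz argument behind Proposition~\ref{prop:gensymb}); (ii) applies Strassen's Theorem~\ref{thm:top-eq-alg-rk} \emph{once}, to the single tensor $L(\bz)$ over the algebraically closed field $\K = \overline{\F(\bz)}$, so a single $q$ comes for free; and (iii) uses only Proposition~\ref{prop-ps} to expand the finitely many entries of the $\K[\veps]$-witnesses into power series around a generic $\bc$. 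If you swap in that three-step argument for your second paragraph, everything else in your write-up goes through verbatim; as written, you either need to cite the uniform-$q$ bound or switch to the paper's route.
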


For simplicity, we shall only prove Theorem~\ref{thm:border-TtoT} in the special case where $k=3$. The proof parallels the proof of Theorem~\ref{TtoT}, with the additional complication of having to work with degenerations.

\begin{proof}[Proof of Theorem~\ref{thm:border-TtoT}, $k=3$ case]
	Let $S \subset \Ten_\F(n, d)$ be the set of tensors with rank $1$, parametrized by
	$\psi : (\F^n)^d \rightarrow \Ten_\F(n, d)$ given by 
	$(\bv_1, \ldots, \bv_d) \mapsto \bv_1 \otimes \cdots \otimes \bv_d$. Let 
	$\phi : \Ten_\F(n,d) \rightarrow \Ten_\F(m,3)$ be a linear map, i.e., a $T_k$-rank method and
	let $a = \max\{ \brk(\phi(s)) \mid s \in S\}.$
	
	Let $\bz = (\bz_1, \ldots, \bz_d)$ be a set of variables such that 
	$\bz_i = (z_{i1}, \ldots, z_{in})$ for each $i \in [d]$. Let $L = \phi \circ \psi$.
	Since $\brk(\phi(s)) \leq a$ for all $s \in S$, we have that $\brk(L(\beta)) \leq a$  for all 
	$\beta \in (\F^n)^d$. Equations for border rank are defined over $\F$
	 (see Corollary\ref{eqns.extn.field}), so we have that 
	the border rank of the symbolic tensor $L(\bz)$ is also $\leq a$. More precisely, 
	Corollary~\ref{eqns.extn.field} implies that $\brk_\K(L(\bz)) \leq a$, where border 
	rank is over the field $\K := \overline{\F(\bz)}$. We must go to the algebraic closure so that we can switch from the notion of (topological) border rank to algebraic border rank.
	 	
	Thus, by Theorem~\ref{thm:top-eq-alg-rk} over the field $\K$, we have that there exist tensors
	$T_1 \in \Ten_{\K[\veps]}(m,3)_{\leq a}$ and $T_2 \in \Ten_{\K[\veps]}(m, 3)$ and $q \in \N$ 
	such that 
	$$ \veps^{q-1} \cdot L(\bz) = T_1 + \veps^q \cdot T_2. $$
	
	Write $T_1 = \sum_{\ell=1}^a \bp_{\ell} \otimes \bq_{\ell} \otimes \br_{\ell}$, where 
	$\bp_{\ell}, \bq_{\ell}, \br_{\ell} \in \K[\veps]^m$ and denote the entries of $T_2$ by 
	$T_2(i,j,k) \in \K[\veps]$, where $i,j,k \in [m]$. 

	Since each entry of $\bp_{\ell}, \bq_{\ell}, \br_{\ell}, T_2$ is a
	polynomial in $\K[\veps]$, there exists $D \in \N$ such that we can write 
	$$ \bp_{\ell}(\bz, \veps) = \sum_{d=0}^D \bp_{\ell,d}(\bz) \cdot \veps^d,  $$ 
	similarly for $\bq_{\ell}$ and $\br_{\ell}$ and we can write
	$$ T_2(i,j,k)(\bz, \veps) = \sum_{d=0}^D T_2(i,j,k,d)(\bz) \cdot \veps^d, $$
	where $\bp_{\ell,d}(\bz), \bq_{\ell,d}(\bz), \br_{\ell,d}(\bz)$ are
	vectors in $\K^m$ and $T_2(i,j,k,d)(\bz) \in \K.$
	
	Let $C \subset \K$ be the set of all entries of 
	$\bp_{\ell,d}, \bq_{\ell,d}, \br_{\ell,d}$ and of all $T_2(i,j,k,d)$, for all
	ranges of $i,j,k,d, \ell$. 
	 $C$ is a finite set. 
	Therefore, Proposition~\ref{prop-ps} applies, and there exists a choice of 
	$\bc := (\bc_1, \ldots, \bc_d) \in (\F^n)^d$ such that all elements of $C$ have a power series 
	decomposition around the point $\bc$. Thus, this yields:
	
	\begin{equation}\label{eq:brk-symbolic} 
	\veps^{q-1} \cdot L(\bz+\bc) = \veps^q \cdot \hat{T}_2(\bz+\bc, \veps) \ + \ 
	\sum_{\ell=1}^a \hat{\bp}_{\ell}(\bz, \veps) \otimes \hat{\bq}_{\ell}(\bz, \veps) 
	\otimes \hat{\br}_{\ell}(\bz, \veps), 
	\end{equation}
	where $\hat{\bp}_{\ell}, \hat{\bq}_{\ell}, \hat{\br}_{\ell}, \hat{T}_2$ are 
	given by the power series decomposition around $\bo$. More precisely, they are given by:
	\begin{equation}\label{eq:b-pow-ser1} \hat{\bp}_{\ell}(\bz, \veps) = 
	\sum_{d=0}^D \sum_{\be \in \N^{dn}} \bp_{\ell, d, \be} \cdot \bz^\be \cdot \veps^d =
	\sum_{\be \in \N^{dn}} \tilde{\bp}_{\ell, \be}(\veps) \cdot \bz^\be,   
	\end{equation}
	similarly for $\bq_{\ell}$ and $\br_{\ell}$ and
	\begin{equation}\label{eq:b-pow-ser2} T_2(i,j,k)(\bz, \veps) = 
	\sum_{d=0}^D \sum_{\be \in \N^{dn}} T_2(i,j,k,d)_\be \cdot \bz^\be \cdot \veps^d =
	\sum_{\be \in \N^{dn}} \tilde{T}_2(i,j,k)_\be(\veps) \cdot \bz^\be, 
	\end{equation}
	where $\bp_{\ell,d,\be}, \bq_{\ell,d, \be}, \br_{\ell,d, \be} \in \F^m$ and 
	$T_2(i,j,k,d)_\be \in \F$ are the coefficients of the power series expansions, and
	$\tilde{\bp}_{\ell, \be}(\veps), \tilde{\bq}_{\ell, \be}(\veps), 
	\tilde{\br}_{\ell, \be}(\veps) \in \F[\veps]^m$
	and $\tilde{T}_2(i,j,k)_\be(\veps) \in \F[\veps]$ are simply the coefficients we obtain by
	grouping the elements of the power series with same monomial $\bz^\be$.

		
	Recall that $\SP(d,3)$ denotes the set of all partitions of $[d]$ into 
	$3$ sets. For $\I = (I_p,I_q, I_r) \in \SP(d,3)$ and $\ell \in [a]$, define
	$$ \cC^{\ell}_{\I} := \cC^{\ell}_{I_p} \otimes \cC^{\ell}_{I_q} \otimes \cC^{\ell}_{I_r}, $$
	where
	$$ \cC^{\ell}_{I_p} := \spa_{\F[\veps]}(\tilde{\bp}_{\ell, \be}(\veps) \mid 
	\smdeg(\be) = \bdelta_{I_p}) \subseteq \F[\veps]^m.$$
	Again, note here that $\bdelta_{I_p}$ denotes the indicator vector for the subset $I_p \subseteq [d]$. $\cC^{\ell}_{I_q}, \cC^{\ell}_{I_r}$ are analogously defined. 
	Since the entries of $L(\bz)$ are all set multi-homogenous of $\smdeg (1,1,\dots,1)$,  
	equations~\eqref{eq:brk-symbolic} and~\eqref{eq:b-pow-ser1} give us
	\begin{align*} 
	\veps^{q-1} \cdot L(\bz) 
	&= \veps^{q-1} \cdot L(\bz + \bc)_{(1,\ldots,1)} \\
	&= \veps^q \cdot \hat{T}_2(\bz+\bc, \veps)_{(1,\ldots,1)} \ + \ 
	\sum_{\ell=1}^a \left( \hat{\bp}_{\ell}(\bz, \veps) \otimes \hat{\bq}_{\ell}(\bz, \veps) 
	\otimes \hat{\br}_{\ell}(\bz, \veps) \right)_{(1,\ldots,1)} \\
	&= \veps^q \cdot \hat{T}_2(\bz+\bc, \veps)_{(1,\ldots,1)} \ + \ 
	\sum_{\ell=1}^a \sum_{(\be_p,\be_q,\be_r) \in \cJ} 
	\tilde{\bp}_{\ell, \be_p}(\veps) \otimes \tilde{\bq}_{\ell, \be_q}(\veps) 
	\otimes \tilde{\br}_{\ell, \be_r}(\veps) \bz^{\be_p + \be_q + \be_r}, 
	\end{align*}  

	where $\cJ = \{ (\be_p, \be_q, \be_r) \mid 
	\smdeg(\be_p) + \smdeg(\be_q) + \smdeg(\be_r) = (1, \ldots, 1)  \}$, that is,
	the set of monomials in $\bz$ such that their set mutli-degree adds to the $(1,\ldots, 1)$
	vector.	We will now prove that 
	$$\phi(S) \tleq_q \sum_{\ell=1}^a \sum_{\I \in \SP(d,3)} \cC^{\ell}_{\I}.$$
	
	Note that for any $s \in S$, we have $s = \psi(\balpha)$ for some $\balpha \in (\F^n)^d$. 
	So, from the above equation, we get
	
	$$
	\veps^{q-1} \phi(s) = \veps^q \cdot \hat{T}_2(\balpha +\bc, \veps)_{(1,\ldots,1)} \ + \ 
	\sum_{\ell=1}^a \sum_{(\be_p,\be_q,\be_r) \in \cJ} 
	\tilde{\bp}_{\ell, \be_p}(\veps) \otimes \tilde{\bq}_{\ell, \be_q}(\veps) 
	\otimes \tilde{\br}_{\ell, \be_r}(\veps) \balpha^{\be_p + \be_q + \be_r} 
	$$ 
	
	Thus, it suffices to show that $$
	\sum_{\ell=1}^a \sum_{(\be_p,\be_q,\be_r) \in \cJ} 
	\tilde{\bp}_{\ell, \be_p}(\veps) \otimes \tilde{\bq}_{\ell, \be_q}(\veps) 
	\otimes \tilde{\br}_{\ell, \be_r}(\veps) \balpha^{\be_p + \be_q + \be_r} \in \sum_{\ell=1}^a \sum_{\I \in \SP(d,3)} \cC^{\ell}_{\I}.
	$$
	
	Pick any term $t = \tilde{\bp}_{\ell, \be_p}(\veps) \otimes \tilde{\bq}_{\ell, \be_q}(\veps) 
	\otimes \tilde{\br}_{\ell, \be_r}(\veps) \balpha^{\be_p + \be_q + \be_r}$. Let $I_p$ be the support subset of $\smdeg(\be_p)$, i.e., the subset of positions with non-zero entries. Define $I_q,I_r$ similarly, and let $\I = (I_p,I_q,I_r) \in \SP(d,3)$. Then $t \in \cC^{\ell}_{\I}$.
	
	Hence, we have shown that that $\phi(S) \tleq_q \sum_{\ell=1}^a \sum_{\I \in \SP(d,3)} \cC^{\ell}_{\I}.$ Applying Lemma~\ref{border-basic2}, we deduce that for all $T \in V$, 
	
$$
\brk(T) \leq  \sum_{\ell=1}^a \sum_{\I \in \SP(d,3)} r(\cC^{\ell}_{\I}).
$$

So, all that is left is to upper bound the right hand side. But this is precisely the same 
calculation from Lemma~\ref{lem:tensor-count}, giving us the required upper bound on border potency. 
\end{proof}

\section{Generalizations, Cactus rank and improvements to rank methods}\label{sec:gen-cac-imp}
The aim of this section is two fold. First, there is an alternative approach to establishing barriers for rank methods using the notion of cactus rank. By infusing our techniques with a little trick, we will show that both approaches establish barriers by counting the same things. Despite this, there seems to be no straightforward connection between the two approaches. We want to point out in particular that the barriers to the generalized rank methods that we prove in this paper have no analogue in the cactus rank approach. Second, we want to extend the barriers for matrix-rank methods to the setting of set multi-homogenous rank, which is a generalization of both Waring and tensor rank.

We first make a simple observation:

\begin{lemma}
Suppose $T \subseteq S$ are two spanning subsets of $V$. Then for any matrix rank method $\phi: V \rightarrow \M_{k,l}$ its potency for computing lower bounds on $S$-rank is less than or equal to its potency for computing lower bounds for $T$-rank.
\end{lemma}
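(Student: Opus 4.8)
The plan is to unwind the definition of potency and compare the two ratios term by term. Recall that for a matrix-rank method $\phi\colon V\to\M_{k,l}$ and a spanning set of simples $S\subseteq V$, the associated sub-additive measure is $\mu_\phi(v)=\rk(\phi(v))$, and $\Pot(\phi)$ computed against $S$-rank equals $\mu_\phi(V)/\mu_\phi(S)$, where $\mu_\phi(V)=\max\{\rk(\phi(v))\mid v\in V\}$ and $\mu_\phi(S)=\max\{\rk(\phi(s))\mid s\in S\}$. The first thing to note is that the numerator $\mu_\phi(V)$ is intrinsic to the pair $(\phi,V)$ and makes no reference to the chosen set of simples; hence it is literally the same quantity whether we are computing the potency against $S$-rank or against $T$-rank.

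The heart of the matter is then the trivial monotonicity observation: since $T\subseteq S$, we have the inclusion of value sets $\{\rk(\phi(t))\mid t\in T\}\subseteq\{\rk(\phi(s))\mid s\in S\}$, and taking maxima gives $\mu_\phi(T)\le\mu_\phi(S)$. Dividing the common numerator $\mu_\phi(V)$ by the smaller number $\mu_\phi(T)$ only enlarges the ratio, so
$$
\Pot_S(\phi)\;=\;\frac{\mu_\phi(V)}{\mu_\phi(S)}\;\le\;\frac{\mu_\phi(V)}{\mu_\phi(T)}\;=\;\Pot_T(\phi),
$$
which is exactly the asserted inequality. (Conceptually this just reflects that the $T$-rank is a ``harder'' measure, $\rk_T(v)\ge\rk_S(v)$, so the best lower bound $\rk(\phi(v))/\mu_\phi(T)$ obtainable from $\phi$ is correspondingly larger.)

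The only place where care is needed is in checking that both ratios are well defined, and this is where the two spanning hypotheses are used. If $\phi=0$ the statement is vacuous (or excluded, as the definition of potency asks for $\mu\colon V\to\R_{>0}$); otherwise, since $T$ spans $V$ and $\phi$ is linear and nonzero, some $t\in T$ has $\phi(t)\ne 0$, whence $\mu_\phi(T)\ge 1$ and a fortiori $\mu_\phi(S)\ge 1$, so the display is a genuine comparison of positive reals. I do not expect any real obstacle: the entire content of the lemma is the monotonicity of $\mu_\phi(\cdot)$ under shrinking the set of simples, and everything else is bookkeeping with the definitions.
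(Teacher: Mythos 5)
Your proof is correct, and since the paper states this lemma without any proof (it is treated as immediate from the definitions), your argument is exactly the one the authors evidently had in mind: $\mu_\phi(V)$ is independent of the chosen set of simples, while $\mu_\phi(T)\le\mu_\phi(S)$ by monotonicity of the max over $T\subseteq S$, so the ratio can only grow when the denominator shrinks. The remark about well-definedness is fine but inessential to the comparison.
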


So, proving upper bounds for potency of matrix-rank methods for $T$-rank will automatically prove upper bounds for potency of matrix-rank methods for $S$-rank. The proof of upper bounds for potency really only depends on the parametrization of $S$. Roughly speaking, since $T$ is smaller, we might be able to get a smaller parametrization which could help prove sharper bounds. Let us exhibit this explicitly in the case of Waring rank.

\begin{lemma}
For any matrix-rank method $\phi: P(n+1,d) \rightarrow \M_{k,l}$, we have
$$
\Pot(\phi) \leq Y_{n,d} + Z_{n,d}.
$$
\end{lemma}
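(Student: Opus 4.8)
The plan is to apply the preceding lemma with a carefully chosen smaller spanning set $T \subseteq S$, and then to rerun the proof of Theorem~\ref{barr-egow-war} using the cheaper parametrization that $T$ provides. Write $P(n+1,d) = \F[x_0,x_1,\dots,x_n]_d$, let $S = \{\ell^d \mid \ell \in P(n+1,1)\}$ be the usual simples (so $\Pot(\phi)$ is taken with respect to Waring rank, i.e.\ $S$-rank), and set
$$
T := \{(x_0 + \alpha_1 x_1 + \dots + \alpha_n x_n)^d \mid \balpha = (\alpha_1,\dots,\alpha_n) \in \F^n\} \subseteq S .
$$
By the preceding lemma, the potency of $\phi$ for $S$-rank is at most its potency for $T$-rank, so it suffices to bound the latter.

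The one point that is genuinely new is that $T$ still spans $P(n+1,d)$. To see this I would dehomogenize with respect to $x_0$: this gives a linear isomorphism $P(n+1,d) \cong \{g \in \F[x_1,\dots,x_n] \mid \deg g \le d\}$ under which $(x_0 + \balpha\cdot\bx)^d$ maps to $(1 + \alpha_1 x_1 + \dots + \alpha_n x_n)^d = \sum_{\be \in \N^n,\ \deg(\be)\le d} c_{\be}\, \balpha^{\be}\, \bx^{\be}$, with $c_{\be} = d!/((d-\deg(\be))!\,\be!) \neq 0$ in characteristic zero. As $\balpha$ varies over $\F^n$, the coefficient vectors $(c_{\be}\balpha^{\be})_{\be}$ span the target: a linear relation $\sum_{\be} \lambda_{\be} c_{\be} \balpha^{\be} \equiv 0$ on $\F^n$ forces every $\lambda_{\be} = 0$, since the monomials $\balpha^{\be}$ with $\deg(\be) \le d$ are linearly independent as functions on the infinite field $\F$. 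Because the number of such $\be$ equals $\binom{n+d}{n} = \dim P(n+1,d)$, the images of $T$ span, hence $T$ spans.

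Next I would parametrize $T$ by $\psi \colon \F^n \to P(n+1,d)$, $\balpha \mapsto (x_0 + \alpha_1 x_1 + \dots + \alpha_n x_n)^d$, which is a polynomial map in the \emph{$n$} variables $\balpha$ with component polynomials of degree $\le d$ (the $x_0^d$-coefficient is the constant $1$; the map is not homogeneous, but homogeneity is never used in the proof of Theorem~\ref{barr-egow-war} --- only the degree bound is). Writing $L := \phi \circ \psi \colon \F^n \to \M_{k,l}$ and $a := \mu_\phi(T)$, we have $\im(L) = \phi(T)$ and hence $\rk(L(\balpha)) \le a$ for all $\balpha$. From here the steps of the proof sketch of Theorem~\ref{barr-egow-war} apply verbatim, with $n$ everywhere in place of $n+1$: pass to $\F(z_1,\dots,z_n)$ to obtain $L(\bz) = \sum_{i=1}^a \bp_i(\bz) \otimes \bq_i(\bz)$; shift $\bz \mapsto \bz+\bc$ so that the rational functions are power-series expandable at $\bo$; use $\deg(L) \le d$ to truncate to a finite monomial decomposition indexed by $(i,\be,\bff)$ with $\be,\bff \in \N^n$ and $\deg(\be)+\deg(\bff) \le d$; take $U_1 = \spa(\bp_{i,\be} \mid \deg(\be) \le \lfloor d/2\rfloor)$ and $U_2 = \spa(\bq_{i,\bff} \mid \deg(\bff) \le d - \lfloor d/2\rfloor - 1)$; and apply Lemma~\ref{basic} to get $\mu_\phi(P(n+1,d)) \le a\,Y_{n,d} + a\,Z_{n,d}$, since $Y_{n,d}$ (resp.\ $Z_{n,d}$) is precisely the number of monomials in $n$ variables of degree $\le \lfloor d/2\rfloor$ (resp.\ $\le d - \lfloor d/2\rfloor - 1$). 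Dividing by $a$ bounds the potency of $\phi$ for $T$-rank, and therefore for $S$-rank, by $Y_{n,d} + Z_{n,d}$.

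The only real content beyond bookkeeping is the spanning claim for $T$; once that is in hand, everything else is the earlier argument carried out with one fewer variable in the parametrization, which is exactly what buys the improvement from $P(n,d)$ to $P(n+1,d)$.
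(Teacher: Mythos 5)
Your proof is correct and follows the paper's argument essentially verbatim: restrict to the same affine slice $T$ of $S$, invoke the monotonicity lemma, and rerun the proof of Theorem~\ref{barr-egow-war} with the cheaper $n$-variable, degree-$\le d$ (non-homogeneous) parametrization. The only place you go beyond the paper is in supplying a full dehomogenization argument for the spanning claim, which the paper explicitly leaves as an exercise; your argument for it is correct.
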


First, note that this is indeed stronger than the statement of Theorem~\ref{barr-egow-war} in the introduction because we are considering degree $d$ polynomials in $n+1$ variables (as opposed to $n$ variables). 

\begin{proof}
Consider the subset $T = \{(a_1x_1 + \dots +a_nx_n + x_{n+1})^d \ | a_i \in \F\} \subseteq S = \{\ell^d\ \ell \in P(n+1,d)\}$. We leave it to the reader to check that $T$ is also a spanning subset\footnote{It suffices to check that $\spa(T)$ contains $S$.}. Now, observe that $\pi: \F^n \rightarrow P(n+1,d)$ given by $(a_1,\dots,a_n) \mapsto (a_1x_1 + \dots +a_nx_n + x_{n+1})^d$ parametrizes $T$. This parametrization requires $n$ variables. One should note that while $\psi$ was homogenous map of degree $d$, $\pi$ is not. However, the homogenous components of the map $\pi$ are all of degree $\leq d$. This is sufficient. By replacing $\psi$ by $\pi$ in the proof of Theorem~\ref{barr-egow-war}, we get the required upper bounds on potency of $T$-rank, and hence upper bounds on potency of $S$-rank.
\end{proof}

Let us now define the set multi-homogeneous rank.

\begin{definition} \label{smh}
Let $\bn = (n_1,n_2,\dots,n_k)$ and $\bd = (d_1,\dots,d_k)$, and let $d = d_1 + \dots + d_k$. Let 
$$
V(\bn,\bd) = P(n_1 + 1, d_1) \otimes P(n_2 + 1,d_2) \otimes \dots P(n_k+1,d_k).
$$
Let $S(\bn,\bd) = \{\ell_1^{d_1} \otimes \ell_2^{d_2} \otimes \dots \otimes \ell_k^{d_k} \ | \ \ell_i \in P(n_i + 1)\} \subseteq V(\bn,\bd)$ be a set of simples. Then for $v \in V$, we define 
$$
\rk_{\bn,\bd}(v) := \rk_{S(\bn,\bd)} (v).
$$
\end{definition} 

Note that each $\ell_i$ is a linear form in $n_i + 1$ variables rather than $n_i$ variables. In particular, tensor rank for tensors in $\Ten(n,d)$ is $\rk_{(n-1,n-1,\dots,n-1), (1,1,\dots,1)}$ and Waring rank for degree $d$ homogeneous polynomials in $n$ variables is $\rk_{n - 1,d}$. $S(\bn,\bd)$ is a subvariety of $V(\bn,\bd)$ and is sometimes called the Segre-Veronese variety.\footnote{To be precise it is the affine cone over the Segre-Veronese variety.}

Using the same proof as Theorem~\ref{barr-egow-war} and Theorem~\ref{barr-egow-ten}, along with the additional improvement given by the lemma above, we get:

\begin{theorem} \label{theo:improv}
Let $\bz = (\bz_1,\dots,\bz_k)$ where each $\bz_i = (\bz_{i1},\dots,\bz_{in_i})$ denote a set of variables, and define set multi-grading as before. For any rank method $\phi: V(\bn,\bd) \rightarrow \M_{p,q}$, its potency is upper bounded by
$$
\Pot(\phi) \leq Y_{\bn,\bd} + Z_{\bn,\bd},
$$
where
$$
Y_{\bn,\bd} = \text{number of monomials in $\bz$ of $\smdeg \preceq \bd$ and total degree $\leq \lfloor d/2 \rfloor$, and}
$$

$$
Z_{\bn,\bd} = \text{number of monomials in $\bz$ of $\smdeg \preceq \bd$ and total degree $\leq d  - (\lfloor d/2 \rfloor + 1)$}.
$$
\end{theorem}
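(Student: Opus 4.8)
The plan is to rerun the proof sketch of Theorem~\ref{barr-egow-war} from Section~\ref{sec:new ideas}, with three modifications: work with the set multi-grading rather than total degree; parametrize a smaller spanning subset by affine (not linear) forms, exactly as in the lemma on $P(n+1,d)$ above, so that only $N:=n_1+\dots+n_k$ parameters are used; and control the truncation to a finite monomial decomposition by the condition $\smdeg\preceq\bd$. No ``numeric to symbolic'' transfer is needed: the target is a matrix space, so the symbolic decomposition comes from the vanishing of minors. Concretely, for each $i$ let $T_i\subseteq P(n_i+1,d_i)$ be the set of $d_i$-th powers $\ell^{d_i}$ of linear forms $\ell\in P(n_i+1,1)$ whose last coefficient equals $1$. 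As in the proof of the preceding lemma, $d_i$-th powers span $P(n_i+1,d_i)$ and a one-variable Vandermonde argument gives $\spa(T_i)=P(n_i+1,d_i)$; since $\{t_1\otimes\dots\otimes t_k\mid t_i\in T_i\}$ spans the tensor product whenever each $T_i$ spans its factor, $T:=\{t_1\otimes\dots\otimes t_k\mid t_i\in T_i\}$ is a spanning subset of $V(\bn,\bd)$ contained in $S(\bn,\bd)$. Let $\pi:\F^N\to V(\bn,\bd)$ be the obvious parametrization of $T$, with the free coefficients of the $i$-th linear form taken to be $\bz_i=(z_{i1},\dots,z_{in_i})$; every entry of $\pi(\bz)$ is then a polynomial all of whose monomials $\bz^\be$ satisfy $\smdeg(\be)\preceq\bd$. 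Put $a:=\mu_\phi(S(\bn,\bd))$ and $L:=\phi\circ\pi:\F^N\to\M_{p,q}$, so that $\im(L)=\phi(T)$ and $\rk(L(\balpha))\le\mu_\phi(T)\le a$ for all $\balpha\in\F^N$.

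\emph{Symbolic and power series decomposition, then truncation.} With $\K=\F(\bz)$, the condition $\rk\le a$ is cut out by $(a+1)\times(a+1)$ minors, so the previous paragraph and the infinitude of $\F$ give $\rk_\K(L(\bz))\le a$, hence $L(\bz)=\sum_{i=1}^a\bp_i(\bz)\otimes\bq_i(\bz)$ with $\bp_i(\bz)\in\K^p$, $\bq_i(\bz)\in\K^q$. Choosing $\bc\in\F^N$ so that after the shift $\bz\mapsto\bz+\bc$ none of the finitely many denominators vanishes at $\bo$ and expanding into power series,
\[
L(\bz+\bc)=\sum_{i=1}^a\Big(\sum_{\be}\bp_{i,\be}\bz^\be\Big)\otimes\Big(\sum_{\bff}\bq_{i,\bff}\bz^\bff\Big),\qquad\bp_{i,\be}\in\F^p,\ \bq_{i,\bff}\in\F^q.
\]
Every monomial occurring in $L(\bz)$ has $\smdeg\preceq\bd$, and this persists after the shift, since $(\bz+\bc)^\be$ is an $\F$-combination of monomials $\bz^{\be'}$ with $\be'\preceq\be$, hence with $\smdeg(\be')\preceq\smdeg(\be)\preceq\bd$. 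Collecting the right-hand side monomial by monomial, the aggregate contribution of all triples with $\smdeg(\be)+\smdeg(\bff)\not\preceq\bd$ vanishes, so with the \emph{finite} set $I:=\{(i,\be,\bff)\mid 1\le i\le a,\ \smdeg(\be)+\smdeg(\bff)\preceq\bd\}$ we obtain
\[
L(\bz+\bc)=\sum_{(i,\be,\bff)\in I}\bp_{i,\be}\bz^\be\otimes\bq_{i,\bff}\bz^\bff,
\]
and $(i,\be,\bff)\in I$ forces $\smdeg(\be)\preceq\bd$, $\smdeg(\bff)\preceq\bd$ and in particular $\deg(\be)+\deg(\bff)\le d$.

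\emph{Subspaces and counting.} Put $U_1:=\spa(\bp_{i,\be}\mid(i,\be,\bff)\in I\text{ for some }\bff,\ \deg(\be)\le\lfloor d/2\rfloor)$ and $U_2:=\spa(\bq_{i,\bff}\mid(i,\be,\bff)\in I\text{ for some }\be,\ \deg(\bff)\le d-\lfloor d/2\rfloor-1)$. Any $s\in T$ equals $\pi(\balpha+\bc)$ for some $\balpha\in\F^N$, whence $\phi(s)=L(\balpha+\bc)=\sum_{(i,\be,\bff)\in I}\bp_{i,\be}\balpha^\be\otimes\bq_{i,\bff}\balpha^\bff$; in each summand $\deg(\be)+\deg(\bff)\le d$ forces either $\deg(\be)\le\lfloor d/2\rfloor$, so the summand lies in $\bp_{i,\be}\otimes\F^q\subseteq U_1\otimes\F^q$, or $\deg(\bff)\le d-\lfloor d/2\rfloor-1$, so it lies in $\F^p\otimes U_2$. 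Therefore $\phi(T)\subseteq U_1\otimes\F^q+\F^p\otimes U_2$; as this is a subspace and $T$ spans $V(\bn,\bd)$, also $\phi(V(\bn,\bd))\subseteq U_1\otimes\F^q+\F^p\otimes U_2$, and every matrix in it has rank $\le\dim U_1+\dim U_2$ (cf.\ Lemma~\ref{basic}), so $\mu_\phi(V(\bn,\bd))\le\dim U_1+\dim U_2$. The defining spanning set of $U_1$ is indexed by $\{1,\dots,a\}$ times the set of exponent vectors $\be$ with $\smdeg(\be)\preceq\bd$ and $\deg(\be)\le\lfloor d/2\rfloor$, which has size $Y_{\bn,\bd}$; hence $\dim U_1\le a\,Y_{\bn,\bd}$, and likewise $\dim U_2\le a\,Z_{\bn,\bd}$. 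Dividing by $a=\mu_\phi(S(\bn,\bd))$ gives $\Pot(\phi)\le Y_{\bn,\bd}+Z_{\bn,\bd}$.

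\emph{Main obstacle.} The only genuinely new point is the truncation step: one has to see that $\smdeg\preceq\bd$, rather than a mere total-degree bound, is the invariant forcing the power series decomposition to collapse to a finite monomial decomposition, and to check that this invariant is preserved by the generic shift $\bz\mapsto\bz+\bc$. Everything else is bookkeeping, and the replacement of $n_i+1$ by $n_i$ free parameters in the parametrization is exactly what refines the crude count into $Y_{\bn,\bd}+Z_{\bn,\bd}$, matching the improved lemma preceding the theorem.
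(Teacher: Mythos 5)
Your proof is correct and is essentially the proof the paper intends but omits: you combine the set-multigrading machinery used for the tensor case with the affine parametrization trick from the preceding lemma, relying on minors for the symbolic decomposition since the target is a matrix space. The two places where care is needed — that the monomials of the affine parametrization $\pi(\bz)$ all have $\smdeg\preceq\bd$, and that this property survives the generic shift $\bz\mapsto\bz+\bc$ because $(\bz+\bc)^{\be}$ expands into monomials $\bz^{\be'}$ with $\be'\preceq\be$ — are both handled correctly, and the resulting count matches $Y_{\bn,\bd}$ and $Z_{\bn,\bd}$ exactly as stated.
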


We omit the details. The number $Y_{\bn,\bd} + Z_{\bn,\bd}$ is the upper bound on the cactus rank obtained in \cite{G16ma}. An explicit upper bound for $Y_{\bn,\bd} + Z_{\bn,\bd}$ can be found on \cite[Page~18]{G16ma}. Let us state the bounds one obtains for the potency of matrix-rank methods for tensor rank and Waring rank with these improvements.

\begin{corollary}
Specializing the above result, we get the following:
\begin{itemize}
\setlength\itemsep{1em}
\item An upper bound of $N(n+1,d)$ on the potency of rank methods for Waring rank of degree $d$ homogeneous polynomials in $n+1$ variables, where
$$
N(n+1,d) = \begin{cases} 2{n+k \choose k} & \text{when } d = 2k+1, \\
 {n +k \choose k} + {n+k+1 \choose k+1} & \text{when } d = 2k+2.
  \end{cases}
$$
This is equal to the cactus rank bound obtained in \cite[Theorem~3]{BR13}.

\item An upper bound of $M(n+1,d)$ for the potency of rank methods for tensor rank for tensors in $\Ten(n+1,d)$, where
$$
M(n+1,d) = \begin{cases} 2 \left(1 + dn + {d \choose 2}n^2 + \dots  + {d \choose \lfloor d/2 \rfloor} n^{\lfloor d/2 \rfloor} \right) & \text{if $d$ is odd,} \vspace{10pt} \\

2 \left(1 + dn + {d \choose 2}n^2 + \dots  + {d \choose \lfloor d/2 \rfloor - 1} n^{\lfloor d/2\rfloor -1} \right) + {d \choose \lfloor d/2 \rfloor} n^{\lfloor d/2 \rfloor} & \text{if $d$ is even.} 
\end{cases}
$$
This is equal to the cactus rank bound obtained in \cite[Example~6.3]{G16ma}.

\item An upper bound of $2n_1 + 2n_2 + 2n_3 - 4$ for the potency of rank methods for tensor rank of tensors in $\F^{n_1} \otimes \F^{n_2} \otimes \F^{n_3}$. This is equal to the cactus rank bound obtained by~\cite{B18p}. It also follows from the results in \cite{G16ma}.
\end{itemize}

\end{corollary}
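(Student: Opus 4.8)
The plan is to specialize Theorem~\ref{theo:improv} to three choices of $(\bn,\bd)$ and, in each case, evaluate the monomial counts $Y_{\bn,\bd}$ and $Z_{\bn,\bd}$. Recall that for such a choice, with $\bn=(n_1,\dots,n_k)$ and $\bd=(d_1,\dots,d_k)$, the monomials being counted live in $\bz=(\bz_1,\dots,\bz_k)$ with $\bz_i$ a group of $n_i$ variables, $Y_{\bn,\bd}$ tallies those of $\smdeg\preceq\bd$ and total degree $\le\lfloor d/2\rfloor$, and $Z_{\bn,\bd}$ tallies those of $\smdeg\preceq\bd$ and total degree $\le d-(\lfloor d/2\rfloor+1)$, where $d=\sum_i d_i$.

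For the Waring rank statement I would take $k=1$, $\bn=(n)$ and $\bd=(d)$, so that $V(\bn,\bd)=P(n+1,d)$ and $\rk_{\bn,\bd}$ is the Waring rank of degree-$d$ forms in $n+1$ variables. With a single group the condition $\smdeg\preceq(d)$ holds automatically as soon as the total degree is at most $d$, so $Y_{\bn,\bd}$ is simply the number of monomials in $n$ variables of total degree $\le\lfloor d/2\rfloor$, namely $\binom{n+\lfloor d/2\rfloor}{\lfloor d/2\rfloor}$, and $Z_{\bn,\bd}$ is the number of total degree $\le d-(\lfloor d/2\rfloor+1)$. Writing $d=2k+1$ makes both cut-offs equal to $k$, so $Y_{\bn,\bd}=Z_{\bn,\bd}=\binom{n+k}{k}$; writing $d=2k+2$ gives cut-offs $k+1$ and $k$, so $Y_{\bn,\bd}=\binom{n+k+1}{k+1}$ and $Z_{\bn,\bd}=\binom{n+k}{k}$. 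In either case $Y_{\bn,\bd}+Z_{\bn,\bd}=N(n+1,d)$.

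For the tensor rank statement I would take $k=d$, $\bn=(n,\dots,n)$ and $\bd=(1,\dots,1)$, so that $V(\bn,\bd)=\Ten(n+1,d)$ and $\rk_{\bn,\bd}=\trk$. A monomial with $\smdeg\preceq(1,\dots,1)$ is obtained by choosing a subset $J\subseteq[d]$ and, for each $i\in J$, one of the $n$ variables of $\bz_i$, and its total degree is $|J|$; hence the number of such monomials of total degree exactly $p$ is $\binom{d}{p}n^p$, giving $Y_{\bn,\bd}=\sum_{p=0}^{\lfloor d/2\rfloor}\binom{d}{p}n^p$ and $Z_{\bn,\bd}=\sum_{p=0}^{d-(\lfloor d/2\rfloor+1)}\binom{d}{p}n^p$. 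For $d$ odd the two ranges coincide and the sum is $2\sum_{p=0}^{\lfloor d/2\rfloor}\binom{d}{p}n^p$; for $d$ even the second range stops at $\lfloor d/2\rfloor-1$, so the sum is $2\sum_{p=0}^{\lfloor d/2\rfloor-1}\binom{d}{p}n^p+\binom{d}{\lfloor d/2\rfloor}n^{\lfloor d/2\rfloor}$, and these are exactly $M(n+1,d)$. The third statement is the same count with $k=3$, $\bd=(1,1,1)$, $d=3$, but with $\bn=(n_1-1,n_2-1,n_3-1)$ so that $V(\bn,\bd)=\F^{n_1}\otimes\F^{n_2}\otimes\F^{n_3}$: here $\lfloor d/2\rfloor=1=d-(\lfloor d/2\rfloor+1)$, and the monomials with $\smdeg\preceq(1,1,1)$ of total degree $\le 1$ number $1+(n_1-1)+(n_2-1)+(n_3-1)=n_1+n_2+n_3-2$, so $Y_{\bn,\bd}+Z_{\bn,\bd}=2n_1+2n_2+2n_3-4$.

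I expect no real obstacle here: the argument is bookkeeping once Theorem~\ref{theo:improv} is in hand. The two points needing care are keeping the translation between ``$n_i+1$ variables in the $i$-th local factor'' and ``$n_i$ variables in the $i$-th group of $\bz$'' consistent throughout the counts, and using that Theorem~\ref{theo:improv} is genuinely stated for arbitrary $\bn$ (so unequal local dimensions are allowed), which is precisely what makes the $\F^{n_1}\otimes\F^{n_2}\otimes\F^{n_3}$ case fall out. Matching these expressions with the cactus rank bounds of~\cite{BR13}, \cite{G16ma} and~\cite{B18p} is then a direct comparison of the formulas.
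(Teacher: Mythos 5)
Your proposal is correct and is exactly the intended argument: the paper gives no separate proof for this corollary beyond "specializing" Theorem~\ref{theo:improv}, and your bookkeeping of $Y_{\bn,\bd}+Z_{\bn,\bd}$ for the three choices of $(\bn,\bd)$ — Waring ($k=1$), tensor with equal local dimensions ($\bd=(1,\dots,1)$), and the $3$-tensor with unequal local dimensions — matches the stated formulas in each parity case. You also correctly handle the shift between $n_i+1$ variables in $P(n_i+1,d_i)$ and $n_i$ parametrizing variables in $\bz_i$, which is the only subtle point.
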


\section{Elusive functions, and the potential for symbolic methods in lower bounds} \label{sec:elusive}
Our aim in this section is to put forth a symbolic perspective on the notion of {\em elusive functions}, and expose some of the advantages in doing so. Elusive functions were defined by Raz in~\cite{Raz-elusive}, where his main result is that explicit\footnote{This notion is formally defined in the paper, but is essentially the usual notion: there is a polynomial-time algorithm computing the coefficient of each monomial.} elusive functions (for suitable parameters) will imply super-polynomial lower bounds in arithmetic complexity, thus separating VP from VNP.  Let us begin by defining elusive functions.

We say that a polynomial map $M = (M_1,\dots,M_m)$ is of degree $d$, if each $M_i$ is a polynomial function of degree at most $d$ (not necessarily homogeneous). 

\begin{definition} [$(r,d)$-elusive]
We say a polynomial map $L: \F^n \rightarrow \F^m$ is $(r,d)$-elusive if for every polynomial mapping $M:\F^r \rightarrow \F^m$ of degree $d$, $\im(L) \not\subset \im(M)$.  
\end{definition}

The striking feature of this definition in the context of our paper is that it cares about inclusion of images of polynomial maps. This is a ``numeric'' statement. But recall that the {\em hypothesis} of our ``numeric to symbolic'' transfer  (Theorem~\ref{main-ag}) is also a similar ``numeric'' condition on the inclusion of images of polynomial maps. Its conclusion however is ``symbolic'', and so we can potentially use this conclusion to prove elusiveness!

In this section we will actually use only Proposition~\ref{prop:gensymb}, the `first half' of the Theorem~\ref{main-ag} (see discussion in Section~\ref{sec:new ideas}).
Using it, we can give a symbolic point of view of elusiveness (and with it, non-elusiveness). Before doing so, we need a definition.

\begin{definition} [degree $d$-span]
Let $\bz = (z_1,\dots,z_n)$ denote indeterminates. For \\
$p_1(\bz), \dots, p_r(\bz) \in \overline{F(\bz)}$, we define its degree $d$-span 
$$
d\mbox{-}\spa (p_1(\bz),\dots,p_r(\bz)) = \spa_{\F} \left ( p_1(\bz)^{e_1} p_2(\bz)^{e_2} \dots p_r(\bz)^{e_r} \ : \ \sum_i e_i \leq d \right).
$$

In other words, the $\F$-linear span of all the monomials in the $p_i(\bz)$'s of degree at most $d$.
\end{definition}

\begin{lemma} \label{elus-symb}
Let $\bz = (z_1,\dots,z_n)$ denote indeterminates. If the polynomial map $L: \F^n \rightarrow \F^m$ is not $(r,d)$-elusive, then there exist $p_1(\bz),p_2(\bz),\dots,p_r(\bz) \in \overline{\F(\bz)}$ such that for each $i$,  $L_i \in d\mbox{-}\spa (p_1(\bz),\dots,p_r(\bz))$.
\end{lemma}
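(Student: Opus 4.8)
The plan is to deduce Lemma~\ref{elus-symb} almost directly from Proposition~\ref{prop:gensymb}, using the hypothesis that $L$ is \emph{not} $(r,d)$-elusive to produce the required polynomial map $M$. First I would unpack the negation of elusiveness: by Definition, $L$ not being $(r,d)$-elusive means there exists a polynomial map $M\colon \F^r \to \F^m$ of degree $d$ such that $\im(L) \subseteq \im(M)$. This $M$ is exactly the object Proposition~\ref{prop:gensymb} wants as its second map, and the inclusion $\im(L)\subseteq\im(M)$ is precisely its hypothesis.

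Next I would invoke Proposition~\ref{prop:gensymb} with this $L$ and $M$: since $\F$ is algebraically closed (this is a standing assumption; it should perhaps be restated in the lemma), there exist algebraic functions $p_1(\bz),\dots,p_r(\bz) \in \overline{\K} = \overline{\F(z_1,\dots,z_n)}$ with
$$
L(z_1,\dots,z_n) = M(p_1(\bz),\dots,p_r(\bz)).
$$
Reading this coordinate-by-coordinate gives $L_i(\bz) = M_i(p_1(\bz),\dots,p_r(\bz))$ for each $i \in [m]$. Now the point is simply to observe what the right-hand side is: $M_i$ is a polynomial in $r$ variables of degree at most $d$, so $M_i(y_1,\dots,y_r) = \sum_{\be : \deg(\be) \le d} c_{i,\be}\, \by^{\be}$ for scalars $c_{i,\be} \in \F$. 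Substituting $y_j = p_j(\bz)$ yields $L_i(\bz) = \sum_{\be : \deg(\be)\le d} c_{i,\be}\, p_1(\bz)^{e_1}\cdots p_r(\bz)^{e_r}$, which is an $\F$-linear combination of monomials of degree at most $d$ in the $p_j(\bz)$. By Definition of $d$-$\spa$, this says exactly $L_i \in d\mbox{-}\spa(p_1(\bz),\dots,p_r(\bz))$, which is the conclusion.

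I do not anticipate a genuine obstacle here: the lemma is essentially a repackaging of Proposition~\ref{prop:gensymb} together with an unwinding of the definitions of ``not elusive'' and ``degree $d$-span''. The only points requiring a modicum of care are (i) making sure the degree-$d$ hypothesis on $M$ is used to guarantee that only monomials of degree $\le d$ in the $p_j$ appear (so that the combination genuinely lands in $d\mbox{-}\spa$), and (ii) noting that Proposition~\ref{prop:gensymb} outputs algebraic functions in $\overline{\F(\bz)}$, which matches the ambient object in the definition of degree $d$-span. If one wanted the sharper ``symbolic'' statement that is the whole point of Section~\ref{sec:elusive} --- namely using power series rather than algebraic functions --- one would instead invoke the full Theorem~\ref{main-ag} (after a shift $\bz \mapsto \bz + \bc$), but for the stated lemma Proposition~\ref{prop:gensymb} alone suffices.
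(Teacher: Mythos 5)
Your proposal is correct and is essentially identical to the paper's proof: negate elusiveness to obtain $M$ with $\im(L)\subseteq\im(M)$, apply Proposition~\ref{prop:gensymb} to get algebraic functions $p_1(\bz),\dots,p_r(\bz)$ with $L=M(p_1,\dots,p_r)$, and then observe that since each $M_i$ has degree at most $d$, each $L_i$ lies in the degree $d$-span.
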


\begin{proof}
Suppose $L$ is not $(r,d)$ elusive, then $\exists$ degree $d$ polynomial map $M:\F^r \rightarrow \F^m$ such that $\im(L) \subseteq \im(M)$. This means, by Proposition~\ref{prop:gensymb} that there exists $p_1(\bz),p_2(\bz),\dots,p_r(\bz) \in  \overline{\F(\bz)}$ such that $L(z_1,\dots,z_n) = M(p_1(\bz),p_2(\bz),\dots,p_r(\bz))$. Since, each $M_i$ is a degree $d$ polynomial (sum of monomials), we get that 
$$
L_i(\bz) = M_i(p_1(\bz),p_2(\bz),\dots,p_r(\bz)) \in d\mbox{-}\spa (p_1(\bz),\dots,p_r(\bz)).
$$
\end{proof}

\begin{remark}
One can go further and directly apply Theorem~\ref{main-ag} to get a similar looking statement where you replace the algebraic functions $p_i(\bz)$, which live in the algebraic closure  $\overline{\F(\bz)}$, with power series (after a suitable shift), defined over the base field $\F$. This may be even more powerful. 
\end{remark}

The first example of an elusive function is the well known moment curve, provided by Raz~\cite{Raz-elusive} as a motivating example:

\begin{proposition} \label{elus-mom}
The map $L: \F \rightarrow \F^m$ given by $x \mapsto (x,x^2,\dots,x^m)$ is $(m-1,1)$-elusive.
\end{proposition}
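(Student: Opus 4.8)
\textbf{Proof plan for Proposition~\ref{elus-mom}.}
The plan is to show that no degree-$1$ (i.e.\ affine-linear) polynomial map $M:\F^{m-1}\to\F^m$ can have $\im(L)\subseteq\im(M)$, where $L(x)=(x,x^2,\dots,x^m)$. First I would unravel what $\im(M)$ looks like: a degree-$1$ map $M:\F^{m-1}\to\F^m$ is of the form $M(\by)=A\by+\bb$ for some matrix $A\in\M_{m,m-1}$ and $\bb\in\F^m$, so $\im(M)$ is an affine subspace of $\F^m$ of dimension at most $m-1$. Hence it suffices to prove that the image of $L$, the moment curve $\{(x,\dots,x^m):x\in\F\}$, is \emph{not} contained in any affine hyperplane of $\F^m$, which is the classical Vandermonde non-degeneracy fact.

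The key steps, in order, are: (i) Observe that if $\im(L)\subseteq H$ for an affine hyperplane $H=\{\bv:\langle\bc,\bv\rangle=c_0\}$ with $\bc=(c_1,\dots,c_m)\neq\bo$, then the single-variable polynomial $g(x):=c_1x+c_2x^2+\dots+c_mx^m-c_0$ vanishes for all $x\in\F$. (ii) Since $\F$ is algebraically closed, hence infinite, a polynomial vanishing everywhere is the zero polynomial, forcing $c_0=0$ and $c_1=\dots=c_m=0$, contradicting $\bc\neq\bo$. (iii) Combine with the dimension bound: any affine subspace of $\F^m$ of dimension $\leq m-1$ lies in some affine hyperplane, so $\im(M)$ cannot contain $\im(L)$. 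This proves $L$ is $(m-1,1)$-elusive.

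Alternatively, and perhaps more in the spirit of the surrounding section, one could argue via Lemma~\ref{elus-symb} contrapositively: if $L$ were $(m-1,1)$-elusive-failing, there would exist $p_1(z),\dots,p_{m-1}(z)\in\overline{\F(z)}$ with each $L_i(z)=z^i\in 1\text{-}\spa(p_1(z),\dots,p_{m-1}(z))$, i.e.\ $z,z^2,\dots,z^m$ would all lie in the $\F$-span of $1,p_1(z),\dots,p_{m-1}(z)$, an $\F$-vector space of dimension at most $m$. But $1,z,z^2,\dots,z^m$ are $\F$-linearly independent in $\overline{\F(z)}$ (that is $m+1$ elements), so they cannot all lie together with the constant $1$ in an $m$-dimensional space — a contradiction. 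Either route works; I would likely present the first as it is the most self-contained.

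I do not expect a genuine obstacle here: the statement is essentially the observation that the moment curve spans $\F^m$ affinely, packaged in the elusive-function language. The only point requiring a sentence of care is confirming that a degree-$1$ polynomial map indeed has image an affine subspace of dimension $\leq r = m-1$ (using that $M$ has $r$ inputs), so that the Vandermonde argument applies; everything else is routine.
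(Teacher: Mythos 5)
Both of your routes are correct, and you have essentially rediscovered both the classical argument \emph{and} the paper's argument. The paper's actual proof is your ``alternative'' route via Lemma~\ref{elus-symb}: suppose $L$ is not $(m-1,1)$-elusive, obtain $p_1(z),\dots,p_{m-1}(z)\in\overline{\F(z)}$ with each $z^i\in\spa_{\F}(1,p_1,\dots,p_{m-1})$, then derive a contradiction from dimension counting and the linear independence of monomials (the paper phrases it via $\spa(z,\dots,z^m)=\spa(1,p_1,\dots,p_{m-1})$ and $1\notin\spa(z,\dots,z^m)$; your phrasing via $m+1$ independent elements in an $\leq m$-dimensional space is equivalent and a touch cleaner). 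Your preferred first route — $\im(M)$ is an affine subspace of dimension $\leq m-1$, hence contained in an affine hyperplane, and no hyperplane can contain the moment curve by the usual ``polynomial vanishing on infinite $\F$ is zero'' argument — is also correct and entirely self-contained, but it is exactly the ``numeric'' proof the paper explicitly sets aside. The paper remarks that the straightforward proof is via Vandermonde invertibility, and then deliberately presents the symbolic one to illustrate the numeric-to-symbolic transfer in action: the Vandermonde fact gets replaced by the triviality that monomials are $\F$-linearly independent in $\overline{\F(z)}$. So if you are writing this into the paper, you should lead with your second route, not your first — the whole point of the section is that Lemma~\ref{elus-symb} lets you swap a numeric statement for a symbolic one, and the symbolic proof also immediately yields the generalization (Proposition~\ref{elus-indep}) to arbitrary linearly independent polynomial curves, whereas the Vandermonde proof is specific to the moment curve.
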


Using the definition of an elusive function, one can see that the above proposition simply asserts that the moment curve is not contained in any affine hyperplane.  The most straightforward proof of this assertion is based on the invertibility of the Vandermonde matrix, namely the linear independence of any $m$ distinct vectors in the image of the moment curve. However, from the symbolic interpretation, the above lemma essentially becomes a consequence of a linear independence of the $m$ monomials in the description of the moment curve, as we describe below.

\begin{proof} [Proof of Lemma~\ref{elus-mom}]
Suppose $L$ is not $(m-1,1)$-elusive. Then by Lemma~\ref{elus-symb}, we have $p_1(z),\dots,p_{m-1}(z) \in \overline{\F(z)}$ such that 
$$
L_i(z) = z^i \in 1\mbox{-}\spa(p_1(z),\dots,p_{m-1}(z)) = \spa_{\F}(1,p_1(z),\dots,p_{m-1}(z))
$$
for $1 \leq i \leq m$, where $z$ is an indeterminate. But this means that $\spa_{\F}(z,z^2,\dots,z^m)  \subseteq \spa_{\F}(1,p_1(z),\dots,p_{m-1}(z))$. The former is an $m$-dimensional linear space, by linear independence of the $z^i$, and the latter is at most $m$-dimensional (as it is a span of $m$ elements). Hence, $\spa_{\F}(z,z^2,\dots,z^m)  = \spa_{\F}(1,p_1(z),\dots,p_{m-1}(z))$. But $1 \notin \spa_{\F}(z,z^2,\dots,z^m)$, which is a contradiction. Thus, $L$ must be $(m-1,1)$-elusive.
\end{proof}

It is of course not surprising that the linear independence of monomials is very much related to the Vandermonde matrix. The numeric to symbolic transfer simply recasts the numeric "invertibility of Vandermonde matrix" as a symbolic  "linear independence of monomials". While the invertibility of Vandermonde matrix is well known, it is not completely obvious. On the other hand, the linear independence of monomials is completely straightforward from a symbolic perspective. In some sense, we let the (non-trivial!) numeric to symbolic transfer statement do the `heavy-lifting'. Indeed, notice that the exact same proof above actually yields the following much more general proposition (which again, can be obtained ``numerically'', but not with such simplicity).

\begin{proposition} \label{elus-indep}
Any polynomial map $L: \F \rightarrow \F^m$ given by $x \mapsto (p_1(x), p_2(x), \dots ,p_m(x))$, for which the polynomials $\{p_i\} \cup \{1\}$ are linearly independent is $(m-1,1)$-elusive.
\end{proposition}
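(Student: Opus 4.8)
The plan is to mimic the proof of Proposition~\ref{elus-mom} essentially verbatim, since the only property of the moment curve that was actually used there was the linear independence of the coordinate polynomials together with the constant~$1$. First I would suppose, for contradiction, that $L$ is \emph{not} $(m-1,1)$-elusive. By Lemma~\ref{elus-symb} there exist $p_1(z),\dots,p_{m-1}(z) \in \overline{\F(z)}$ such that for each $i$,
$$
p_i(x) \in 1\text{-}\spa(p_1(z),\dots,p_{m-1}(z)) = \spa_{\F}(1,p_1(z),\dots,p_{m-1}(z)),
$$
where the equality holds because $d$-$\spa$ with $d=1$ is just the $\F$-span of $1$ together with the $p_j(z)$. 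Here I am mildly abusing notation by reusing $p_i$ for both the coordinate polynomials of $L$ and the algebraic functions produced by the lemma; in the write-up I would rename the latter to, say, $g_1(z),\dots,g_{m-1}(z)$ to avoid confusion.

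Next I would run the dimension count. The inclusion above gives
$$
\spa_{\F}\big(p_1(x),\dots,p_m(x)\big) \subseteq \spa_{\F}\big(1,g_1(z),\dots,g_{m-1}(z)\big).
$$
By hypothesis the set $\{p_1,\dots,p_m\}\cup\{1\}$ is linearly independent over $\F$, so in particular $\{p_1,\dots,p_m\}$ is linearly independent and the left-hand space has dimension exactly $m$; the right-hand space is spanned by $m$ elements, so has dimension at most $m$. Hence the inclusion is an equality of $m$-dimensional $\F$-vector spaces, and in particular $1 \in \spa_{\F}(p_1(x),\dots,p_m(x))$. But this contradicts the assumed linear independence of $\{p_1,\dots,p_m\}\cup\{1\}$ (which says precisely that $1$ is \emph{not} in the span of the $p_i$). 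Therefore $L$ must be $(m-1,1)$-elusive.

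I do not anticipate a genuine obstacle here: the heavy lifting is entirely in Proposition~\ref{prop:gensymb} (invoked through Lemma~\ref{elus-symb}), which transfers the numeric non-elusiveness hypothesis into a symbolic membership statement in $\overline{\F(z)}$. The only points requiring a touch of care are (i) making sure the $g_j(z)$ are viewed as elements of the $\F$-vector space $\overline{\F(z)}$ so that the dimension count over $\F$ is legitimate — this is fine because $\overline{\F(z)}$ is an $\F$-vector space and $1$-$\spa$ is by definition an $\F$-span — and (ii) noting that the argument uses linear independence of $\{p_i\}\cup\{1\}$ in two slightly different ways: once to pin the dimension of the left side at $m$, and once to derive the contradiction from $1$ lying in that span. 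Both follow immediately from the single hypothesis. This is exactly the same proof as for Lemma~\ref{elus-mom}, with $z,z^2,\dots,z^m$ replaced by $p_1(x),\dots,p_m(x)$, so I would present it compactly and simply remark that the generalization is immediate.
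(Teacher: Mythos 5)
Your proof is correct and follows exactly the route the paper intends; the paper itself states that Proposition~\ref{elus-indep} follows by "the exact same proof" as Proposition~\ref{elus-mom}, and you have carried that out faithfully, with the dimension count and the final contradiction both hinging on the two uses of the linear independence hypothesis that you flag. The only thing to tidy in a write-up is the notational slip between $p_i(x)$ and $p_i(z)$ in the displayed inclusion (both sides should live in $\overline{\F(z)}$, so write $p_i(z)$ throughout), and perhaps one sentence noting that the $\F$-linear independence of $\{p_i\}\cup\{1\}$ as polynomials in $x$ is preserved when they are viewed inside $\overline{\F(z)}$, since $\F[z]\hookrightarrow\overline{\F(z)}$ is an injective $\F$-linear map.
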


Elusive functions for degree $d=1$ cannot yield arithmetic lower bounds. Surprisingly, Raz (\cite{Raz-elusive}) proves that already for degree $d=2$, explicit elusive functions of appropriate parameters can separate VP from VNP! More specifically, he proves


\begin{theorem}\cite{Raz-elusive}
Any explicit polynomial map $L: \F^n \rightarrow \F^m$ (of degree at most ${\rm poly}(n)$) which is $(r,2)$-elusive, with $m \geq n^{\omega(1)}$ and $r \geq m^{0.9}$, implies that {\rm VP} $\neq$ {\rm VNP}.
\end{theorem}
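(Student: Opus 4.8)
The plan is to follow Raz's strategy and argue by contradiction: assume ${\rm VP} = {\rm VNP}$, manufacture from the explicit $(r,2)$-elusive map $L$ an explicit polynomial $P_L$, observe that $P_L$ must then lie in ${\rm VP}$, and show that a polynomial-size circuit for $P_L$ would force $\im(L)$ into the image of a degree-$2$ polynomial map from $\F^{r'}$ with $r' = {\rm poly}(n) < r$, contradicting $(r,2)$-elusiveness.

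Concretely, first I would fix $k = \lceil \log_2 m \rceil$, introduce fresh variables $\bz = (z_1,\dots,z_k)$, choose $m$ distinct multilinear monomials $\bz^{S_1},\dots,\bz^{S_m}$ (indexed by subsets $S_i \subseteq [k]$), and set
$$
P_L(\bx,\bz) \ := \ \sum_{i=1}^m L_i(\bx)\,\bz^{S_i}.
$$
In the relevant regime (where $\deg L \le {\rm poly}(n)$, hence $m$ is at most singly exponential and $k = {\rm poly}(n)$) this is a legitimate polynomial family — polynomially many variables, polynomial degree. Since $L$ is explicit, so are the coefficients of $P_L$, so by Valiant's criterion $P_L \in {\rm VNP}$; under the assumption ${\rm VP} = {\rm VNP}$ there is then an arithmetic circuit of size $s = {\rm poly}(n)$ computing $P_L$. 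For each $\balpha \in \F^n$, specializing $\bx \mapsto \balpha$ yields a size-$s$ circuit over $\F$ computing $\sum_i L_i(\balpha)\,\bz^{S_i}$, a polynomial in the $k$ variables $\bz$ whose coefficient vector with respect to the chosen monomials is exactly $L(\balpha) = (L_1(\balpha),\dots,L_m(\balpha))$. Hence $\im(L)$ lies inside the set of ($m$-dimensional) coefficient vectors of $k$-variate polynomials computable by size-$s$ circuits.

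The technical heart is a ``universal circuit of degree two'' statement, which I would take from Raz: this set of coefficient vectors is contained in $\im(M)$ for a single polynomial map $M: \F^{r'} \to \F^m$ of degree exactly $2$, with $r'$ polynomial in $s$ and $k$ (equivalently, in $s$ and $\log m$). The natural route to it is to homogenize the circuit so intermediate degrees stay controlled, attach to each gate a vector recording its relevant coefficients, encode addition gates by linear relations and multiplication gates by the bilinear convolution of coefficient vectors, and then invoke Raz's reduction that collapses the resulting bounded-degree parametrization to a quadratic one at the cost of polynomially more auxiliary parameters. Granting this, $\im(L) \subseteq \im(M)$ with $r' = {\rm poly}(n)$, and since $r \ge m^{0.9}$ with $m$ super-polynomial in $n$ we have $r' < r$; padding $M$ with dummy inputs produces a degree-$2$ map $\F^r \to \F^m$ with the same image, contradicting $(r,2)$-elusiveness. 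Therefore ${\rm VP} \neq {\rm VNP}$.

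The step I expect to be the genuine obstacle is precisely the degree-two universal circuit lemma: a naive gate-by-gate parametrization has degree that blows up with the multiplicative depth of the circuit, and also risks a super-polynomial parameter count, so bringing everything down to degree $2$ with polynomially many parameters is the delicate part of Raz's argument. By comparison, the encoding $P_L$, membership in ${\rm VNP}$, the specialization step, and the parameter arithmetic are all routine.
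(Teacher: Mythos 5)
Since the paper only cites this theorem from Raz and gives no proof of its own, what I can compare your reconstruction against is the internal evidence of the theorem statement itself, and I think that evidence points to a genuine gap in your key lemma. Your skeleton (encode $L$ as coefficients of $P_L$, invoke ${\rm VP} = {\rm VNP}$ to get a small circuit, pass to a degree-$2$ universal map, contradict elusiveness) is the right shape. But you claim the universal degree-$2$ map has $r' = {\rm poly}(s,k) = {\rm poly}(n)$ inputs. If that were true, the theorem would follow from $(r,2)$-elusiveness with $r \geq {\rm poly}(n)$, which is a far weaker demand than $r \geq m^{0.9}$. The fact that Raz needs $r \geq m^{0.9}$ strongly suggests his universal map has on the order of $m^{c}$ inputs for some $c$ close to (but below) $0.9$, not polynomially few. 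Concretely, the degree-$2$ structure is naturally obtained from the standard structural lemma: any size-$s$ circuit for a degree-$d$, $k$-variate polynomial $f$ yields $f = \sum_{i \leq s} g_i h_i$ with $\deg g_i, \deg h_i \in [d/3, 2d/3]$, so the coefficient vector of $f$ is bilinear in the coefficient vectors of the $g_i, h_i$. The number of inputs to the resulting map is $s$ times the dimension of the coefficient space of one factor. With your encoding (all multilinear monomials in $k = \lceil\log_2 m\rceil$ variables, so degrees run up to $k$), a factor of degree $\leq 2k/3$ has coefficient dimension $\sum_{j \leq 2k/3}\binom{k}{j} > m/2$, so the universal map needs $\Theta(sm)$ inputs — larger than $r$, and the contradiction evaporates. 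To make the numbers work one needs a much sparser encoding (degree far smaller than the number of $\bz$-variables) so that the factors' coefficient space has dimension $m^{c}$ with $c < 0.9$; this part is not ``routine parameter arithmetic'' as you suggest but an essential part of the design.

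Separately, the step you describe as ``invoke Raz's reduction that collapses the resulting bounded-degree parametrization to a quadratic one at the cost of polynomially more auxiliary parameters'' does not exist in the generality you need. There is no general way to replace a degree-$d$ polynomial map by a degree-$2$ polynomial map with the same (or larger) image using only polynomially more inputs; the usual auxiliary-variable trick for products gives at least a $q^{\Theta(\log d)}$ blowup, which is superpolynomial in the relevant regime. The degree $2$ is not obtained by lowering the degree of a naive gate-by-gate parametrization of depth-$\Theta(\log s)$ — that parametrization has degree exponential in depth and cannot be tamed this way — but directly, in one shot, from the $\sum g_i h_i$ decomposition above. You correctly identify the universal degree-$2$ lemma as the obstacle, but the route you sketch to it would not go through.
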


This beautiful avenue to proving superpolynomial lower bounds is a great challenge to our techniques, and no progress we know of was made since that paper came out. Here we will attempt to handle a very toy version of it using our numeric to symbolic transfer. While a toy, unlike the moment curve above, we don't know of a way to probe that toy result ``numerically''.

Indeed, one virtue of the symbolic perspective is that it provides several relaxations of the notion of elusiveness. Establishing elusiveness of a function is really hard (not surprisingly), and these relaxations provide intermediate problems that could aid our understanding. 

The map we consider is again a curve, $L:\F \rightarrow \F^{m+1}$ given by $x \mapsto (x,x^3,x^9,\dots,x^{3^m})$,  of monomials with exponentially growing degrees. It is a toy, namely very restricted example in two essential ways. First, as it happens, to match it with Ran's parameters, to prove a lower bound using a curve one would need the monomials degrees to grow much slower.\footnote{In that case one could use extra variables, and encode the curve $L$ as a polynomial map $L': \F^n \rightarrow \F^m$ satisfying the condition $m \geq n^{\omega(1)}$.} Second, we will not be able to rule out any map $M$ as in the definition of elusiveness, but only ones defined by monomials. In this simple case we can actually get $r=m-1$, as for the moment curve. We do not know how to extend it to arbitrary polynomials, let alone algebraic functions. Indeed, extending this result even to ``monomials'' with negative exponents, seems like a challenging problem.





\begin{proposition}
$L:\F \rightarrow \F^{m+1}$ that maps $x \mapsto (x,x^3,x^9,\dots,x^{3^m})$. Let $z$ be an indeterminate. Then for any choice of monomials $z^{e_1},z^{e_2},\dots,z^{e_{m-1}}$ with $e_i \in \Q_{\geq 0}$, there is some $i$ such that $L_i(z) \notin 2\mbox{-}\spa(z^{e_1},z^{e_2},\dots,z^{e_{m-1}})$.
\end{proposition}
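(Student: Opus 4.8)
The strategy is to mimic exactly the proof of Proposition~\ref{elus-mom} (the moment curve), but now working in the multiplicative (exponent) group $\Q_{\geq 0}$ rather than in an additive $\F$-vector space. Suppose for contradiction that there exist monomials $z^{e_1}, \dots, z^{e_{m-1}}$ with $e_i \in \Q_{\geq 0}$ such that $L_i(z) = z^{3^{i-1}} \in 2\mbox{-}\spa(z^{e_1}, \dots, z^{e_{m-1}})$ for every $i = 1, \dots, m+1$. By definition, $2\mbox{-}\spa(z^{e_1}, \dots, z^{e_{m-1}})$ is the $\F$-span of all monomials $z^{e}$ with $e$ ranging over the set $E := \{\sum_i a_i e_i : a_i \in \N, \sum_i a_i \leq 2\}$. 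The first key observation is that distinct powers $z^{e}$, $z^{e'}$ (with $e \neq e' \in \Q_{\geq 0}$) are $\F$-linearly independent in $\overline{\F(z)}$; hence a single monomial $z^{3^{i-1}}$ lies in this span if and only if the exponent $3^{i-1}$ itself belongs to $E$. So the hypothesis forces
$$
\{1, 3, 9, \dots, 3^{m}\} \subseteq E = \left\{ a_i e_i + a_j e_j : i,j \in [m-1],\ a_i + a_j \leq 2,\ a_i,a_j \in \N \right\}.
$$

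Now I would count. The set $E$ consists of: the element $0$; the $m-1$ elements $e_i$; the $m-1$ elements $2e_i$; and the $\binom{m-1}{2}$ elements $e_i + e_j$ for $i < j$. That is a lot of elements, so a naive cardinality count does not immediately give a contradiction --- this is where the real work lies, and it is the main obstacle. The point to exploit is the extreme multiplicative spreading of the target exponents $3^0, 3^1, \dots, 3^m$: each is more than twice the sum of all smaller ones (indeed $3^k > 2(3^{k-1} + \dots + 1) = 3^k - 1$, and more to the point $3^k > 2 \cdot 3^{k-1}$). The plan is to argue that each target value $3^{i-1}$ must be "produced" in $E$ in a way that pins down at least one of the $e_\ell$'s, and that no single $e_\ell$ can be responsible for two different targets. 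Concretely: if $3^{i-1} = a_i e_i + a_j e_j$, then whichever $e_\ell$ occurs with the larger coefficient satisfies $e_\ell \geq 3^{i-1}/2$. Combined with $e_\ell \leq$ (some target it helps produce, hence $\leq 3^m$), and the fact that if the same $e_\ell$ with $e_\ell \geq 3^{i-1}/2$ is used (with coefficient $\geq 1$) to produce a smaller target $3^{j-1}$ with $j < i$, we would need $a_\ell e_\ell \leq 3^{j-1} \leq 3^{i-2} < 3^{i-1}/2 \leq e_\ell$, which is impossible for $a_\ell \geq 1$. Thus each $e_\ell$ "serves" at most one target in the sense of being the dominant summand. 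Since there are $m+1$ targets but only $m-1$ values $e_\ell$ (and the representation $a_i e_i + a_j e_j$ has at most two summands, one of which is dominant), a pigeonhole argument on which $e_\ell$ is dominant for which target yields the contradiction --- roughly, the $m+1$ targets each need a dominant $e_\ell$, two targets cannot share one, so $m+1 \leq m-1$, absurd.

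I would organize the final writeup in three steps: (1) reduce to the combinatorial inclusion $\{3^0, \dots, 3^m\} \subseteq E$ using linear independence of distinct monomials in $\overline{\F(z)}$; (2) prove the "dominant summand" lemma: in any representation $3^{i-1} = a_i e_i + a_j e_j$ with $a_i + a_j \leq 2$, the summand with larger coefficient (and positive coefficient) has value $\geq 3^{i-1}/2$, hence its $e$-index is determined up to the at most two choices, and the same index cannot be dominant for two distinct targets because of the $3^{i-1}/2 > 3^{i-2}$ gap; (3) conclude by pigeonhole that we would need at least $m+1$ distinct indices among $\{1, \dots, m-1\}$. The subtle point to be careful about in step (2) is the boundary cases $a_i = 2$ (one summand only, so $e_i = 3^{i-1}/2$, clearly dominant) and $a_i = a_j = 1$ (then $\max(e_i, e_j) \geq 3^{i-1}/2$); also one must handle possible ties $e_i = e_j$, but a tie only helps (both are $\geq 3^{i-1}/4$... actually $\geq 3^{i-1}/2$ when equal) and the argument goes through unchanged. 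The main obstacle, as flagged, is making the pigeonhole rigorous given that a target can be a sum of \emph{two} $e$'s --- the resolution is that only the dominant one matters for the counting, and the geometric growth of $3^k$ prevents any $e_\ell$ from being dominant twice.
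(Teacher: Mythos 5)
Your proposal is correct, and it takes a genuinely different route from the paper. After the (shared) reduction to the combinatorial containment $\{3^0, 3^1, \dots, 3^m\} \subseteq W := \{0\} \cup \{e_i\} \cup \{e_i + e_j\}$ via linear independence of distinct monomials, the paper sorts the exponents $e_1 \leq \dots \leq e_{m-1}$ and proves by induction that $e_i \leq 3^{i-1}$ (otherwise no element of $W$ could equal $3^{i-1}$), then observes that $\max W \leq 2e_{m-1} \leq 2\cdot 3^{m-2} < 3^m$, so the largest target is unreachable. Your argument instead sets up an injection: each target $3^{i}$ forces some $e_\ell$ with $e_\ell \geq 3^{i}/2$ appearing in its representation with positive coefficient, and the gap $3^{i}/2 > 3^{i-1}$ makes it impossible for the same $\ell$ to serve two targets, so $m+1 \leq m-1$, contradiction. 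Both hinge on precisely the same inequality $3^k > 2\cdot 3^{k-1}$, but yours is a pigeonhole rather than an induction, which is a cleaner conceptual packaging of what the growth rate is buying you. One small imprecision in your plan: you select the ``dominant summand'' as the one with ``the larger coefficient,'' but when $a_i = a_j = 1$ there is no larger coefficient; what you want is the $e_\ell$ (with positive coefficient) of the larger \emph{value}, which satisfies $e_\ell \geq 3^i/2$ because $a_i e_i + a_j e_j \leq (a_i + a_j)\max(e_i,e_j) \leq 2\max(e_i,e_j)$. You flag this and it does not affect the validity of the argument, but the definition should be corrected in a final writeup.
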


\begin{proof}
 Let $z$ be an indeterminate. Suppose $\exists$ monomials $z^{e_1},z^{e_2},\dots,z^{e_{m-1}}$ (with $e_i \in \Q_{\geq 0}$) such that for each $i$, $L_i(z) = z^{3^i} \in 2\mbox{-}\spa(z^{e_1},z^{e_2},\dots,z^{e_{m-1}})$. The only monomials in the $2\mbox{-}\spa(z^{e_1},z^{e_2},\dots,z^{e_{m-1}})$ are of the form $1$ or $z^{e_i}$ or $z^{e_i} z^{e_j} = z^{e_i + e_j}$. Thus, we must have 
 $$
 \{1,3,3^2,\dots,3^m\} \subseteq W:= \{0,e_i, e_i+e_j\ |\ 1 \leq i,j \leq m\}.
 $$

The proof will be by induction on $m$. Without loss of generality, we can assume that the $e_i$'s are in increasing order. Suppose $e_1 > 1$, then every $e_i$ and $e_i + e_j$ are all greater than $1$. But then $1 \notin W$, which is a contradiction. Thus, we must have $e_1 \leq 1$. 

Next, suppose $e_2 > 3$, then the only elements in $W$ that are $\leq 3$ are $0,e_1,2e_1$. But since $e_1 \leq 1$, we have that $0,e_1,2e_1 < 3$. This is a contradiction, so we must have $e_2 \leq 3$. Continuing by induction, we must have $e_i \leq 3^{i-1}$ for all $1 \leq i \leq m$. But now, the largest number in $W$ is $2e_{m} \leq 2 \cdot 3^{m-1}$, which is smaller than $3^m$. This means $3^m \notin W$, which is a contradiction.
\end{proof}

One sees immediately the (symbolic) notion of degree that is crucially used in proving this result. Although, we only illustrated its use in a toy case, the notion of degree could still be important (along with other ideas) in studying elusiveness. There needs to be more work done to understand what features (such as degree) the symbolic view point offers, and what these features are worth in our understanding of elusive functions.


To summarize, proving lower bounds via elusive functions is an intriguing strategy, and the difficulties and possibilities of this approach need to be explored. The symbolic view point (that results from applying numeric to symbolic transfer) gives a fresh perspective. We find that this approach needs further analysis, and could lead to new and exciting results.

\subsection*{Acknowledgements}
We would like to thank Rankeya Datta, Christian Ikenmeyer, Camillo De Lellis, Daniel Litt, Neeraj Kayal, Nitin Saxena and Akash Sengupta for helpful discussions.

\appendix
\section{Equations for border rank and field extensions}

In this section, it will suffice to assume that $\F$ is an infinite field. One feature of the notion of tensor rank is that the canonical parametrization of simples is essentially 
independent of the field. A more formal way of phrasing this is to say that the canonical parametrization is 
compatible with base change. Indeed, we have the parametrization 
$\psi_{\F}: (\F^n)^{\times d} \rightarrow \Ten_{\F}(n,d)$. Observe that for any extension field 
$\K$ of $\F$, we have $\psi_{\F} \otimes_{\F} \K = \psi_{\K}$. This is what we mean by compatibility with base 
change. The compatibility of this parametrization with base change is the core reason for the fact that 
equations for border rank over $\K$ are actually defined over $\F$.

Let $I_{\F,r} \subseteq \F[\Ten_{\F}(n,d)]$ denote the ideal of polynomials that vanish on 
$\Ten_{\F}(n,d)_{\leq r}$. Note that the zero set of $I_{\F,r}$ is the set of all tensors of border rank 
$\leq r$, which we will denote by $\overline{\Ten_{\F}(n,d)_{\leq r}}$. 

\begin{proposition} \label{field.ind}
Suppose $\K$ is an extension field of $\F$. Then $I_{\F,r} \otimes_{\F} \K = I_{\K,r}$. In particular, 
generators for the ideal $I_{\F,r}$ are also generators for the ideal $I_{\K,r}$. In other words, we
have that the equations for border rank over $\Ten_{\K}(n,d)$ are defined by polynomials over
$\F[\Ten_{\F}(n,d)]$.
\end{proposition}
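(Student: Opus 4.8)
The plan is to present both $I_{\F,r}$ and $I_{\K,r}$ as kernels of one and the same polynomial pullback map --- the one defined over $\F$, and its flat base change to $\K$ --- and then let exactness of $-\otimes_\F\K$ do the work. This is the precise payoff of the ``compatibility with base change'' of the canonical parametrization.

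Set $N = ndr$ and let $\bu$ be the $N$ indeterminates recording $r$ tuples $(\bv^{(\ell)}_1,\dots,\bv^{(\ell)}_d)$, $\ell = 1,\dots,r$, of vectors in $n$-space. Let $\Psi\colon \A^N \to \Ten(n,d) \cong \A^{n^d}$ be the map $\bu \mapsto \sum_{\ell=1}^r \bv^{(\ell)}_1\otimes\cdots\otimes\bv^{(\ell)}_d$. Its coordinate functions are polynomials with coefficients in $\{0,1\}$, hence independent of the ground field, so $\Psi_\K$ is literally $\Psi_\F$ viewed over $\K$, i.e.\ $\Psi_\K = \Psi_\F\otimes_\F\K$, and $\im\Psi_\F = \Ten_\F(n,d)_{\le r}$ while $\im\Psi_\K = \Ten_\K(n,d)_{\le r}$. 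Let $\rho_\F\colon \F[\Ten_\F(n,d)]\to\F[\bu]$, $f\mapsto f\circ\Psi_\F$, and define $\rho_\K$ analogously over $\K$. First I would check that $I_{\F,r} = \ker\rho_\F$: if $f$ vanishes on $\Ten_\F(n,d)_{\le r} = \im\Psi_\F$, then $f\circ\Psi_\F$ vanishes at every point of $\F^N$, and since $\F$ is infinite this forces $f\circ\Psi_\F = 0$; the converse is immediate. The identical argument --- valid since $\K\supseteq\F$ is again infinite --- gives $I_{\K,r} = \ker\rho_\K$.

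Next, since a general element of $\K[\Ten_\K(n,d)] = \F[\Ten_\F(n,d)]\otimes_\F\K$ is a $\K$-linear combination of $\F$-monomials and $\Psi_\K = \Psi_\F$ as tuples of polynomials, one gets $\rho_\K = \rho_\F\otimes_\F 1_\K$ under the canonical identifications. Applying the exact functor $-\otimes_\F\K$ (exact because $\K$ is a free, hence flat, $\F$-module) to the left-exact sequence
$$ 0 \longrightarrow I_{\F,r} \longrightarrow \F[\Ten_\F(n,d)] \xrightarrow{\ \rho_\F\ } \F[\bu] $$
yields $I_{\F,r}\otimes_\F\K = \ker(\rho_\F\otimes_\F 1_\K) = \ker\rho_\K = I_{\K,r}$; along the way flatness also identifies $I_{\F,r}\otimes_\F\K$ with the ideal it generates inside $\K[\Ten_\K(n,d)]$, so the notation of the statement is consistent. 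The ``in particular'' is then formal: a generating set of $I_{\F,r}$ over $\F[\Ten_\F(n,d)]$ extends to a generating set of the extended ideal $I_{\F,r}\otimes_\F\K = I_{\K,r}$ over $\K[\Ten_\K(n,d)]$.

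I do not expect a real obstacle here. The only points needing care are (i) the identification of the vanishing ideal with $\ker\rho$ --- which is exactly where infinitude of $\F$ and $\K$ is used --- and (ii) the bookkeeping that $-\otimes_\F\K$ carries the $\F$-defined pullback to the $\K$-defined one. One could instead argue scheme-theoretically (the scheme-theoretic image of the $\F$-morphism $\Psi$ is a closed subscheme of $\A^{n^d}_\F$ defined over $\F$ and stable under flat base change), but the explicit kernel computation is shorter and sidesteps closure subtleties when $\F$ is not algebraically closed.
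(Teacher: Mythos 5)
Your proposal is correct and is essentially the paper's own argument: both identify $I_{\F,r}$ as the kernel of the pullback $\rho_\F\colon \F[\Ten_\F(n,d)]\to\F[\bu]$, $f\mapsto f\circ\Psi$ (using infinitude of $\F$ to pass between vanishing on the image and vanishing of the composite polynomial), do the same over $\K$, and then invoke compatibility of kernels with base change. You phrase the last step explicitly as exactness of $-\otimes_\F\K$ by flatness, where the paper states it as the fact that kernels of $\F$-linear maps commute with field extension; these are the same observation.
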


Before we prove the proposition, let us observe the following consequence, which is required in 
Section~\ref{sec:border-rank}. Let $\bz = (z_1,\dots,z_m)$ denote a vector of indeterminates, and let 
$\F(\bz) = \F(z_1,\dots,z_m)$ denote the function field in $m$ variables. We call 
$L(\bz) \in \Ten_{\F(\bz)}(n,d)$ a polynomial tensor, if each of the $n^d$ entries consist of polynomials. 
In more precise terms, $L(\bz) \in \Ten_{\F[\bz]}(n,d)$, where $\F[\bz] = \F[z_1,\dots,z_m]$ denotes 
the polynomial ring.

\begin{corollary} \label{eqns.extn.field}
Let $L(\bz) \in \Ten_{\F(\bz)}(n,d)$ be a polynomial tensor. Suppose $\brk_{\F}(L(\bbeta)) \leq r$ for all 
$\bbeta \in \F^m$. Then $\brk_{\F(\bz)}(L(\bz)) \leq r$, and further 
$\brk_{\overline{\F(\bz)}} (L(\bz)) \leq r$. 
\end{corollary}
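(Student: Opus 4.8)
The plan is to deduce Corollary~\ref{eqns.extn.field} from Proposition~\ref{field.ind} together with an elementary Zariski-density argument. First I would unwind the definitions: for any field $\K \supseteq \F$ and any $T \in \Ten_\K(n,d)$, the condition $\brk_\K(T) \leq r$ means $T \in \overline{\Ten_\K(n,d)_{\leq r}}$, which by the definition of $I_{\K,r}$ is exactly the condition that $T$ lies in the zero set $\mathbb{V}(I_{\K,r})$. Fix a finite generating set $g_1,\dots,g_t$ of the ideal $I_{\F,r} \subseteq \F[\Ten_\F(n,d)]$ (finite by Hilbert's basis theorem). By Proposition~\ref{field.ind}, the images of $g_1,\dots,g_t$ generate $I_{\K,r}$ for $\K = \F(\bz)$ and for $\K = \overline{\F(\bz)}$ as well. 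Hence, to conclude $\brk_{\F(\bz)}(L(\bz)) \leq r$ and $\brk_{\overline{\F(\bz)}}(L(\bz)) \leq r$, it suffices to show $g_j(L(\bz)) = 0$ for each $j$, where the equality is taken in $\F[\bz] = \F[z_1,\dots,z_m]$ (it then holds a fortiori in the larger rings $\F(\bz)$ and $\overline{\F(\bz)}$).

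Next I would carry out the density argument. Since $L(\bz) \in \Ten_{\F[\bz]}(n,d)$, every one of its $n^d$ entries lies in $\F[\bz]$, so $g_j(L(\bz))$, being a polynomial expression in these entries with coefficients in $\F$, is again an element of $\F[z_1,\dots,z_m]$. The hypothesis states that $\brk_\F(L(\bbeta)) \leq r$ for every $\bbeta \in \F^m$, i.e. $L(\bbeta) \in \mathbb{V}(I_{\F,r})$ for every such $\bbeta$; in particular $g_j(L(\bbeta)) = 0$ for all $\bbeta \in \F^m$. As $\F$ is infinite, a polynomial in $\F[\bz]$ vanishing at every point of $\F^m$ is identically zero, so $g_j(L(\bz)) = 0$ in $\F[\bz]$, as required. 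Tracing back, $L(\bz)$ lies in $\mathbb{V}(I_{\F(\bz),r})$ and in $\mathbb{V}(I_{\overline{\F(\bz)},r})$, which is the desired conclusion.

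The substantive content is entirely in Proposition~\ref{field.ind} — that extending scalars does not change the generators of the ideal cutting out the border-rank-$\leq r$ locus — which is itself a consequence of the compatibility $\psi_\F \otimes_\F \K = \psi_\K$ of the canonical parametrization of rank-one tensors with base change. Granting that proposition, the remaining step above is routine; the only hypothesis it consumes beyond Proposition~\ref{field.ind} is that $\F$ is infinite (used to upgrade "vanishing on all $\F$-points" to "vanishing identically"), which is exactly the standing assumption of this appendix section. I therefore do not expect any real obstacle in this deduction: the work has been pushed into Proposition~\ref{field.ind}.
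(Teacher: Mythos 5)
Your proof is correct and follows the same route as the paper: pick finite generators of $I_{\F,r}$, invoke Proposition~\ref{field.ind} to see they also generate the ideal over the extension field, observe that each generator evaluated at $L(\bz)$ is a polynomial in $\F[\bz]$ vanishing on all of $\F^m$, and conclude it is identically zero since $\F$ is infinite. The only cosmetic difference is that you apply Proposition~\ref{field.ind} directly to both $\F(\bz)$ and $\overline{\F(\bz)}$, whereas the paper handles $\F(\bz)$ and then notes that border rank can only decrease under further field extension; both are fine.
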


\begin{proof}
Border rank can at best drop when we consider the same tensor over an extension field. So, it suffices to 
prove that $\brk_{\F(\bz)} (L(\bz)) \leq r$. Let $(f_1,\dots,f_t)$ denote the generators for the ideal 
$I_{\F,r}$. Then by Proposition~\ref{field.ind}, these are also generators for $I_{\F(\bz),r}$. Hence, 
it suffices to show that $f_i(L(\bz))= 0$ for all $i$. Observe that $f_i(L(\bz))$ is a polynomial in 
$\bz$ with coefficients in $\F$, let us call this polynomial $p_i(\bz)$. Now, we know that 
$\brk_\F(L(\bbeta)) \leq r$ for all $\bbeta \in \F^m$. This means that $f_i(L(\bbeta)) = p_i(\bbeta) = 0$ 
for all $\bbeta \in \F^m$. Since $\F$ is an infinite field, it implies that $p_i(\bz)$ is the identically 
zero polynomial, i.e., we have shown $f_i(L(\bz)) = 0$ as required.
\end{proof}

We will now give a proof of Proposition~\ref{field.ind}. 
We thank Christian Ikenmeyer for telling us this proof.

\begin{proof}[Proof of Proposition~\ref{field.ind}]
If we take a matrix $M$ with entries over $\F$, we can also interpret it as a matrix over $\K$. The kernel 
of this matrix is compatible with base change from $\F$ to $\K$. In the language of linear maps, this 
means that if we have a linear map $M: V \rightarrow W$ of $\F$-vector spaces, then 
${\rm Ker}(M) \otimes_{\F} \K = {\rm Ker}(M \otimes_{\F} \K)$. Hence, to see that the ideal of 
polynomials that vanishes on all tensors of tensor rank $\leq r$ is compatible with base change, 
we will describe this as the kernel of a linear map. This linear map will be defined over $\F$, so by 
the above observation, we get the required result.

Recall that $(\F^n)^{\times d} = \F^{nd}$ parametrizes rank $1$ tensors. This gives an obvious 
parametrization of rank $\leq r$ tensors, which we denote by $\psi: \F^{ndr} \rightarrow \Ten_{\F}(n,d)$. 
Let $\bz = (z_1,\dots,z_{ndr})$ denote indeterminates. Consider the linear map 
$M: \F[\Ten_{\F}(n,d)] \rightarrow \F[\bz]$ given by $p \mapsto p(\psi(\bz))$. If $p(\psi(\bz)) = 0$, then $p(\psi(\bbeta))= 0$ for all $\bbeta \in \F^{ndr}$. Since $\psi$ parametrizes $\Ten_{\F}(n,d)_{\leq r}$, this means that $p$ vanishes on $\Ten_{\F}(n,d)_{\leq r}$. Conversely, if $p$ vanishes on $\Ten_{\F}(n,d)_{\leq r}$, then $p(\psi(\bbeta))= 0$ for all $\bbeta \in \F^{ndr}$. Since $\F$ is infinite, this means that $p(\psi(\bz)) = 0$. To summarize, 
$I_{\F,r}$ is the kernel of this linear map $M$.

Note that we consider this as linear map between $\F$ 
vector spaces. We note that the vector spaces are infinite dimensional, but this doesn't become an issue. By a similar argument, $I_{\K,r}$ is the kernel of  $M \otimes_{\F} \K$. Hence, by the above discussion on base change compatibility of kernels of linear maps, we have 
$I_{\K,r} = I_{\F,r} \otimes_{\F} \K$ as required.
\end{proof}

\bibliographystyle{alpha}
\bibliography{refs}

%
%
%
%
%
%
%
%
%
%
%
%

\end{document}